
\documentclass[
1p,
]{elsarticle}

\usepackage{verbatim}
\usepackage{float}
\usepackage{pgfplots}
	\pgfplotsset{compat=1.12}
\usepackage{tikz}
 	\usetikzlibrary{decorations.markings}
 	\usetikzlibrary{calc}
\usepackage{amsmath,amsthm,amssymb}

\usepackage{lmodern} 
\usepackage{silence}
	\WarningFilter*{latex}{Marginpar on page \thepage\space moved}

\usepackage[bookmarksdepth=3]{hyperref}
	\hypersetup{pdfauthor=author}
\usepackage[capitalize]{cleveref}
	\crefname{equation}{}{}
	\crefformat{section}{\textsection#2#1#3}
	\crefformat{appendix}{\textsection#2#1#3}

\theoremstyle{plain}
	\newtheorem{mtheorem}{Theorem}
	
	\newtheorem{theorem}{Theorem}[section]
	\newtheorem{lemma}[theorem]{Lemma}
	\newtheorem{proposition}[theorem]{Proposition}
	\newtheorem{corollary}[theorem]{Corollary}

\theoremstyle{definition}

	\newtheorem{conjecture}[theorem]{Conjecture}
	\newtheorem{remark}[theorem]{Remark}
	\newtheorem{example}[theorem]{Example}

\newcommand{\N}{\mathbb{N}}
\newcommand{\Z}{\mathbb{Z}}
\newcommand{\R}{\mathbb{R}}
\newcommand{\C}{\mathbb{C}}
\newcommand{\T}{\mathbb{T}}

\newcommand{\rL}{\mathrm{L}}
\newcommand{\rR}{\mathrm{R}}

\newcommand{\cF}{\mathcal{F}}

\newcommand{\cH}{\mathcal{H}}

\DeclareMathOperator{\tr}{tr}

\newcommand{\sgn}{\mathrm{sgn}\,}

\newcommand{\wn}{\mathrm{wn}}
\renewcommand{\Re}{\mathrm{Re}\,}

\newcommand{\ind}{\mathrm{ind}\,}

\newcommand{\ess}{\sigma_{\mathrm{ess}}}

\newcommand{\Usuz}{U_{\textnormal{suz}}}

\newcommand{\textbi}[1]{\textit{\textbf{#1}}}
\usepackage{ytableau}
\newcommand{\bigidentity}{{\scalebox{3}{$1$}}}

\usepackage[table]{colortbl}
\newcommand{\bigdot}{\boldsymbol{\cdot}}
\usepackage{mathtools}
\newcommand{\svdots}{\raisebox{3pt}{$\scalebox{.75}{\vdots}$}}  
\newcommand{\sddots}{\raisebox{3pt}{$\scalebox{.75}{$\ddots$}$}}
\newcommand{\sdots}{\raisebox{3pt}{$\scalebox{.75}{$\dots$}$}}





\begin{document}

\begin{frontmatter}

\title
{
Index Theorems for One-dimensional Chirally Symmetric Quantum Walks with Asymptotically Periodic Parameters
}

\author[Shinshu1]{Yasumichi Matsuzawa}
	\ead{myasu@shinshu-u.ac.jp}
\author[Shinshu2]{Yohei Tanaka\corref{corresponding}}
	\ead{20hs602a@shinshu-u.ac.jp}
\author[Hachinohe]{Kazuyuki Wada}
	\ead{wada-g@hachinohe.kosen-ac.jp}

\cortext[corresponding]{Corresponding author}

\address[Shinshu1]{Department of Mathematics, Faculty of Education, Shinshu University, 6-Ro, Nishi-nagano,
Nagano 380-8544, Japan}
\address[Shinshu2]{Department of Science and Technology, Graduate School of Medicine, Science and Technology, Shinshu University, 4-17-1 Wakasato, Nagano, 380-8553, Japan}
\address[Hachinohe]{Department of General Science and Education, National Institute of Technology, Hachinohe College, Hachinohe 039-1192, Japan.}

\begin{abstract}
We focus on index theory for chirally symmetric discrete-time quantum walks on the one-dimensional integer lattice. Such a discrete-time quantum walk model can be characterised as a pair of a unitary self-adjoint operator $\varGamma$ and a unitary time-evolution operator $U,$ satisfying the chiral symmetry condition $U^* = \varGamma U \varGamma.$ The significance of this index theory lies in the fact that the index we assign to the pair $(\varGamma,U)$ gives a lower bound for the number of symmetry protected edge-states associated with the time-evolution $U.$  The symmetry protection of edge-states is one of the important features of the bulk-edge correspondence. The purpose of the present paper is to revisit the well-known bulk-edge correspondence for the split-step quantum walk on the one-dimensional integer lattice. The existing mathematics literature makes use of a fundamental assumption, known as the $2$-phase condition, but we completely replace it by the so-called asymptotically periodic assumption in this article. This generalisation heavily relies on analysis of some topological invariants associated with Toeplitz operators. 
\end{abstract}

\begin{keyword}
Strictly local operator \sep Symmetry protection \sep Topological invariants \sep Index theory \sep Split-step quantum walk
\end{keyword}
\end{frontmatter}


\section{Introduction}
The major mathematical theme of the present article can be broadly described as index theory for unitary operators. More specifically, we focus on an abstract unitary operator $U$ on a Hilbert space $\cH,$ satisfying the following \textbi{chiral symmetry condition}; 
\begin{equation}
\label{equation: chiral symmetry}
U^* = \varGamma U \varGamma,
\end{equation}
where $\varGamma$ is a fixed unitary self-adjoint operator on $\cH.$ 
With the canonical decomposition $\cH = \ker(\varGamma - 1) \oplus \ker(\varGamma + 1)$ of the underlying Hilbert space in mind, we can naturally assign certain well-defined indices $\ind_+(\varGamma, U), \ind_-(\varGamma, U)$ to the given pair $(\varGamma, U),$ in such a way that the following estimate holds true;
\begin{equation}
\label{equation: topological protection of bounded states}
|\ind_\pm(\varGamma,U)| \leq \dim \ker(U \mp 1),
\end{equation}
where we assume that the essential spectrum of $U,$ denoted by $\ess(U),$ does not contain $\pm 1$ (see \cref{section: preliminaries} for the definition of $\ess(U)$). We typically impose the assumption of $\pm 1 \notin \ess(U)$ to ensure that the two indices on the left hand side of \cref{equation: topological protection of bounded states} can be viewed as well-defined Fredholm indices. As can be easily seen from \cref{equation: topological protection of bounded states}, if $\ind_\pm(\varGamma,U)$ is non-zero, then the eigenspace $\ker(U \mp 1)$ contains some non-trivial eigenstates. This implication, known as the \textit{protection of eigenstates by chiral symmetry}, is one of the important features of the bulk-edge correspondence. A brief summary of the index theory mentioned so far can be found in \cref{section: statement of the main theorems}. 

The symmetry protection of eigenstates naturally arises in the context of so-called (discrete-time) \textbi{quantum walks}  \cite{Gudder-1988,Aharonov-Davidovich-Zagury-1993,Meyer-1996,Ambainis-Bach-Nayak-Vishwanath-Watrous-2001}. Quantum walk models are typically characterised by their associated unitary time-evolution operators $U,$ and we may consider various symmetry types in a sense analogous to \cref{equation: chiral symmetry}. Index theory for quantum walks on the integer lattice $\Z$ with various symmetry types has been a particularly active subject of recent mathematical studies of quantum walks \cite{Cedzich-Grunbaum-Stahl-Velazquez-Werner-Werner-2016,Cedzich-Geib-Grunbaum-Stahl-Velazquez-Werner-Werner-2018,Cedzich-Geib-Stahl-Velazquez-Werner-Werner-2018}. In particular, the chiral symmetry condition \cref{equation: chiral symmetry} alone has attracted tremendous attention \cite{Suzuki-2019,Suzuki-Tanaka-2019,Matsuzawa-2020,Tanaka-2020,Cedzich-Geib-Werner-Werner-2021}, and it is also the main subject of the present article. \textbi{Suzuki's split-step quantum walk} \cite{Fuda-Funakawa-Suzuki-2017,Fuda-Funakawa-Suzuki-2018,Fuda-Funakawa-Suzuki-2019,Tanaka-2020,Narimatsu-Ohno-Wada-2021} can be viewed as a prominent example of a one-dimensional chirally symmetric quantum walk. This model is characterised by the following two operators defined on the Hilbert space  $\ell^2(\Z, \C^2) = \ell^2(\Z) \oplus \ell^2(\Z)$ of $\C^2$-valued square-summable sequences:
\begin{align}
\label{equation: suzuki split-step quantum walk}
\varGamma_{\textnormal{suz}} 
&:= 
\begin{pmatrix}
1 & 0 \\
0  &  L^*
\end{pmatrix}
\begin{pmatrix}
p & \sqrt{1 - p^2} \\
\sqrt{1 - p^2} & -p
\end{pmatrix}
\begin{pmatrix}
1 & 0 \\
0  &  L
\end{pmatrix}, \\
\label{equation: definition of Suzuki evolution operator}
U_{\textnormal{suz}} 
&:= 
\begin{pmatrix}
1 & 0 \\
0  &  L^*
\end{pmatrix}
\begin{pmatrix}
p & \sqrt{1 - p^2} \\
\sqrt{1 - p^2} & -p
\end{pmatrix}
\begin{pmatrix}
1 & 0 \\
0  &  L
\end{pmatrix}
\begin{pmatrix}
a & b^* \\
b & -a
\end{pmatrix},
\end{align}
where $L$ is the bilateral left-shift operator on $\ell^2(\Z)$ (see \cref{definition: bileteral shift} for definition), and where $p = (p(x))_{x \in \Z}$ and $a = (a(x))_{x \in \Z}$ are real-valued sequences assuming values in the closed interval $[-1,1].$ Such bounded sequences are identified with the corresponding bounded multiplication operators on $\ell^2(\Z)$ throughout this paper. It is shown in \cite[Corollary 4.4]{Narimatsu-Ohno-Wada-2021} that the model introduced above is equivalent to Kitagawa's split-step quantum walk \cite{Kitagawa-Rudner-Berg-Demler-2010, Kitagawa-Broome-Fedrizzi-Rudner-Berg-Kassal-Aspuru-Demler-White-2012, Kitagawa-2012}.

It follows from a direct computation that if we set $(\varGamma, U) := (\varGamma_{\textnormal{suz}} ,U_{\textnormal{suz}}),$ then we obtain the chiral symmetry condition \cref{equation: chiral symmetry}. To give a complete classification of the associated indices $\ind_\pm(\varGamma, U),$ let us assume the existence of the following limits for each $\star = -\infty, +\infty:$
\begin{align}
\label{equation: anisotropic assumption}
&p(\star) := \lim_{x \to \star} p(x), & &a(\star) := \lim_{x \to \star} a(x).
\end{align}  
It is shown in \cite[Theorem 1.1]{Matsuzawa-Seki-Tanaka-2021}(i) that under \cref{equation: anisotropic assumption}, we have $\pm 1 \notin \ess(U)$ if and only if $p(\star) \neq \pm a(\star)$ for each $\star = -\infty, +\infty.$ Moreover, in this case
\begin{equation}
\label{equation2: GW indices for ssqw with periodic parameters}
\ind_\pm(\varGamma, U) = 
\begin{cases}
+1, & p(-\infty) \mp a(-\infty) < 0 < p(+\infty) \mp a(+\infty), \\
-1, & p(+\infty) \mp a(+\infty) < 0 < p(-\infty) \mp a(-\infty), \\
0, & \mbox{otherwise}.
\end{cases}
\end{equation}
Note first that the index formula \cref{equation2: GW indices for ssqw with periodic parameters} is robust in the sense it depends only on the asymptotic values \cref{equation: anisotropic assumption}. In particular, if $|\ind_\pm(\varGamma,U)| = 1,$ then it follows from \cref{equation: topological protection of bounded states} that the eigenspace $\ker(U \mp 1)$ contains at least one non-trivial eigenstate. It is also shown in \cite[Theorem 1.1]{Matsuzawa-Seki-Tanaka-2021}(ii) that such symmetry protected eigenstates exhibit exponential decay in a certain well-defined sense.

This begs the following natural question. The existence of the two-sided limits \cref{equation: anisotropic assumption} is a fundamental assumption in \cite[Theorem 1.1]{Matsuzawa-Seki-Tanaka-2021}, but is there some meaningful way to generalise this result? The ultimate purpose of the present article is to show that such a generalisation is actually possible, and we do so by replacing \cref{equation: anisotropic assumption} with the so-called \textbi{asymptotically periodic assumption} (see \cref{section: strictly local operators with asymptotically periodic parameters} for definition).

The present article is organised as follows. The evolution operator of Suzuki's split-step quantum walk given by \cref{equation: definition of Suzuki evolution operator} is an explicit example of so-called \textbi{strictly local operators.} In \cref{section: strictly local operators with asymptotically periodic parameters} we develop an elementary operator-algebraic method to classify some topological invariants associated with strictly local operators satisfying the asymptotically periodic assumption. In fact, this result is a gneralisation of \cite[Theorem A]{Tanaka-2020}. In \cref{section: applications of Theorem A} we give a generalisation of \cite[Theorem 1.1]{Matsuzawa-Seki-Tanaka-2021} as a direct application of \cref{section: strictly local operators with asymptotically periodic parameters}. The paper concludes with several concluding remarks in \cref{section: discussion}.

On a final note, the present article focuses on some topological invariants associated with quantum walks, which only make sense in infinite dimensions. The novelty of our approach lies in the fact we can fully classify these topological invariants in the language of linear algebra. As we shall see in this paper, some crucially important arguments can be eventually simplified to analysis of $n \times n$ matrices of the form;
\begin{equation}
\label{equation: almost tridiagonal matrix}
\begin{pmatrix}
\alpha_{0} & \beta_0 & 0 & \cdots & 0 &  \gamma_0 \\
\gamma_{1} & \alpha_{1} & \beta_1 & \cdots & 0 & 0 \\
0 & \gamma_{2} & \alpha_{2} & \cdots & 0 & 0 \\
\svdots & \svdots & \svdots & \sddots & \svdots & \svdots \\
0 & 0 & 0 & \cdots & \alpha_{n-2} & \beta_{n-2} \\
\beta_{n-1}  & 0 & 0 & \cdots & \gamma_{n-1}& \alpha_{n-1} \\
\end{pmatrix}.
\end{equation}

\section{Strictly local operators with asymptotically periodic parameters}
\label{section: strictly local operators with asymptotically periodic parameters}

We start with the following main result of \cite{Tanaka-2020};
\begin{theorem}[{\cite[Theorem A]{Tanaka-2020}}]
\label{theorem: Tanaka2020}
Let $k_0 \in \N,$ and let $A_{-k_0}, \dots, A_{k_0}$ be $n \times n$ matrices-valued sequences on $\Z$ admitting the following limits for $-k_0 \leq k \leq k_0;$
\begin{equation}
\label{equation: two-side limits of Ay}
A_{k}(\rL) := \lim_{x \to -\infty} A_k(x), \qquad 
A_{k}(\rR) := \lim_{x \to +\infty} A_k(x).
\end{equation}
Let 
\begin{align}
\label{equation: strictly local operator in simple form}
A 
&:= 
\sum^{k_0}_{k = -k_0} 
A_k
\begin{pmatrix}
L^{k} & \dots & 0 \\
\vdots & \ddots& \vdots \\
0 & \dots & L^{k} \\
\end{pmatrix}, \\
\label{equation: Ahat in simple form}
\hat{A}(\sharp, z) &:=
\sum^{k_0}_{k = -k_0} 
A_k(\sharp)
\begin{pmatrix}
z^{k} & \dots & 0 \\
\vdots & \ddots& \vdots \\
0 & \dots & z^{k} \\
\end{pmatrix}, & &z \in \T &&\sharp = \rL, \rR.
\end{align}
where $L$ is the bilateral left-shift operator on $\ell^2(\Z),$ and where each $A_k$ in \cref{equation: strictly local operator in simple form} is viewed as the bounded multiplication operator on $\ell^2(\Z, \C^n).$ Then the following assertions hold true:
\begin{enumerate}[(i)]
\item We have that $A$ is Fredholm if and only if $\T \ni z \longmapsto \det \hat{A}(\sharp, z) \in \C$ is nowhere vanishing on $\T$ for each $\sharp = \rL, \rR.$ In this case, the Fredholm index of $A$ is given by
\begin{equation}
\label{equation: index and winding number}
\ind(A) = \wn \left(\det \hat{A}(\rR,\cdot) \right) - \wn \left(\det \hat{A}(\rL,\cdot) \right),
\end{equation}
where $\wn \left(\det \hat{A}(\sharp, \cdot) \right)$ denotes the winding number of the continuous function $\T \ni z \longmapsto \det \hat{A}(\sharp, z) \in \C$ with respect to the origin for each $\sharp = \rL, \rR.$
\item The essential spectrum of $A$ is given by
\begin{align}
\label{equation: essential spectrum of A}
\ess(A) =  \bigcup_{z \in \T}  \sigma \left(\hat{A}(\rR, z) \right) \cup \bigcup_{z \in \T}  \sigma \left(\hat{A}(\rL,z) \right).
\end{align}
\end{enumerate}
\end{theorem}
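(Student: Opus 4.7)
The plan is to reduce $A,$ modulo compact operators, to an orthogonal direct sum of two block Toeplitz operators on the two half-lines $\Z_{<0}$ and $\Z_{\geq 0},$ and then invoke the classical Gohberg--Krein theorem for block Toeplitz operators with continuous matrix symbols.

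As a first step I would form the piecewise-constant splicing $\tilde A := \sum_{k=-k_0}^{k_0} \tilde A_k \, \mathrm{diag}(L^k, \dots, L^k),$ where $\tilde A_k(x) := A_k(\rL)$ for $x < 0$ and $\tilde A_k(x) := A_k(\rR)$ for $x \geq 0.$ By \cref{equation: two-side limits of Ay} each sequence $A_k - \tilde A_k$ vanishes at $\pm\infty,$ so the corresponding matrix-valued multiplication operator is compact on $\ell^2(\Z,\C^n);$ summing over the finite band $|k| \leq k_0$ shows that $A - \tilde A$ is compact. The operator $\tilde A$ itself decomposes as an orthogonal direct sum $A_-^{\rL} \oplus A_+^{\rR}$ up to a finite-rank error supported on the finitely many indices $|x| \leq k_0$ that couple the two half-lines, where $A_\mp^{\sharp}$ denotes the half-line compression of the constant-coefficient operator whose matrix coefficients are $A_k(\sharp).$

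Next I would identify each half-line piece with a block Toeplitz operator. Under the Fourier isomorphism $\ell^2(\Z,\C^n) \cong L^2(\T,\C^n)$ sending $L$ to multiplication by $z,$ the constant-coefficient operator with matrices $A_k(\rR)$ becomes multiplication by $\hat A(\rR,z),$ and compressing to the Hardy-type subspace corresponding to $\ell^2(\Z_{\geq 0},\C^n)$ identifies $A_+^{\rR}$ with the block Toeplitz operator of matrix symbol $\hat A(\rR,\cdot);$ the reflection $x \mapsto -x-1,$ which intertwines $L$ with $L^*,$ treats the left half-line analogously. The Gohberg--Krein block Toeplitz theorem then states that such a $T_f$ is Fredholm if and only if $\det f$ is nowhere vanishing on $\T,$ has Fredholm index equal to a signed winding number of $\det f,$ and has essential spectrum $\bigcup_{z \in \T} \sigma(f(z))$ regardless of Fredholmness. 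Passing back through the Fourier and reflection identifications and recalling the compact perturbation from the first step yields both \cref{equation: essential spectrum of A} and \cref{equation: index and winding number}.

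The main obstacle I would concentrate on is the sign bookkeeping of the winding number for the left half-line piece. The reflection $x \mapsto -x-1$ sends $L$ to $L^*,$ i.e.~$z \mapsto z^{-1}$ on the Fourier side, which reverses the winding of $\det \hat A(\rL,\cdot);$ combined with the orientation convention of the Toeplitz index formula this must be checked to produce precisely the difference $\wn(\det\hat A(\rR,\cdot))-\wn(\det\hat A(\rL,\cdot))$ of \cref{equation: index and winding number}, rather than a sum or a flipped sign. A brief sanity check on toy symbols such as $A = P_{\geq 0}L + P_{<0}L^*$ should pin the signs down. Apart from this sign audit, the argument is routine band-operator Fredholm theory; the only ingredient beyond the classical theorems is the elementary observation that $c_0$-valued matrix multiplication operators are compact on $\ell^2(\Z,\C^n).$
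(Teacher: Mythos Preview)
Your proposal is correct and follows essentially the same route as the original proof in \cite{Tanaka-2020}, which this paper merely cites rather than reproves: split $A$ into its half-line compressions $A_\rL \oplus A_\rR$ modulo a finite-rank remainder, identify each compression with a block Toeplitz operator via Fourier transform (with the reflection $x \mapsto -x-1$ handling the left half-line and producing the sign flip on the winding number), and invoke the Gohberg--Krein theory of block Toeplitz operators with continuous matrix symbol. The paper's own proof of the generalisation, \cref{theorem: topological invariants of strictly local operators with asymptotically periodic parameters}, follows exactly this template as well, quoting the half-line compression step as \cite[Corollary 2.2]{Tanaka-2020} and the Toeplitz index/essential-spectrum facts as \cite[Theorem 2.4]{Tanaka-2020}; your sign-audit concern is precisely the content of the case distinction $\ind A(\sharp)_\sharp = \pm \wn(\det \hat A(\sharp,\cdot))$ appearing there.
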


Any operator $A$ of the form \cref{equation: strictly local operator in simple form} is referred to as an $n$-dimensional \textbi{strictly local operator} on the integer lattice $\Z$ throughout this paper. The purpose of the current section is to generalise the existing formulas \crefrange{equation: index and winding number}{equation: essential spectrum of A}, by replacing the assumption \cref{equation: two-side limits of Ay} with the so-called \textbi{asymptotically periodic assumption.} More precisely, we assume that there exist natural numbers $n_\rL, n_\rR$ with the property that the following limits exist for $-k_0 \leq k \leq k_0;$
\begin{align}
\label{equation: asymptotically periodic}
&A_k(\rL, m) := \lim_{x \to -\infty} A_k(n_\rL \cdot x + m), \qquad m \in \{0, \dots, n_\rL - 1\}, \\
&A_k(\rR, m) := \lim_{x \to +\infty} A_k(n_\rR \cdot x + m), \qquad m \in \{0, \dots, n_\rR - 1\}.
\end{align}
In other words, 
the doubly-infinite sequences $A_{-k}, \dots, A_{k}$ are \textbi{asymptotically $(n_\rL, n_\rR)$-periodic}. Let us consider the following explicit example;

\begin{remark}
Let $A_k = (A_k(x))_{x \in \Z}$ be asymptotically $(3, 2)$-periodic. That is, we assume the existence of the following $3 + 2 = 5$ limits:
\begin{align*}
A_k(\rL, 0) &= \lim_{x \to -\infty} A_k(3x + 0), & A_k(\rR, 0) &= \lim_{x \to +\infty} A_k(2x + 0), \\
A_k(\rL, 1) &= \lim_{x \to -\infty} A_k(3x + 1), & A_k(\rR, 1) &= \lim_{x \to +\infty} A_k(2x + 1), \\
A_k(\rL, 2) &= \lim_{x \to -\infty} A_k(3x + 2), & &
\end{align*}
On the other hand, one can rearrange $A_k(x)$ according to the following table; 
\begin{table}[H]
\begin{center}
{\tiny
\begin{tabular}{|c|c|c|c|c|c|c|c|c|c|} 
\hline
\cellcolor{gray!5}  $A_k(\rL, 0)$ & \cellcolor{gray!5} $\leftarrow$ & \cellcolor{gray!5}  $A_k(-9)$ & \cellcolor{gray!5}  $A_k(-6)$ & \cellcolor{gray!5}  $A_k(-3)$ & \cellcolor{black!5} & \cellcolor{black!5} & \cellcolor{black!5} & \cellcolor{black!5} $\dots$ & \cellcolor{black!5} \\ \hline 
\cellcolor{gray!5}  $A_k(\rL, 1)$ & \cellcolor{gray!5} $\leftarrow$ & \cellcolor{gray!5}  $A_k(-8)$ & \cellcolor{gray!5}  $A_k(-5)$ & \cellcolor{gray!5}  $A_k(-2)$ & \cellcolor{black!5} & \cellcolor{black!5} & \cellcolor{black!5} & \cellcolor{black!5} $\dots$ & \cellcolor{black!5} \\ \hline 
\cellcolor{gray!5}  $A_k(\rL, 2)$ & \cellcolor{gray!5} $\leftarrow$ & \cellcolor{gray!5}  $A_k(-7)$ & \cellcolor{gray!5}  $A_k(-4)$ & \cellcolor{gray!5}  $A_k(-1)$ & \cellcolor{black!5} & \cellcolor{black!5}& \cellcolor{black!5}  & \cellcolor{black!5} $\dots$ & \cellcolor{black!5} \\ \hline 
\cellcolor{gray!5}   & \cellcolor{gray!5} $\dots$ & \cellcolor{gray!5}   & \cellcolor{gray!5}  & \cellcolor{gray!5}   & \cellcolor{black!5} $A_k(0)$& \cellcolor{black!5} $A_k(2)$ & \cellcolor{black!5} $A_k(4)$& \cellcolor{black!5} $\rightarrow$ & \cellcolor{black!5} $A_k(\rR, 0)$ \\ \hline 
\cellcolor{gray!5}   & \cellcolor{gray!5} $\dots$ & \cellcolor{gray!5}   & \cellcolor{gray!5}  & \cellcolor{gray!5}   & \cellcolor{black!5} $A_k(1)$& \cellcolor{black!5} $A_k(3)$ & \cellcolor{black!5} $A_k(5)$& \cellcolor{black!5} $\rightarrow$ & \cellcolor{black!5} $A_k(\rR, 1)$ \\  \hline
\end{tabular}
}
\end{center}
\end{table}
The first three rows show that $A_k(3x + 0), A_k(3x + 1), A_k(3x + 2)$ have well-defined limits as $x \to -\infty,$ whereas the last two rows show that $A_k(2x + 0), A_k(2x + 1)$ have well-defined limits as $x \to +\infty.$
\end{remark}

We are now in a position to state the following generalisation of \cref{theorem: Tanaka2020};
\begin{mtheorem}
\label{theorem: topological invariants of strictly local operators with asymptotically periodic parameters}
Let $k_0 \in \N,$ and let $A_{-k_0}, \dots, A_{k_0}$ be finitely many $n \times n$ matrices-valued sequences on $\Z$ admitting the following representations:
\begin{equation}
A_k(x) = 
\begin{pmatrix}
a_{11}^k(x) & \dots & a_{1n}^k(x)  \\
\vdots & \ddots& \vdots \\
a_{n1}^k(x) & \dots & a_{nn}^k(x)  \\
\end{pmatrix}, \qquad x \in \Z, \qquad -k_0 \leq k \leq k_0.
\end{equation}
We assume that there exist $n_\rL, n_\rR \in \N$ with the property that the following limits exist for $1 \leq i,j \leq n$ and for $-k_0 \leq k \leq k_0;$
\begin{align}
\label{equation: asymptotical left limit}
&a_{ij}^k(\rL, m) := \lim_{x \to -\infty} a_{ij}^k(n_\rL \cdot x + m), \qquad m \in \{0, \dots, n_\rL - 1\}, \\
\label{equation: asymptotical right limit}
&a_{ij}^k(\rR, m) := \lim_{x \to +\infty} a_{ij}^k(n_\rR \cdot x + m), \qquad m \in \{0, \dots, n_\rR - 1\}.
\end{align}
For each $\sharp = \rL, \rR$ and each $z \in \T,$ let $\hat{A}(\sharp, z) = (\hat{A}_{ij}(\sharp, z))_{ij}$ be the square matrix of dimension $n \times n_\sharp $ defined by the following block-matrix representation;
\begin{align}
\label{equation1: fourier transform of Asharp}
\hat{A}(\sharp , z) &:=
\begin{pmatrix}
\hat{A}_{11}(\sharp , z) & \dots & \hat{A}_{1n}(\sharp , z) \\
\vdots & \ddots & \vdots \\
\hat{A}_{1n}(\sharp , z) & \dots & \hat{A}_{nn}(\sharp , z) 
\end{pmatrix}, \\
\label{equation2: fourier transform of Asharp}
\hat{A}_{ij}(\sharp , z) &:=
\sum_{k=-k_0}^{k_0} 
\begin{pmatrix}
a_{ij}^{k}(\sharp , 0) & 0  &\dots & 0 \\
0 & a_{ij}^{k}(\sharp , 1)  & \dots & 0 \\
\vdots &  \vdots & \ddots & \vdots\\
0 &  \dots & \dots & a_{ij}^{k}(\sharp , n_\sharp  - 1) 
\end{pmatrix}
\begin{pmatrix}
0  &   &              &\\
\vdots  &   &\bigidentity  & \\
0  &   &              &\\ 
z  & 0 &\dots         &0  
\end{pmatrix}^k, 
\end{align}
where $\bf{1}$ denotes the identity matrix of dimension $n_\sharp  - 1.$ If $A$ is a strictly local operator of the form \cref{equation: strictly local operator in simple form}, then the following the following assertions hold true:
\begin{enumerate}[(i)]
\item We have that $A$ is Fredholm if and only if $\T \ni z \longmapsto \det \hat{A}(z,\sharp ) \in \C$ is nowhere vanishing on $\T$ for each $\sharp  = \rL, \rR.$ In this case, the Fredholm index of $A$ is given by \cref{equation: index and winding number}.
\item The essential spectrum of $A$ is given by \cref{equation: essential spectrum of A}.
\end{enumerate}
\end{mtheorem}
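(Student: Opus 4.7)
The plan is to reduce \cref{theorem: topological invariants of strictly local operators with asymptotically periodic parameters} to \cref{theorem: Tanaka2020} via a blocking isomorphism that promotes an asymptotically periodic operator into an asymptotically constant one on a larger Hilbert space. First, set $N := \mathrm{lcm}(n_\rL, n_\rR)$ and write $N = q_\sharp \cdot n_\sharp$ for $\sharp \in \{\rL, \rR\}.$ A sequence that is asymptotically $n_\sharp$-periodic on side $\sharp$ is, a fortiori, asymptotically $N$-periodic on the same side, the periodic profile being simply repeated $q_\sharp$ times. Hence, after enlarging each period to $N,$ one may assume $n_\rL = n_\rR = N$ at the cost of working with a coarser symbol.

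Next, introduce the unitary blocking isomorphism $\Phi \colon \ell^2(\Z, \C^n) \to \ell^2(\Z, \C^{nN})$ defined by $(\Phi u)(x) := \bigl(u(Nx), u(Nx+1), \dots, u(Nx+N-1)\bigr).$ Under this identification the bilateral shift $L$ is transformed into an operator-valued $N \times N$ matrix with identity blocks along the super-diagonal and $L$ in the lower-left corner, while each multiplication operator $A_k$ becomes the multiplication operator on $\ell^2(\Z, \C^{nN})$ whose value at $x$ is the block-diagonal matrix with diagonal blocks $A_k(Nx), A_k(Nx+1), \dots, A_k(Nx+N-1).$ Expanding the resulting powers $\Phi L^k \Phi^*$ in terms of powers of $L$ acting on the inner $\ell^2(\Z)$ factor, one sees that $\tilde A := \Phi A \Phi^*$ is again a strictly local operator of the form \cref{equation: strictly local operator in simple form} on $\ell^2(\Z, \C^{nN}),$ and the asymptotic periodicity hypothesis \crefrange{equation: asymptotical left limit}{equation: asymptotical right limit} guarantees that its coefficient sequences admit honest two-sided limits as $x \to \pm\infty.$ Thus \cref{theorem: Tanaka2020} applies, giving a characterisation of Fredholmness, the Fredholm index, and the essential spectrum of $\tilde A$ in terms of its Fourier symbols $\tilde{\hat A}(\sharp, z).$ Since conjugation by the unitary $\Phi$ preserves both Fredholm index and essential spectrum, the same conclusions transfer to $A.$

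The final step is to match these symbols against $\hat A(\sharp, z)$ as defined in \crefrange{equation1: fourier transform of Asharp}{equation2: fourier transform of Asharp}. A direct computation --- essentially the Bloch--Floquet decomposition at scale $N$ of an operator that is strictly $n_\sharp$-periodic --- shows that, after a suitable permutation of basis vectors, $\tilde{\hat A}(\sharp, z)$ is unitarily equivalent to the block-diagonal matrix whose $q_\sharp$ blocks are $\hat A(\sharp, w),$ with $w$ running over the $q_\sharp$-th roots of $z \in \T.$ Indeed, the symbol matrix appearing in \cref{equation2: fourier transform of Asharp} has characteristic polynomial $\lambda^{n_\sharp} - z,$ so after diagonalising the shift block and pulling the resulting change of basis through, one obtains
\[
\det \tilde{\hat A}(\sharp, z) = \prod_{w^{q_\sharp} = z} \det \hat A(\sharp, w), \qquad \sigma\bigl(\tilde{\hat A}(\sharp, z)\bigr) = \bigcup_{w^{q_\sharp} = z} \sigma\bigl(\hat A(\sharp, w)\bigr).
\]
The first identity shows that the nowhere-vanishing condition is invariant under refinement of the blocking, and that $\wn(\det \tilde{\hat A}(\sharp, \cdot)) = \wn(\det \hat A(\sharp, \cdot))$ since $w \mapsto w^{q_\sharp}$ is a $q_\sharp$-fold cover of $\T.$ The second gives $\bigcup_{z \in \T} \sigma(\tilde{\hat A}(\sharp, z)) = \bigcup_{z \in \T} \sigma(\hat A(\sharp, z)).$ Combining all these identifications with the unitary equivalence $\tilde A = \Phi A \Phi^*$ yields both assertions of the theorem.

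The principal obstacle is this last matching step: carrying out the Bloch--Floquet unitary equivalence between $\tilde{\hat A}(\sharp, z)$ and $\bigoplus_{w^{q_\sharp} = z} \hat A(\sharp, w).$ Although conceptually standard, it requires careful bookkeeping of the permutation relating the two blockings as well as of the phase factors generated when the shift symbol from \cref{equation2: fourier transform of Asharp} is raised to powers exceeding $n_\sharp - 1,$ which produce cyclic shifts twisted by additional factors of $z.$ This is also precisely where the structure of the tridiagonal matrix \cref{equation: almost tridiagonal matrix} highlighted in the introduction enters the computation.
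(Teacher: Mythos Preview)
Your approach is correct and leads to a valid proof, but it is genuinely different from the route the paper takes. The paper does \emph{not} pass to the common period $N = \mathrm{lcm}(n_\rL,n_\rR)$. Instead it first splits $A$ into its half-line compressions $A_\rL \oplus A_\rR$ (finite-rank error), and then blocks each side \emph{separately} with its own period $n_\sharp$ via the unitary $\tau_{\sharp,n_\sharp}$. The key technical input is \cref{lemma: interchange property}, which says that compression and blocking commute, so that $A_\sharp$ is transformed, up to a compact perturbation (\cref{lemma: transforming to Asharp}), into the $\sharp$-compression of the constant-coefficient operator $A(\sharp)$ whose Fourier symbol is exactly $\hat A(\sharp,z)$. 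One then applies the half-line Toeplitz result \cite[Theorem~2.4]{Tanaka-2020} directly to each $A(\sharp)_\sharp$; the symbols $\hat A(\sharp,z)$ appear immediately with no further unfolding.

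Your route trades this half-line splitting for a global blocking at scale $N$, which lets you invoke \cref{theorem: Tanaka2020} on the nose, but the price is the Bloch--Floquet matching step $\tilde{\hat A}(\sharp,z) \simeq \bigoplus_{w^{q_\sharp}=z} \hat A(\sharp,w)$ that you correctly flag as the main bookkeeping obstacle. That identity, together with the covering argument for winding numbers, is sound. The paper's approach avoids this step entirely and is in that sense more economical; your approach has the conceptual advantage of reducing to the two-sided \cref{theorem: Tanaka2020} rather than to its half-line variant, and makes the period-independence of the invariants explicit.
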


In general, the Fredholm index and essential spectrum are meaningful only in infinite dimensions. Note, however, that \cref{theorem: topological invariants of strictly local operators with asymptotically periodic parameters} allows us to fully classify these two topological invariants for a strictly local operator in the language of linear algebra. In terms of practical applications, \cref{theorem: topological invariants of strictly local operators with asymptotically periodic parameters} can be applied to the time-evolution operator of a discrete-time quantum walk defined on the integer lattice $\Z,$ provided that it is an operator of the form \cref{equation: strictly local operator in simple form} satisfying the asymptotically periodic assumptions \crefrange{equation: asymptotical left limit}{equation: asymptotical right limit}.

\begin{remark}
\cref{theorem: Tanaka2020} is a special case of \cref{theorem: topological invariants of strictly local operators with asymptotically periodic parameters}. Indeed, with the notation introduced in \cref{theorem: topological invariants of strictly local operators with asymptotically periodic parameters}, if $n_\sharp  = 1$ for each $\sharp = \rL, \rR,$ then \crefrange{equation: asymptotical left limit}{equation: asymptotical right limit} become:
\[
a_{ij}^k(\rL, 0) := \lim_{x \to -\infty} a_{ij}^k(x), \qquad 
a_{ij}^k(\rR, 0) := \lim_{x \to +\infty} a_{ij}^k(x).
\]
In this case, we show that \cref{equation1: fourier transform of Asharp} is given by \cref{equation: Ahat in simple form}. We define the two matrices $A_{k}(\rL), A_{k}(\rR)$ by \cref{equation: two-side limits of Ay} for each $k;$
\[
A_{k}(\sharp) 
:=
\begin{pmatrix}
a_{11}^{k}(\sharp, 0)  & \dots & a_{1n}^{k}(\sharp, 0) \\
\vdots & \ddots & \vdots \\
a_{n1}^{k}(\sharp, 0)  & \dots &  a_{nn}^{k}(\sharp, 0) 
\end{pmatrix}, \qquad  \sharp = \rL,\rR.
\]

We have $\hat{A}_{ij}(\sharp, z) = \sum_{k=-k_0}^{k_0} a_{ij}^{k}(\sharp, 0) z^k,$ and so
\[
\hat{A}(\sharp, z) =
\begin{pmatrix}
\sum_{k=-k_0}^{k_0} a_{11}^{k}(\sharp, 0) z^k & \dots & \sum_{k=-k_0}^{k_0} a_{1n}^{k}(\sharp, 0) z^k \\
\vdots & \ddots & \vdots \\
\sum_{k=-k_0}^{k_0} a_{n1}^{k}(\sharp, 0) z^k & \dots & \sum_{k=-k_0}^{k_0} a_{nn}^{k}(\sharp, 0) z^k 
\end{pmatrix},
\]
which is consistent with \cref{equation: Ahat in simple form}.
\end{remark}

\subsection{Preliminaries}
\label{section: preliminaries}

By operators we shall always mean everywhere-defined bounded linear operators between Banach spaces throughout this paper. An operator $A$ on a Hilbert space $\cH$ is said to be \textbi{Fredholm}, if $\ker A, \ker A^*$ are finite-dimensional and if $A$ has a closed range. Given such $A,$ we define the \textbi{Fredholm index} of $A$ by $\ind(A) := \dim \ker A - \dim \ker A^*.$ It is well-known that the Fredholm index is invariant under compact perturbations. That is, given an operator $A$ on $\cH$ and a compact operator $K$ on $\cH,$ we have that $A$ is Fredholm if and only if so is $A + K,$ and in this case $\ind(A) = \ind(A + K).$ The (Fredholm) \textbi{essential spectrum} of an operator $A$ on $\cH$ is defined as the set $\ess(A)$ of all $\lambda \in \C,$ such that $A - \lambda$ fails to be Fredholm. Note that  $\ess(A)$ is also stable under compact perturbations.

The Hilbert space of all square-summable $\C$-valued sequences $\Psi = (\Psi(x))_{x \in \Z}$ is denoted by the shorthand $\ell^2(\Z) := \ell^2(\Z,\C).$ We have a natural orthogonal decomposition $\ell^2(\Z) = \ell^2_\rL(\Z) \oplus \ell^2_\rR(\Z),$ where
\[
\ell^2_{\rL}(\Z) := \{\Psi \in \ell^2(\Z) \mid \Psi(x) = 0 \,\, \forall x \geq 0 \}, \qquad
\ell^2_\rR(\Z) := \{\Psi \in \ell^2(\Z) \mid \Psi(x) = 0 \,\, \forall x < 0\}.
\]
The orthogonal projections of $\ell^2(\Z)$ onto the above subspaces shall be denoted by $P_{\rL}$ and $P_{\rR} = 1 - P_{\rL}$ respectively. For each $\sharp = \rL,\rR,$ the orthogonal projection $P_{\sharp}$ can be written as $P_{\sharp} = \iota_{\sharp} \iota_{\sharp}^*,$ where $\iota_{\sharp} : \ell^2_{\sharp}(\Z) \hookrightarrow \ell^2(\Z)$ is the inclusion mapping. The \textbi{left-shift operator} $L$ on $\ell^2(\Z)$ is defined by
\begin{equation}
\label{definition: bileteral shift}
L \Psi := \Psi(\cdot + 1), \qquad \Psi \in \ell^2(\Z).
\end{equation}

For each $m \in \N$ any operator $X$ on $\ell^2(\Z, \C^m) := \bigoplus_{j=1}^m \ell^2(\Z)$ admits the following unique block-operator matrix representation;
\begin{equation}
\label{equation: standard representation of X}
X =
\begin{pmatrix}
X_{11} & \dots & X_{1m} \\
\vdots & \ddots& \vdots \\
X_{m1} & \dots & X_{mm} \\
\end{pmatrix}_{\bigoplus_{j=1}^m \ell^2(\Z)},
\end{equation}
where each $X_{ij}$ is an operator on $\ell^2(\Z).$ We shall agree to use the shorthand $X = (X_{ij})$ to mean that \cref{equation: standard representation of X} holds true. With this representation of $X$ in mind, for each $\sharp  = \rL, \rR,$ we define the following compression on $\ell^2_{\sharp}(\Z, \C^m) := \bigoplus_{j=1}^m \ell^2_{\sharp}(\Z);$
\begin{equation}
\label{equation: matrix repesentation of contraction}
X_{\sharp} :=
\begin{pmatrix}
\iota_{\sharp}^* X_{11}\iota_{\sharp} & \dots & \iota_{\sharp}^* X_{1m}\iota_{\sharp} \\
\vdots & \ddots& \vdots \\
\iota_{\sharp}^* X_{m1}\iota_{\sharp} & \dots & \iota_{\sharp}^* X_{mm}\iota_{\sharp} \\
\end{pmatrix}_{\bigoplus_{j=1}^m \ell_\sharp^2(\Z)}.
\end{equation}

For each $m \in \N$ the operator $\tau_{m} : \bigoplus_{j=1}^{m}\ell^2(\Z) \to \ell^2(\Z)$ is defined as the \textit{inverse} of the following unitary operator
\begin{equation}
\ell^2(\Z) \ni \psi \longmapsto 
\begin{pmatrix}
\psi(m \boldsymbol{\cdot}) \\
\vdots \\
\psi(m \boldsymbol{\cdot} + m -1)
\end{pmatrix}
\in \bigoplus_{j=1}^{m}\ell^2(\Z).
\end{equation}
In particular, $\tau_1$ is the identity operator on $\ell^2(\Z).$ Similarly, for each $\sharp = \rL, \rR$ and each $m \in \N$ we define the operator $\tau_{\sharp,m} : \bigoplus_{j=1}^{m}\ell_\sharp^2(\Z) \to \ell_\sharp^2(\Z)$ by
\[
\tau_{\sharp,m} := \iota_{\sharp}^* \tau_{m} \left(\bigoplus_{j=1}^{m} \iota_{\sharp} \right).
\]
It is easy to see that $\tau_{\sharp,m}$ is a unitary operator, since its inverse $\tau_{\sharp}^* = \left(\bigoplus_{j=1}^{m} \iota_{\sharp}^* \right) \tau_{m}^* \iota_{\sharp}$ is given explicitly by the following formula;
\begin{equation}
\label{equation: definition of adjoint of tausharp}
\ell_\sharp^2(\Z) \ni \psi \longmapsto 
\begin{pmatrix}
\psi(m \boldsymbol{\cdot} ) \\
\vdots \\
\psi(m \boldsymbol{\cdot} + m -1)
\end{pmatrix}
\in \bigoplus_{j=1}^{m}\ell_\sharp^2(\Z),
\end{equation}
where $\psi(m \boldsymbol{\cdot} ), \dots, \psi(m \boldsymbol{\cdot} + m - 1) \in \ell^2_\sharp(\Z).$

\begin{lemma}
\label{lemma: unitary transform of important operators}
If $a = (a(x))_{x \in \Z}$ is a bounded $\C$-valued sequence, identified with the associated multiplication operator on $\ell^2(\Z),$ then for each $n \in \N$ we have
\begin{align}
\label{equation: unitary transform of multiplication operator}
\tau_{n}^* a \tau_{n} &= 
\begin{pmatrix}
a(n \boldsymbol{\cdot}) & \dots & 0 \\
\vdots & \ddots & \vdots \\
0 & \dots & a(n \boldsymbol{\cdot} + n - 1) \\
\end{pmatrix}, \\
\label{equation: unitary transform of shift operator}
\tau_{n}^* L \tau_{n} &= 
\begin{pmatrix}
0  &   &              &\\
\vdots  &   &\bigidentity  & \\
0  &   &              &\\ 
L  & 0 &\dots         &0  
\end{pmatrix},
\end{align}
where $\mathbf{1}$ is the identity operator on $\bigoplus_{j=1}^{n-1}\ell^2(\Z).$ 
\end{lemma}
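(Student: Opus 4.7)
My plan is to evaluate both sides of each identity on an arbitrary vector $\Phi = (\phi_0, \dots, \phi_{n-1})^T \in \bigoplus_{j=1}^n \ell^2(\Z)$ and compare them component-by-component. The only structural fact I need is that the formula in \cref{equation: definition of adjoint of tausharp} dualises (i.e.\ inverts) to the rule $(\tau_n \Phi)(nx + k) = \phi_k(x)$ for every $x \in \Z$ and every $0 \leq k \leq n - 1$; from this the argument reduces to bookkeeping of indices, with no functional-analytic input beyond the boundedness of $a$ and the unitarity of $\tau_n$.

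For \cref{equation: unitary transform of multiplication operator}, I would exploit that $a$ acts as pointwise multiplication to obtain $(a \tau_n \Phi)(nx + k) = a(nx + k)\phi_k(x)$ for all admissible $x, k$. Applying $\tau_n^*$ then sorts these values back into $n$ independent components indexed by $k$, and the $k$-th component is the sequence $x \mapsto a(nx + k)\phi_k(x) = \bigl(a(n \boldsymbol{\cdot} + k)\phi_k\bigr)(x)$. This matches the diagonal block-matrix on the right-hand side.

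For \cref{equation: unitary transform of shift operator}, I would split into two cases according to the value of $k$. For $0 \leq k \leq n - 2$ the shift stays inside the same block of $n$ consecutive integers, since $(L \tau_n \Phi)(nx + k) = (\tau_n \Phi)(nx + k + 1) = \phi_{k+1}(x)$; hence the $k$-th component of $\tau_n^* L \tau_n \Phi$ equals $\phi_{k+1}$, which reproduces the identity block $\bigidentity$ in the upper-right corner. For $k = n - 1$ the shift wraps into the next block, so
\[
(L \tau_n \Phi)\bigl(nx + (n - 1)\bigr) = (\tau_n \Phi)\bigl(n(x + 1)\bigr) = \phi_0(x + 1) = (L \phi_0)(x),
\]
and the last component is $L\phi_0$, accounting precisely for the operator $L$ in the bottom-left corner.

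The main obstacle is purely notational: the one substantive point is the wrap-around case $k = n - 1$, which is where the bilateral shift $L$ on $\ell^2(\Z)$ is reintroduced on the right-hand side; everything else is the cyclic permutation of the components $\phi_0, \dots, \phi_{n-1}$ that is built into the definition of $\tau_n$.
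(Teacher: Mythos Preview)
Your proof is correct and is essentially the same direct computation as the paper's: both unwind the definition of $\tau_n$ (equivalently $\tau_n^*$), use that $a$ acts pointwise and $L$ shifts by one, and identify the wrap-around at $k = n-1$ as the source of the lone $L$ in the bottom-left corner. The only cosmetic difference is that the paper applies $\tau_n^*$ to a generic $\psi \in \ell^2(\Z)$ and recognises the result as the matrix acting on $\tau_n^*\psi$, whereas you start from $\Phi \in \bigoplus_{j=1}^n \ell^2(\Z)$ and compute $\tau_n^* X \tau_n \Phi$ directly; since $\tau_n$ is a bijection these are the same argument.
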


\begin{proof}
For each $\psi \in \ell^2(\Z)$ we have
\begin{align*}
\tau_n^* a \psi 
&= 
\begin{pmatrix}
a(n \boldsymbol{\cdot})  \psi(n \boldsymbol{\cdot}) \\
\vdots \\
a(n \boldsymbol{\cdot} + n-1) \psi(n \boldsymbol{\cdot} + n-1)
\end{pmatrix}
=
\begin{pmatrix}
a(n \boldsymbol{\cdot}) & \dots & 0 \\
\vdots & \ddots & \vdots \\
0 & \dots & a(n \boldsymbol{\cdot} + n - 1) \\
\end{pmatrix}
\tau_n^* \psi, \\
\tau_n^* L \psi
&= 
\begin{pmatrix}
\psi(n \boldsymbol{\cdot} + 1) \\
\vdots \\
\psi(n \boldsymbol{\cdot} + n)
\end{pmatrix}
=
\begin{pmatrix}
0 & 1 & 0 &\dots & 0 \\
0 & 0 & 1 &\dots & 0 \\
\vdots & \vdots & \vdots & \ddots& \vdots\\
0 & 0 & 0 &\dots & 1 \\
L & 0 & 0 &\dots & 0 
\end{pmatrix}
\tau_n^* \psi.
\end{align*}
The claim follows.
\end{proof}

\begin{corollary}
\label{corollary: unitary transform of projection}
For each $\sharp = \rL, \rR$ and each $n \in \N,$ we have $\tau_{n}^* P_\sharp \tau_{n} = \bigoplus_{j=1}^{n}P_\sharp.$
\end{corollary}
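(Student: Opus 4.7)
My plan is to identify $P_\sharp$ with multiplication by an indicator sequence and apply the diagonalisation formula \cref{equation: unitary transform of multiplication operator} of \cref{lemma: unitary transform of important operators}. Explicitly, $P_\rR$ is the multiplication operator on $\ell^2(\Z)$ associated with the bounded sequence $\chi_\rR := (\mathbf{1}_{\{x \geq 0\}}(x))_{x \in \Z}$, while $P_\rL$ corresponds to $\chi_\rL := (\mathbf{1}_{\{x < 0\}}(x))_{x \in \Z}$. Substituting $a = \chi_\sharp$ into \cref{equation: unitary transform of multiplication operator} immediately yields a block-diagonal expression for $\tau_n^* P_\sharp \tau_n$ whose $j$-th diagonal entry is multiplication by the shifted sequence $(\chi_\sharp(nx + j))_{x \in \Z}$, for $j \in \{0, 1, \dots, n-1\}$.

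The only real step is to check that each of these shifted sequences coincides with $\chi_\sharp$ itself. For $\sharp = \rR$, since $0 \leq j < n$, the condition $nx + j \geq 0$ is equivalent to $x \geq -j/n$, which for integer $x$ reduces to $x \geq 0$. The case $\sharp = \rL$ is dual: the condition $nx + j < 0$ reduces to $x < -j/n$, which for integer $x$ is equivalent to $x < 0$. Consequently every diagonal block of $\tau_n^* P_\sharp \tau_n$ equals $P_\sharp$, producing the desired direct sum $\bigoplus_{j=1}^n P_\sharp$.

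I expect no serious obstacle here: the corollary is essentially bookkeeping built on the preceding lemma, and the only arithmetic input is the elementary observation that $\lceil -j/n \rceil = 0$ for $0 \leq j \leq n-1$. The point of isolating this statement is to obtain a clean rule for commuting the compressions $P_\sharp$ through the unitary reshuffling $\tau_n$, which will be needed later when passing between compressions of strictly local operators and their asymptotically $(n_\rL, n_\rR)$-periodic rearrangements.
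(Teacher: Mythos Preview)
Your proposal is correct and follows essentially the same approach as the paper: both identify $P_\sharp$ with the indicator multiplication operator, apply \cref{equation: unitary transform of multiplication operator}, and then observe that each diagonal entry $\chi_\sharp(n\,\cdot + j)$ coincides with $\chi_\sharp$. Your version is in fact slightly more explicit, since you spell out the arithmetic reason (that $nx+j\geq 0 \iff x\geq 0$ for $0\leq j<n$) which the paper simply asserts.
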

\begin{proof}
For each $\sharp = \rL, \rR,$ we can identify $P_\sharp$ with the multiplication operator $\delta_\sharp.$ It follows from \cref{equation: unitary transform of multiplication operator} that
\[
\tau_{n}^* \delta_\sharp  \tau_{n} = 
\begin{pmatrix}
\delta_\sharp(n \boldsymbol{\cdot}) & \dots & 0 \\
\vdots & \ddots & \vdots \\
0 & \dots & \delta_\sharp(n \boldsymbol{\cdot} + n - 1) \\
\end{pmatrix},
\]
where $\delta_\sharp(n \boldsymbol{\cdot}) = \dots  = \delta_\sharp(n \boldsymbol{\cdot} + n - 1) = \delta_\sharp.$ Therefore, $\tau_{n}^* P_\sharp \tau_{n} = \bigoplus_{j=1}^{n}P_\sharp.$
\end{proof}

In fact, the special case of \cref{theorem: topological invariants of strictly local operators with asymptotically periodic parameters} where $n_\rL = n_\rR$ can be easily proved by making use of \cref{lemma: unitary transform of important operators}. As for the general case $n_\rL \neq n_\rR,$ we require the following non-trivial fact;

\begin{lemma}
\label{lemma: interchange property}
For each $\sharp = \rL, \rR$ and each $m \in \N,$ we have
\begin{equation}
\label{equation: interchange property}
\left(\bigoplus_{j=1}^n \tau_{\sharp,m}^* \right)  A_\sharp \left(\bigoplus_{j=1}^n \tau_{\sharp,m}\right)
= 
\left(\left(\bigoplus_{j=1}^n \tau_{m}^* \right)  A \left(\bigoplus_{j=1}^n \tau_{m}\right) \right)_\sharp.
\end{equation}
More explicitly, the $m \times n$-dimensional strictly local operator $\left(\bigoplus_{j=1}^n \tau_{m}^* \right)  A \left(\bigoplus_{j=1}^n \tau_{m}\right)$ coincides with the block-operator matrix $B(m)$ defined by the following formulas:
\begin{align}
B(m) &:=
\begin{psmallmatrix}
B_{11}(m) & \sdots & B_{1n}(m) \\
\svdots & \sddots & \svdots \\
B_{n1}(m) & \sdots & B_{nn}(m) 
\end{psmallmatrix}, \\
B_{ij}(m) &:= 
\sum_{k = -k_0}^{k_0} 
\begin{psmallmatrix}
a_{ij}^{k}(m \bigdot) & 0  &\dots & 0 \\
0 & a_{ij}^{k}(m \bigdot + 1)  & \dots & 0 \\
\svdots &  \svdots & \sddots & \svdots\\
0 &  \dots & \dots & a_{ij}^{k}(m \bigdot + m - 1) 
\end{psmallmatrix}
\begin{psmallmatrix}
0  &   &              &\\
\svdots  &   &\bigidentity  & \\
0  &   &              &\\ 
L  & 0 &\sdots         &0  
\end{psmallmatrix}^k, 
\end{align}
where $\bf{1}$ denotes the identity operator of dimension $m - 1.$

\end{lemma}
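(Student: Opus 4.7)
The plan is to first verify the explicit block-operator identity $\left(\bigoplus_{j=1}^n \tau_{m}^* \right) A \left(\bigoplus_{j=1}^n \tau_{m}\right) = B(m)$ by a routine block-by-block calculation, and then derive the interchange formula \cref{equation: interchange property} via a key intertwining identity relating $\iota_\sharp$, $\tau_m$, and $\tau_{\sharp, m}$.

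For the explicit form, the operator $A$ has the block-matrix representation $(A_{ij})_{i,j=1}^n$ with $A_{ij} = \sum_{k=-k_0}^{k_0} a_{ij}^k L^k$. Conjugation by $\bigoplus_{j=1}^n \tau_m$ produces the block-operator matrix $(\tau_m^* A_{ij} \tau_m)_{i,j=1}^n$, and each block can be expanded as
\[
\tau_m^* A_{ij} \tau_m = \sum_{k=-k_0}^{k_0} (\tau_m^* a_{ij}^k \tau_m)(\tau_m^* L \tau_m)^k
\]
by inserting $\tau_m \tau_m^* = 1$ between the factors. Substituting formulas \cref{equation: unitary transform of multiplication operator} and \cref{equation: unitary transform of shift operator} of \cref{lemma: unitary transform of important operators} then reproduces $B_{ij}(m)$ on the nose.

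For the interchange identity, the key intermediate step I would isolate is
\[
\iota_\sharp \, \tau_{\sharp, m} = \tau_m \left(\bigoplus_{l=1}^{m} \iota_\sharp\right), \qquad \sharp = \rL, \rR.
\]
This follows from the definition $\tau_{\sharp, m} = \iota_\sharp^* \tau_m \left(\bigoplus_{l=1}^m \iota_\sharp\right)$ together with \cref{corollary: unitary transform of projection}: multiplying on the left by $\iota_\sharp$ yields $P_\sharp \tau_m \left(\bigoplus_{l=1}^m \iota_\sharp\right) = \tau_m \left(\bigoplus_{l=1}^m P_\sharp\right)\left(\bigoplus_{l=1}^m \iota_\sharp\right) = \tau_m \left(\bigoplus_{l=1}^m \iota_\sharp\right)$, since $P_\sharp \iota_\sharp = \iota_\sharp$. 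With this at hand, the $(i,j)$-block of the left-hand side of \cref{equation: interchange property} equals
\[
\tau_{\sharp, m}^* (\iota_\sharp^* A_{ij} \iota_\sharp) \tau_{\sharp, m} = \left(\bigoplus_{l=1}^m \iota_\sharp^*\right)\tau_m^* \bigl(\iota_\sharp \iota_\sharp^*\bigr) A_{ij} \bigl(\iota_\sharp \iota_\sharp^*\bigr) \tau_m \left(\bigoplus_{l=1}^m \iota_\sharp\right),
\]
which, upon absorbing the $P_\sharp = \iota_\sharp \iota_\sharp^*$ factors using the intertwining identity and its adjoint, collapses to $\left(\bigoplus_{l=1}^m \iota_\sharp^*\right)(\tau_m^* A_{ij} \tau_m)\left(\bigoplus_{l=1}^m \iota_\sharp\right)$; this is exactly the $(i,j)$-block of the compression appearing on the right-hand side of \cref{equation: interchange property}.

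I expect the main obstacle to be purely notational: carefully distinguishing when the outer direct sum $\bigoplus_{j=1}^n$ acts versus the inner direct sum $\bigoplus_{l=1}^m$ produced by $\tau_m$, and keeping track of whether one is working on $\bigoplus_{j=1}^n \ell^2(\Z)$ or on $\bigoplus_{j=1}^n \bigoplus_{l=1}^m \ell^2(\Z)$ (and on their $\sharp$-compressions). Once the outer direct sum is factored through so that each $(i,j)$-block is treated individually, the argument reduces to the two formulas in \cref{lemma: unitary transform of important operators} combined with the intertwining identity derived from \cref{corollary: unitary transform of projection}.
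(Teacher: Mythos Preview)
Your proposal is correct and follows essentially the same approach as the paper: both arguments reduce to the block-by-block identity $\tau_{\sharp,m}^*(\iota_\sharp^* A_{ij}\iota_\sharp)\tau_{\sharp,m} = \left(\bigoplus_{l=1}^m \iota_\sharp^*\right)\tau_m^* A_{ij}\tau_m\left(\bigoplus_{l=1}^m \iota_\sharp\right)$, established via \cref{corollary: unitary transform of projection}, and then invoke \cref{lemma: unitary transform of important operators} to obtain the explicit form $B_{ij}(m)$. The only cosmetic difference is that you package the key step as the intertwining identity $\iota_\sharp\,\tau_{\sharp,m} = \tau_m\left(\bigoplus_{l=1}^m \iota_\sharp\right)$, whereas the paper expands the definition of $\tau_{\sharp,m}$ inline and moves $P_\sharp$ through $\tau_m$ directly; the two are equivalent reformulations of the same computation.
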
 
The formula \cref{equation: interchange property} shows that $\sharp$-compression and $\tau_m$-unitary transforms can be interchanged. 
\begin{proof}
Note that $A$ can be expresses as a block-operator matrix form $A = (A_{ij})$ according to \cref{equation: matrix repesentation of contraction}, where
\[
A_{ij} := \sum_{k=-k_0}^{k_0} a_{ij}^k L^k.
\]
Note that the left-hand side of \cref{equation: interchange property} becomes;
\[
\left(\bigoplus_{j=1}^n \tau_{\sharp,m}^* \right)  A_\sharp \left(\bigoplus_{j=1}^n \tau_{\sharp,m}\right)
= 
\begin{psmallmatrix}
\tau_{\sharp,m}^*(\iota_{\sharp}^* A_{11}\iota_{\sharp})\tau_{\sharp,m} & \sdots & \tau_{\sharp,m}^*(\iota_{\sharp} A_{1n}\iota_{\sharp})\tau_{\sharp,m} \\
\svdots & \sddots& \svdots \\
\tau_{\sharp,m}^*(\iota_{\sharp}^* A_{n1}\iota_{\sharp})\tau_{\sharp,m} & \sdots & \tau_{\sharp,m}^*(\iota_{\sharp}^* A_{nn}\iota_{\sharp})\tau_{\sharp,m} \\
\end{psmallmatrix}_{\bigoplus_{j=1}^n \ell_\sharp^2(\Z, \C^m)}.
\]
Note that for each $i,j$ we obtain
\begin{align*}
\tau_{\sharp,m}^*(\iota_{\sharp}^* A_{ij}\iota_{\sharp})\tau_{\sharp,m}  
&= \left(\bigoplus_{j=1}^{m} \iota_{\sharp}^* \right) \tau_{m}^* \iota_{\sharp}(\iota_{\sharp}^* A_{ij}\iota_{\sharp}) \iota_{\sharp}^* \tau_{m} \left(\bigoplus_{j=1}^{m} \iota_{\sharp} \right) \\
&= \left(\bigoplus_{j=1}^{m} \iota_{\sharp}^* \right) \tau_{m}^* P_{\sharp} A_{ij}P_{\sharp} \tau_{m} \left(\bigoplus_{j=1}^{m} \iota_{\sharp} \right) \\
&= \left(\bigoplus_{j=1}^{m} \iota_{\sharp}^* \right) \left(\bigoplus_{j=1}^{m}P_\sharp \right)  \tau_{m}^*A_{ij}\tau_{m} \left(\bigoplus_{j=1}^{m}P_\sharp\right)  \left(\bigoplus_{j=1}^{m} \iota_{\sharp} \right) \\
&= \left(\bigoplus_{j=1}^{m} \iota_{\sharp}^* \right)  \tau_{m}^*A_{ij}\tau_{m}   \left(\bigoplus_{j=1}^{m} \iota_{\sharp} \right),
\end{align*}
where the second last equality follows from  \cref{corollary: unitary transform of projection} and the last equality follows from $\iota_{\sharp}^*\iota_{\sharp} = 1.$ Therefore, \cref{equation: interchange property} holds true. Now,
\[
\tau_{m}^*A_{ij}\tau_{m}
= \sum_{k=-k_0}^{k_0} \tau_{m}^* a_{ij}^k L^k \tau_{m}
= \sum_{k=-k_0}^{k_0} \tau_{m}^* a_{ij}^k \tau_{m} (\tau_{m}^* L \tau_{m})^k.
\]
The claim follows from \cref{lemma: unitary transform of important operators}.
\end{proof}

\subsection{Proof of the main theorem}

\begin{lemma}
\label{lemma: transforming to Asharp}
For each $\sharp = \rL, \rR$ let
\begin{align*}
{A}(\sharp) &:=
\begin{pmatrix}
{A}_{11}(\sharp) & \dots & {A}_{1n}(\sharp) \\
\vdots & \ddots & \vdots \\
{A}_{1n}(\sharp) & \dots & {A}_{nn}(\sharp) 
\end{pmatrix}, \\
{A}_{ij}(\sharp) &:=
\sum_{k=-k_0}^{k_0} 
\begin{pmatrix}
a_{ij}^{k}(\sharp , 0) & 0  &\dots & 0 \\
0 & a_{ij}^{k}(\sharp , 1)  & \dots & 0 \\
\vdots &  \vdots & \ddots & \vdots\\
0 &  \dots & \dots & a_{ij}^{k}(\sharp , n_\sharp  - 1) 
\end{pmatrix}
\begin{pmatrix}
0  &   &              &\\
\vdots  &   &\bigidentity  & \\
0  &   &              &\\ 
L  & 0 &\dots         &0  
\end{pmatrix}^k, 
\end{align*}
where $\bf{1}$ denotes the identity operator of dimension $n_\sharp - 1.$ Then the following operators are compact;
\[
\left(\bigoplus_{j=1}^n \tau_{\sharp,n_\sharp }^* \right)  A_\sharp \left(\bigoplus_{j=1}^n \tau_{\sharp,n_\sharp }\right) - A(\sharp)_\sharp, \qquad \sharp = \rL, \rR.
\]
\end{lemma}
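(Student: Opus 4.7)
The plan is to reduce the lemma, via \cref{lemma: interchange property}, to showing that a certain strictly local operator with coefficients vanishing at $\sharp$-infinity has compact $\sharp$-compression. Applying \cref{lemma: interchange property} with $m = n_\sharp$ identifies
\[
\left(\bigoplus_{j=1}^n \tau_{\sharp,n_\sharp}^* \right) A_\sharp \left(\bigoplus_{j=1}^n \tau_{\sharp,n_\sharp}\right) = B(n_\sharp)_\sharp,
\]
and since the $\sharp$-compression operation is linear, the operator appearing in the statement of the lemma equals $\bigl(B(n_\sharp) - A(\sharp)\bigr)_\sharp$. It therefore suffices to prove that $C := B(n_\sharp) - A(\sharp)$ has compact $\sharp$-compression.

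Next I would compute $C$ block-wise. Both $B(n_\sharp)$ and $A(\sharp)$ are $n \times n$ block-operator matrices whose $(i,j)$ block is a sum over $k$ of a diagonal multiplication operator followed by a common power $S^k$ of the shift-like matrix appearing in \cref{lemma: interchange property}. Subtracting block-by-block, the $(i,j)$ block of $C$ takes the form
\[
C_{ij} = \sum_{k=-k_0}^{k_0} \mathrm{diag}\bigl(d_{ij}^{k,0}, \ldots, d_{ij}^{k, n_\sharp - 1}\bigr) \, S^k,
\]
where the new coefficient sequences $d_{ij}^{k,r}(x) := a_{ij}^k(n_\sharp x + r) - a_{ij}^k(\sharp, r)$ all converge to $0$ as $x \to \sharp$-infinity by the asymptotically $(n_\rL, n_\rR)$-periodic assumption in \crefrange{equation: asymptotical left limit}{equation: asymptotical right limit}.

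Finally, I would conclude from two standard facts. First, if $c = (c(x))_{x \in \Z}$ is a bounded sequence converging to $0$ as $x \to \sharp$-infinity, then the compression $\iota_\sharp^* c \iota_\sharp$ of the associated multiplication operator is compact on $\ell^2_\sharp(\Z)$, being the norm-limit of its finite-rank truncations to bounded intervals. Second, because any multiplication operator commutes with $P_\sharp = \iota_\sharp \iota_\sharp^*$, one has the factorisation $\iota_\sharp^* c L^s \iota_\sharp = \bigl(\iota_\sharp^* c \iota_\sharp\bigr)\bigl(\iota_\sharp^* L^s \iota_\sharp\bigr)$ for every $s \in \Z$. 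Combining these, every scalar entry of $C_\sharp$ becomes a finite sum of products of a compact and a bounded operator, hence compact, and since the compact operators form a closed two-sided ideal, $C_\sharp$ itself is compact, as required. The main obstacle I anticipate is the bookkeeping step in the previous paragraph: one must confirm that the term-by-term subtraction of $A(\sharp)$ from $B(n_\sharp)$ really does produce, in every block entry, a multiplication operator whose coefficient is a difference of the above form, so that the vanishing-at-$\sharp$-infinity hypothesis can be fed cleanly into the compactness argument.
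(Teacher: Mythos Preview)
Your proposal is correct and follows essentially the same route as the paper: both invoke \cref{lemma: interchange property} to rewrite the operator in question as $\bigl(B(n_\sharp) - A(\sharp)\bigr)_\sharp$, then subtract block-wise to expose diagonal coefficients $a_{ij}^k(n_\sharp\,\cdot + r) - a_{ij}^k(\sharp,r)$ that vanish at $\sharp$-infinity, and finally conclude compactness. The only minor difference is in the compactness step: the paper inserts $\bigoplus P_\sharp$ on the left, so that the coefficient becomes $\delta_\sharp\bigl(a_{ij}^k(\sharp,r) - a_{ij}^k(n_\sharp\,\cdot + r)\bigr)$, which vanishes at \emph{both} $\pm\infty$ and hence is compact on the full space; you instead work with the compression $\iota_\sharp^* c\,\iota_\sharp$ directly and use the factorisation $\iota_\sharp^* c L^s \iota_\sharp = (\iota_\sharp^* c\,\iota_\sharp)(\iota_\sharp^* L^s \iota_\sharp)$. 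Both variants are valid and amount to the same idea.
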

Note that $\hat{A}(\sharp, z)$ defined by \cref{equation2: fourier transform of Asharp} is nothing but the Fourier transform of $A(\sharp).$
\begin{proof}
It follows from \cref{lemma: interchange property} that
\[
\left(\bigoplus_{j=1}^n \tau_{\sharp,n_\sharp}^* \right)  A_\sharp \left(\bigoplus_{j=1}^n \tau_{\sharp,n_\sharp}\right) - A(\sharp)_\sharp
= 
\left(B(n_\sharp) - A(\sharp) \right)_\sharp,
\]
where $B(n_\sharp) - A(\sharp) = (B_{ij}(n_\sharp) - A_{ij}(\sharp))_{i,j}.$ Since $\bigoplus_{j=1}^{n_\sharp} \iota_{\sharp}^* = \bigoplus_{j=1}^{n_\sharp} \iota_{\sharp}^* \bigoplus_{j=1}^{n_\sharp}P_\sharp,$ it remains to show that $C_{ij}(\sharp) := \left(\bigoplus_{j'=1}^{n_\sharp}P_\sharp\right) \left( B_{ij}(n_\sharp) - A_{ij}(\sharp)\right)$ is compact. We obtain
\begin{align*}
C_{ij}(\sharp)
&= 
\sum^{k_0}_{k = -k_0} 
\begin{psmallmatrix}
\delta_{\sharp} (a_{ij}^k(\sharp, 0) - a_{ij}^k(n_\sharp \boldsymbol{\cdot})) & \dots & 0 \\
\svdots & \sddots & \svdots \\
0 & \dots & \delta_{\sharp}(a_{ij}^k(n_\sharp \boldsymbol{\cdot} + n_\sharp - 1)- a_{ij}^k(\sharp,  n_\sharp - 1)) \\
\end{psmallmatrix}
\begin{psmallmatrix}
0  &   &              &\\
\svdots  &   &\bigidentity  & \\
0  &   &              &\\ 
L  & 0 &\sdots         &0  
\end{psmallmatrix}^k,
\end{align*}
where the last equality follows from \cref{lemma: unitary transform of important operators}. Note that each $\delta_\sharp a_{ij}^k(n_\sharp \boldsymbol{\cdot} + j)$ has the following $2$-sided limits:
\begin{align*} 
\lim_{x \to -\infty} \delta_\sharp(x) a_{ij}^k(n_\sharp x + j) 
&= 
\begin{cases}
a_{ij}^k(-\infty, j), & \sharp = \rL, \\
0, & \sharp = \rR, \\
\end{cases} \\
\lim_{x \to +\infty} \delta_\sharp(x) a_{ij}^k(n_\sharp x + j) 
&= 
\begin{cases}
0, & \sharp = \rL, \\
a_{ij}^k(+\infty, j), & \sharp = \rR.
\end{cases}
\end{align*}
It follows that $\delta_{\sharp}(x) (a_{ij}^k(\sharp, j) - a_{ij}^k(n_\sharp x + j)) \to 0$ as $x \to \pm \infty.$ The claim follows.
\end{proof}

\begin{proof}[Proof of \cref{theorem: topological invariants of strictly local operators with asymptotically periodic parameters}]
Note that $A - A_\rL \oplus A_\rR$ is finite rank by \cite[Corollary 2.2]{Tanaka-2020}. Since the Fredholm index and essential spectrum are invariant under compact perturbations, it suffices to consider $A' := A_\rL \oplus A_\rR$ from here on. It follows from \cref{lemma: transforming to Asharp} that the following difference is compact;
\[
\left(\bigoplus_{j=1}^n \tau_{\sharp,n_\sharp }^* \right)  A_\sharp \left(\bigoplus_{j=1}^n \tau_{\sharp,n_\sharp }\right) - A(\sharp)_\sharp.
\]

(i) Since the Fredholmness is invariant under unitary transforms and compact perturbations, we have that $A'$ is Fredholm if and only if $A(\rL)_\rL, A(\rR)_\rR$ are Fredholm. In this case, 
\[
\ind A' = \ind A(\rL)_\rL + \ind A(\rR)_\rR.
\]
On the other hand, it follows from \cite[Theorem 2.4]{Tanaka-2020}(i) that for each $\sharp = \rL, \rR$ the operator $A(\sharp)_\sharp$ is Fredholm if and only if $\T \ni z \longmapsto \det \hat{A}(\sharp, z) \in \C$ is nowhere vanishing on $\T.$ In this case, the Fredholm index of $A(\sharp)_\sharp$ is given by 
\[
\ind A(\sharp)_\sharp = 
\begin{cases}
\wn \left(\det \hat{A}(+\infty, \cdot) \right), & \sharp = \rR, \\
-\wn \left(\det \hat{A}(-\infty, \cdot) \right), & \sharp = \rL.
\end{cases}
\]
The claim follows.

(ii) Since the essential spectrum is invariant under unitary transforms and compact perturbations, we have 
\[
\ess(A') = \ess(A(\rL)_\rL) \cup \ess(A(\rR)_\rR) = \bigcup_{z \in \T}  \sigma \left(\hat{A}(\rR, z) \right) \cup \bigcup_{z \in \T}  \sigma \left(\hat{A}(\rL, z) \right),
\]
where the last equality follows from \cref{equation: index and winding number}.
\end{proof}

\section{Applications of Theorem A}
\label{section: applications of Theorem A}

The purpose of the current section to generalise the existing index formula \cref{equation2: GW indices for ssqw with periodic parameters}.

\subsection{Two main theorems}
\label{section: statement of the main theorems}

We first give a brief description of the existing index theory for chirally symmetric unitary operators here (see, for example, \cite{Matsuzawa-Seki-Tanaka-2021} or \cite[\textsection 3.1]{Tanaka-2020}). Let $\cH$ be a Hilbert space, and let $(\varGamma, U)$ be a pair of a unitary self-adjoint operator $\varGamma : \cH \to \cH$ and a unitary operator $U: \cH \to \cH,$ satisfying the chiral symmetry condition \cref{equation: chiral symmetry}. It can then be shown that the real part $R := (U + U^*)/2$ and imaginary part $Q := (U + U^*)/2$ of $U$ admit the following block-operator matrix representations:
\begin{align}
\label{equation: representation of R and Q}
R =  
\begin{pmatrix}
R_1 & 0 \\
0 & R_2
\end{pmatrix}_{\ker(\varGamma - 1) \oplus \ker(\varGamma + 1)}, \qquad 
Q = \begin{pmatrix}
0 & Q_2 \\
Q_1 & 0
\end{pmatrix}_{\ker(\varGamma - 1) \oplus \ker(\varGamma + 1)},
\end{align}
where the first equality follows from the commutation relation $[\varGamma, R] := \varGamma R - R \varGamma = 0,$ whereas the second equality follows from the anti-commutation relation $\{\varGamma, Q\} := \varGamma Q + Q \varGamma = 0$ (see \cite[Lemma 2.2]{Suzuki-2019} for details). Since $R, Q$ are self-adjoint, we have $R_j^* = R_j$ for each $j=1,2,$ and $Q_2 = Q_1^*.$ It follows that the unitary operator $U = R + iQ$ admits the following representation; 
\begin{equation}
\label{equation: standard representation of U} 
U = 
\begin{pmatrix}
R_1 & iQ_2 \\
iQ_1 & R_2
\end{pmatrix}_{\ker(\varGamma - 1) \oplus \ker(\varGamma + 1)}.
\end{equation}
With \cref{equation: standard representation of U} in mind, we introduce the following formal indices:
\begin{align}
\label{equation: definition of pm indices}
\ind_\pm(\varGamma, U) &:= \dim \ker (R_1 \mp 1) - \dim \ker (R_2 \mp 1), \\
\ind(\varGamma, U) &:= \dim \ker Q_1 - \dim \ker Q_2.
\end{align}
If $\pm 1 \notin \ess(U),$ then $\ind_\pm(\varGamma, U)$ is a well-defined integer, and \cref{equation: topological protection of bounded states} holds true. We prove the following index formula in this section;


\begin{mtheorem}
\label{theorem: index formula for periodic case}
Let $(\varGamma, U) = (\varGamma_{\textnormal{suz}} ,U_{\textnormal{suz}})$ be defined by \crefrange{equation: suzuki split-step quantum walk}{equation: definition of Suzuki evolution operator}.
Suppose that there exist $n_{-\infty}, n_{+\infty} \in \N$ with the property that the following limits exist for each $\star = -\infty, +\infty;$
\begin{equation}
\label{equation: asymptotically periodic limits}
\zeta(\star, m) := \lim_{x \to \star} \zeta(n_\star \cdot x + m), \qquad \zeta \in \{p, a\}, \, m \in \{0, \dots, n_\star -1\}.
\end{equation}

\begin{enumerate}[(i)]
\item Then $\pm 1 \notin \ess(U)$ if and only if for each $\star = -\infty, +\infty$
\begin{equation}
\label{equation: Fredholm characterisation}
\prod_{m=0}^{n_\star  - 1} (1 + p(\star, m)) (1 \mp a(\star, m)) \neq \prod_{m=0}^{n_\star  - 1}  (1 - p(\star, m))(1 \pm a(\star, m)).
\end{equation}

\item Let us impose the following condition;
\begin{equation}
\label{equation1: necessary condition for the Fredholm characterisation}
\prod_{m=0}^{n_\star  - 1} (1 + p(\star, m)) (1 \mp a(\star, m)) + \prod_{m=0}^{n_\star  - 1}  (1 - p(\star, m))(1 \pm a(\star, m)) > 0.
\end{equation}
For each $\star = -\infty, +\infty$ let $p(\star), a(\star) \in [-1,1]$ be uniquely defined through the following formula:
\begin{equation}
\label{equation: definition of p infty}
\frac{\prod_{m=0}^{n_\star  - 1} \left(1 + \zeta(\star, m)\right)}{\prod_{m=0}^{n_\star  - 1}  \left(1 - \zeta(\star, m)\right)}
= \left(\frac{1 + \zeta(\star)}{1 - \zeta(\star)} \right)^{n_\star }, \qquad \zeta = p, a, \qquad \star = -\infty, +\infty.
\end{equation}
Then $\pm 1 \notin \ess(U)$ if and only if $p(\pm \infty) \neq \pm a(\pm \infty).$ Moreover, in this case, we have the following formula;
\begin{equation}
\label{equation: GW indices for ssqw with periodic parameters}
\ind_\pm(\varGamma,U) = 
\frac{\sgn(p(+\infty) \mp a(+\infty)) - \sgn(p(-\infty) \mp a(-\infty))}{2} \in \{-1,0,1\}.
\end{equation}
\end{enumerate}
\end{mtheorem}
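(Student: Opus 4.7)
The plan is to apply \cref{theorem: topological invariants of strictly local operators with asymptotically periodic parameters} to $U = U_{\textnormal{suz}},$ which is a strictly local operator on $\ell^2(\Z,\C^2)$ of the form \cref{equation: strictly local operator in simple form} with $n = 2$ and $k_0 = 1.$ Both parts of the theorem reduce to explicit computations on the Fourier-transformed $(2n_\sharp) \times (2n_\sharp)$ matrices $\hat{U}(\sharp,z)$ defined by \crefrange{equation1: fourier transform of Asharp}{equation2: fourier transform of Asharp}.

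For part (i), \cref{theorem: topological invariants of strictly local operators with asymptotically periodic parameters}(ii) gives $\pm 1 \notin \ess(U)$ if and only if $\det(\hat{U}(\sharp,z) \mp I_{2n_\sharp}) \neq 0$ for every $z \in \T$ and $\sharp = \rL,\rR.$ The main task is an explicit evaluation of this block determinant. A direct calculation in the two-phase case $n_\sharp = 1$ yields
\[
\det(\hat{U}(\sharp, z) \mp I) = (1+p(\sharp))(1 \mp a(\sharp)) + (1-p(\sharp))(1 \pm a(\sharp)) - \sqrt{(1-p(\sharp)^2)(1-a(\sharp)^2)}\,(z+z^{-1}),
\]
which is real-valued on $\T$ (as forced by the unitarity of $\hat{U}$ and the chiral symmetry $\hat{U}^* = \hat{\varGamma}\hat{U}\hat{\varGamma}$) and, via the factorisation $A + B - \sqrt{AB}(z+z^{-1}) = |\sqrt{A} - \sqrt{B}z^{1/2}|^2$ with $A = (1+p)(1\mp a),$ $B = (1-p)(1\pm a),$ vanishes for some $z \in \T$ precisely when $A = B.$ For general $n_\sharp$ I expect an analogous two-term formula obtained by first permuting rows to localise the $z$-dependence into a single cyclic corner (guided by the almost-tridiagonal template \cref{equation: almost tridiagonal matrix}) and then performing block Gaussian elimination across the chirality index. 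The resulting $z$-independent piece should be $\prod_m (1+p(\sharp,m))(1\mp a(\sharp,m)) + \prod_m (1-p(\sharp,m))(1\pm a(\sharp,m)),$ and the same AM-GM/factorisation argument then shows the non-vanishing condition on $\T$ is precisely \cref{equation: Fredholm characterisation}.

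For part (ii), the Cayley identity \cref{equation: definition of p infty} combined with hypothesis \cref{equation1: necessary condition for the Fredholm characterisation} (which ensures that both products in \cref{equation: Fredholm characterisation} are non-negative and not simultaneously zero, so that $n_\star$-th roots of their ratios are well defined) reduces \cref{equation: Fredholm characterisation} algebraically to $(1+p(\star))(1\mp a(\star)) \neq (1-p(\star))(1\pm a(\star)),$ equivalently $p(\star) \neq \pm a(\star).$ To obtain the index formula \cref{equation: GW indices for ssqw with periodic parameters}, I would invoke \cite[Theorem 1.1]{Matsuzawa-Seki-Tanaka-2021} for the two-phase walk $U_{\mathrm{eff}}$ with constant asymptotic parameters $p(\pm\infty), a(\pm\infty),$ for which \cref{equation2: GW indices for ssqw with periodic parameters} already gives the desired formula. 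The reduction from $U$ to $U_{\mathrm{eff}}$ is via a chirally-symmetric norm-continuous homotopy obtained by linearly interpolating the Cayley-transformed parameters $\tfrac{1+p(x)}{1-p(x)}$ and $\tfrac{1+a(x)}{1-a(x)}$ from their asymptotically periodic data to the constant target values on each half-line. Part (i) applied pointwise along the deformation keeps $\pm 1 \notin \ess(U_t)$ throughout, so homotopy invariance of $\ind_\pm(\varGamma,\cdot)$ transfers the formula from $U_{\mathrm{eff}}$ to $U.$

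The chief technical obstacle is to carry out the determinant computation for arbitrary $n_\sharp.$ The almost-tridiagonal template \cref{equation: almost tridiagonal matrix} and the enforced real-valuedness on $\T$ strongly constrain the final expression, but the row/column reduction must be executed carefully and its output identified with the advertised two-product form, with the shift-companion block generating the product structure as one moves through the diagonal. A secondary point is verifying that the Cayley-interpolated homotopy in part (ii) remains within the class of chirally-symmetric split-step walks satisfying \cref{equation1: necessary condition for the Fredholm characterisation} throughout, which should follow from the convexity of the Cayley-transformed parameter region and the independent interpolation of $p$ and $a.$
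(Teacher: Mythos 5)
Your strategy is genuinely different from the paper's, and as written it has two concrete gaps. The paper does not touch the $2n_\sharp\times 2n_\sharp$ symbol $\hat U(\sharp,z)$ at all. Instead it diagonalises $\varGamma$ and $\varGamma'$ by the unitaries $\epsilon,\eta$ of \cref{lemma: wada transform}, forms the half-step operator $F=\eta^*\epsilon$ with off-diagonal blocks $F_{1,\pm}=\mp p_+a_\mp+p_-(\cdot-1)a_\pm L^*$, and invokes \cite[Lemma 3.2]{Cedzich-Geib-Werner-Werner-2021} to get both that $U\mp1$ is Fredholm iff $F_{1,\pm},F_{2,\pm}$ are, \emph{and} that $\ind_\pm(\varGamma,U)=\ind F_{1,\pm}$. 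Since $F_{1,\pm}$ is a one-dimensional strictly local operator with only a constant term and an $L^{-1}$ term, its symbol under \cref{theorem: topological invariants of strictly local operators with asymptotically periodic parameters} is an $n_\star\times n_\star$ triangular-plus-one-corner matrix whose determinant is exactly $\prod_m f_{0,\pm}(\star,m)+(-1)^{n_\star+1}\bigl(\prod_m f_{-1,\pm}(\star,m)\bigr)z^*$, i.e.\ of the form $\alpha+\beta z^*$; \cref{lemma: winding number of cirlce} then gives both the non-vanishing criterion and the winding number in one stroke. Your route bypasses this reduction, and that creates the problems below.

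First, for part (i) your general-$n_\sharp$ determinant identity is only conjectured (``I expect an analogous two-term formula''). The matrix $\hat U(\sharp,z)\mp I_{2n_\sharp}$ is not of the almost-tridiagonal template \cref{equation: almost tridiagonal matrix}, and the claim that its determinant collapses to $A+B-\sqrt{AB}\,(\omega+\omega^{-1})$ with $A,B$ the two products in \cref{equation: Fredholm characterisation} is precisely the hard step; nothing in your sketch executes it. Second, and more seriously, applying \cref{theorem: topological invariants of strictly local operators with asymptotically periodic parameters} to $U$ itself can never produce $\ind_\pm(\varGamma,U)$: since $U$ is unitary (hence normal), $\ind(U\mp1)=0$ whenever it is Fredholm, so the only route to the signed indices is through an off-diagonal block such as $F_{1,\pm}$ (the paper's way) or through your homotopy. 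But the homotopy as described is not watertight: (a) linearly interpolating the Cayley parameters $\Lambda(p),\Lambda(a)$ makes $\log\prod_m\Lambda(p_t(\star,m))\Lambda(\mp a_t(\star,m))$ concave in $t$, which protects the condition $\prod>1$ but \emph{not} the condition $\prod<1$ (a concave function negative at both endpoints can cross zero in between), so the gap $\pm1\notin\ess(U_t)$ is not guaranteed along the path; one would instead need, e.g., a geometric interpolation of the $\Lambda$-values, which does fix the product, together with care at $\Lambda=0,\infty$. (b) Deforming $p$ deforms $\varGamma_{\textnormal{suz}}$ itself, so you are asserting homotopy invariance of $\ind_\pm(\varGamma_t,U_t)$ with a \emph{moving} chiral symmetry; this is true but requires an argument (most naturally by identifying $\ind_\pm$ with $\ind F_{1,\pm}$ --- at which point you have rederived the paper's reduction and the homotopy becomes unnecessary). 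By contrast, the paper's part (ii) is purely algebraic: the equivalence of \cref{equation: Fredholm characterisation} with $p(\star)\neq\pm a(\star)$ via the homeomorphism $\Lambda$ and the identities \crefrange{equation2: properties of Lambda}{equation3: properties of Lambda}, and the index formula read off directly from \cref{equation: baby index formula}. Your $n_\sharp=1$ computation and your $\Lambda$-reduction in (ii) are correct, but to complete the proof you either need to prove the general determinant identity and repair the homotopy, or adopt the half-step reduction.
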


We shall make use of the following arithmetic convention for each $r \in (0,\infty];$
\begin{align*}
&r + \infty  = \infty + r = \infty, &
&r \cdot \infty  = \infty \cdot r = \infty, &
&0^{-1} = \infty, &
&\infty^{-1} = 0,
\end{align*}
where $0 \cdot \infty, \infty \cdot 0$ are left undefined throughout this paper. With this convention in mind, we have the homeomorphism $[0,\infty] \ni s \longmapsto s^{-1} \in [0,\infty],$ where the extended half-line $[0,\infty]$ is viewed as a metric space in the obvious way.

It follows from \cref{theorem: index formula for periodic case}(i) that the assumption \cref{equation1: necessary condition for the Fredholm characterisation} is a necessary condition for $\pm 1 \notin \ess(U).$ Note that the assumption \cref{equation1: necessary condition for the Fredholm characterisation} ensures that the left-hand side of \cref{equation: definition of p infty} is a well-defined number in $[0,\infty],$ since the problematic case $0/0$ never occurs. Note also that $\zeta(\star)$ can be indeed uniquely defined through \cref{equation: definition of p infty}, since we have another homeomorphism $\Lambda : [-1,1] \to [0, \infty]$ defined by
\[
\Lambda(s) := \frac{1 + s}{1 - s}, \qquad s \in [-1,1].
\]
The function $s \longmapsto \Lambda(s)$ increases from $\Lambda(-1) = 0$ to $\Lambda(+1) = \infty$ as in the following figure;
\begin{figure}[H]
\centering
\label{graph: graph of g}
\begin{tikzpicture}[scale=0.7]
\begin{axis}[axis lines=center,width = 0.9\textwidth,height = 0.5\textwidth,xlabel=$s$, xlabel style={anchor = west}, ylabel=$t$, ylabel style={anchor = south}, xtick= {-1, -1/2, 0, 1/2, 1}, extra x ticks={0}, xticklabel style={anchor = north}, xmin= -1.2, xmax=1.2, ymin= 0, ymax=4.5, ytick= {0,1}, yticklabel style = {anchor = north west}, ytick style={draw=none}]
\addplot[samples=200, domain=-0.999:0.999]{sqrt((1 + x)/(1 - x))};
\addlegendentry{$t = \Lambda(s)$}
\end{axis}
\end{tikzpicture}
\caption{This figure represents the graph of $t = \Lambda(s).$}
\label{figure: graph of Lambda}
\end{figure}
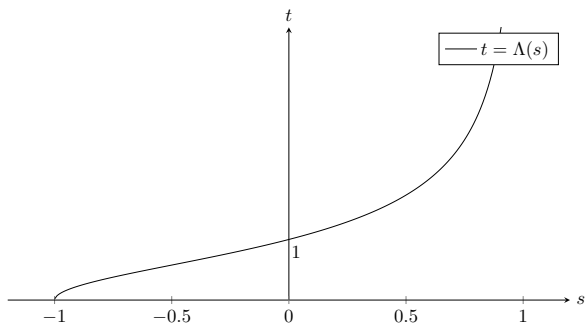
For each $s,s' \in [-1,1],$ we have $\Lambda(-s) = \Lambda(s)^{-1}.$ Furthermore, if $ss' \neq -1,$ then the product $\Lambda(s) \Lambda(s')$ is a well-defined extended non-negative real number, and the following two assertions hold true:
\begin{align}
\label{equation2: properties of Lambda}
&\Lambda(s) \Lambda(s') = \Lambda\left( \frac{s + s'}{1 + ss'}\right), \\
\label{equation3: properties of Lambda}
&\Lambda(s) \Lambda(s') \lessgtr 1 \mbox{ if and only if } s + s' \lessgtr 0, 
\end{align}
where $1 + ss' > 0$ in \cref{equation2: properties of Lambda}, and where the notation $\lessgtr$ in \cref{equation3: properties of Lambda} simultaneously denotes the three binary relations $>, =, <.$ 

\begin{mtheorem}
\label{theorem: spectrum formula for periodic case}
Let $(\varGamma, U) = (\varGamma_{\textnormal{suz}} ,U_{\textnormal{suz}})$ be defined by \crefrange{equation: suzuki split-step quantum walk}{equation: definition of Suzuki evolution operator}. Suppose that there exist $n_{-\infty}, n_{+\infty} \in \N$ with the property that the limits \cref{equation: asymptotically periodic limits} exist for each $\star = -\infty, +\infty.$
Let $\hat{R}_+(\star , z), \hat{R}_-(\star , z)$ be the $n_\star \times n_\star$ matrices defined by the following formula;
\begin{align}
\label{equation: R hat z}
2\hat{R}_\pm(\star , z) 
&:= 
\begin{cases}
r_{1, \pm}(\star , 0)z^{*} + r_{0, \pm}(\star , 0) + r_{1, \pm}(\star , 0)z, & n_\star = 1, \\[10pt]
\begin{psmallmatrix}
r_{0, \pm}(\star , 0) & r_{1, \pm}(\star , 0) +  r_{1, \pm}(\star , 1)z^*  \\
r_{1, \pm}(\star , 0) + r_{1, \pm}(\star , 1)z & r_{0, \pm}(\star , 1)
\end{psmallmatrix}, & n_\star = 2, \\[10pt]
\begin{psmallmatrix}
r_{0, \pm}(\star, 0 ) & r_{1, \pm}(\star, 0) & 0 & \cdots & 0 &  r_{1, \pm}(\star, n_\star - 1) z^*\\
r_{1, \pm}(\star, 0) & r_{0, \pm}(\star, 1) &  r_{1, \pm}(\star, 1) & \cdots & 0 & 0 \\
0 & r_{1, \pm}(\star, 1) & r_{0, \pm}(\star, 2) & \cdots & 0 & 0 \\
\svdots & \svdots & \svdots & \sddots & \svdots & \svdots \\
0 & 0 & 0 & \cdots & r_{0, \pm}(\star, n_\star-2) &  r_{1, \pm}(\star, n_\star - 2) \\
r_{1, \pm}(\star, n_\star - 1) z & 0 & 0 & \cdots & r_{1, \pm}(\star, n_\star - 2) & r_{0, \pm}(\star, n_\star-1) 
\end{psmallmatrix}, & n_\star \geq 3,
\end{cases} \\
r_{0, \pm}(\star, m)  &:= (p(\star, m) \pm 1) a(\star, m) + (p(\star, m) \mp 1) a(\star, m + 1), \\
r_{1, \pm}(\star, m) &:= \sqrt{(1 \mp p(\star, m))(1 \pm p(\star, m + 1))(1 - a(\star, m + 1)^2)}, 
\end{align}
where we let $p(\star, n_\star) := p(\star, 0)$ and $a(\star, n_\star) := a(\star, 0).$ Then
\begin{align}
\ess(U) &= \sigma(-\infty) \cup \sigma(+\infty), \\
\label{equation: definition of sigma star}
\sigma(\star)& := \left\{z \in \T \mid \Re z \in \sigma\left(\hat{R}_-(\star , z)\right) \cup \sigma\left(\hat{R}_+(\star , z)\right) \right\}, \qquad \star = -\infty, +\infty.
\end{align}

\end{mtheorem}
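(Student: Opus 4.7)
The plan is to reduce the computation of $\ess(U)$ to an application of \cref{theorem: topological invariants of strictly local operators with asymptotically periodic parameters} to the real part $R := (U + U^*)/2$, exploiting the chiral symmetry to decompose $R$ into two blocks along $\ker(\varGamma - 1) \oplus \ker(\varGamma + 1)$.

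First I would observe that the chiral symmetry condition forces $\varGamma R = R \varGamma$, so $R$ is self-adjoint and strictly local on $\ell^2(\Z, \C^2)$ with coefficients inheriting the asymptotic $(n_{-\infty}, n_{+\infty})$-periodicity of $p$ and $a$. Since $U$ is unitary, the Borel functional calculus combined with invariance of the essential spectrum under continuous functional calculus yields $\ess(R) = \{\Re \lambda \mid \lambda \in \ess(U)\}$. The chiral symmetry also forces $\ess(U) = \overline{\ess(U)}$ via $\ess(U^*) = \ess(\varGamma U \varGamma) = \ess(U)$, so one obtains the equivalence $\lambda \in \ess(U) \Longleftrightarrow \Re \lambda \in \ess(R)$ for $\lambda \in \T$. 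This reduces the problem to the computation of $\ess(R)$.

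Next I would construct a unitary $W$ on $\ell^2(\Z, \C^2)$ that diagonalises $\varGamma_{\textnormal{suz}}$, i.e., with $W^* \varGamma W$ equal to $1$ on one block and $-1$ on the other. The building blocks of $W$ are the multiplication operators $\alpha(x) := \sqrt{(1 + p(x))/2}$ and $\beta(x) := \sqrt{(1 - p(x))/2}$ together with $L$ and $L^*$, and the identity $\alpha(x)^2 + \beta(x)^2 = 1$ guarantees unitarity. Conjugation produces $W^* R W = R_+ \oplus R_-$, with each $R_\pm$ a self-adjoint strictly local range-one operator on $\ell^2(\Z)$. A routine but careful site-by-site computation shows that the diagonal and nearest-neighbour entries of $R_\pm$ are exactly the position-dependent analogues of the quantities $r_{0, \pm}(x), r_{1, \pm}(x)$ of the theorem, and these inherit the asymptotic $(n_{-\infty}, n_{+\infty})$-periodicity from $p, a$. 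Applying \cref{theorem: topological invariants of strictly local operators with asymptotically periodic parameters} to $R_\pm$ (with $n = 1$) then yields $\ess(R_\pm) = \bigcup_{z \in \T}\sigma(\hat{R}_\pm(-\infty, z)) \cup \bigcup_{z \in \T}\sigma(\hat{R}_\pm(+\infty, z))$, where the matrices produced by the Fourier recipe \crefrange{equation1: fourier transform of Asharp}{equation2: fourier transform of Asharp} take precisely the tridiagonal-with-corner form of \cref{equation: R hat z}, owing to the range-one structure of $R_\pm$. Combining $\ess(R) = \ess(R_+) \cup \ess(R_-)$ with the equivalence $\lambda \in \ess(U) \Leftrightarrow \Re \lambda \in \ess(R)$ delivers the desired description of $\ess(U)$.

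The main obstacle is the explicit chiral diagonalisation step: conjugating $R$ by $W$ generates numerous $p(x), p(x \pm 1), a(x), a(x \pm 1)$ terms, and making these collapse into the clean expressions $r_{0,\pm}, r_{1,\pm}$ --- in particular producing the characteristic off-diagonal factor $\sqrt{(1 \mp p(x))(1 \pm p(x+1))(1 - a(x+1)^2)}$ --- is the most delicate algebraic step. The overall conceptual framework is, however, entirely dictated by \cref{theorem: topological invariants of strictly local operators with asymptotically periodic parameters}.
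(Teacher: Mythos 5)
Your proposal is correct and follows essentially the same route as the paper: reduce $\ess(U)$ to $\ess(R)$ via chiral symmetry, diagonalise $\varGamma_{\textnormal{suz}}$ by a unitary built from $\sqrt{(1\pm p)/2}$ and the shift (the paper's operator $\epsilon$), obtain $R$ as a direct sum of two one-dimensional strictly local operators with coefficients $r_{0,\pm}, r_{1,\pm}$, and apply \cref{theorem: topological invariants of strictly local operators with asymptotically periodic parameters} in the $n=1$ three-coefficient form. The only difference is presentational: the paper carries out the delicate conjugation you flag by routing it through the half-step factorisation $F=\eta^*\epsilon$ of \cref{equation: block representation of F}, computing $R_{\epsilon_1}, R_{\epsilon_2}$ as $F_{1,-}^*F_{1,-}-F_{1,+}^*F_{1,+}$ and $F_{2,-}^*F_{2,-}-F_{2,+}^*F_{2,+}$ rather than by direct site-by-site expansion.
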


\subsection{Proof of \texorpdfstring{\cref{theorem: index formula for periodic case}}{Theorem B}}
\label{section: proof of theorem B}

We shall make use of the following remark in what follows;
\begin{remark}
\label{remark: strictly local operator with three coefficients}
Suppose that $A = \alpha_{-1}L^{-1} + \alpha_0 + \alpha_{1}L$ is a one-dimensional strictly local operator, and that there exist $n_{-\infty}, n_{+\infty} \in \N$ with the property that the following limits exist for each $\star = -\infty, +\infty;$
\begin{align*}
&\alpha_j(\star, m) := \lim_{x \to \star} \alpha_j(n_\star \cdot x + m), &
&j = -1,0,1, & 
&m \in \{0, \dots, n_\star -1\}.
\end{align*}
We introduce the following matrices according to \crefrange{equation1: fourier transform of Asharp}{equation2: fourier transform of Asharp} for each $\star = -\infty, +\infty;$
\begin{align}
\label{equation1: A hat z}
&\hat{A}(\star , z) 
:=
\sum_{j = -1, 0, 1}
\begin{psmallmatrix}
\alpha_{j}(\star, 0) &  0 &0 &0\\
0 &  \alpha_{j}(\star, 1)&0 &0 \\
0 &  0& \sddots &0 \\
0 &  0& 0 & \alpha_{j}(\star,  n_\star  - 1)
\end{psmallmatrix}
\begin{psmallmatrix}
0  &   &              &\\
\svdots  &   &\bigidentity  & \\
0  &   &              &\\ 
z  & 0 &\dots         &0  
\end{psmallmatrix}^j, &
& z \in \T,
\end{align}
where $\bf{1}$ denotes the identity matrix of dimension $n_\star  - 1.$ Note that each $\hat{A}(\star , z)$ admits the following explicit representation;
\begin{equation}
\label{equation2: A hat z}
\hat{A}(\star , z) = 
\begin{cases}
\alpha_{-1}(\star , 0)z^{*} + \alpha_0(\star , 0) + \alpha_{1}(\star , 0)z, & n_\star = 1, \\[10pt]
\begin{psmallmatrix}
\alpha_{0}(\star , 0) & \alpha_{-1}(\star , 0)z^* + \alpha_{1}(\star , 0)  \\
\alpha_{-1}(\star , 1) + \alpha_{1}(\star , 1)z & \alpha_{0}(\star , 1)
\end{psmallmatrix}, & n_\star = 2, \\[10pt]
\begin{psmallmatrix}
\alpha_{0}(\star, 0 ) & \alpha_{1}(\star, 0) & 0 & \cdots & 0 &  \alpha_{-1}(\star, 0) z^*\\
\alpha_{-1}(\star, 1) & \alpha_{0}(\star, 1) & \alpha_{1}(\star, 1) & \cdots & 0 & 0 \\
0 & \alpha_{-1}(\star, 2) & \alpha_{0}(\star, 2) & \cdots & 0 & 0 \\
\svdots & \svdots & \svdots & \sddots & \svdots & \svdots \\
0 & 0 & 0 & \cdots & \alpha_{0}(\star, n_\star-2) & \alpha_{1}(\star, n_\star - 2) \\
\alpha_{1}(\star, n_\star - 1) z & 0 & 0 & \cdots & \alpha_{-1}(\star, n_\star-1) & \alpha_{0}(\star, n_\star-1) \\
\end{psmallmatrix}, & n_\star \geq 3.
\end{cases}
\end{equation}
\end{remark}

\begin{lemma}
\label{lemma: wada transform}
If $(\varGamma, U) = (\varGamma_{\textnormal{suz}} ,U_{\textnormal{suz}})$ is defined by \crefrange{equation: suzuki split-step quantum walk}{equation: definition of Suzuki evolution operator}, then there exist two unitary operators $\epsilon, \eta$ on $\ell^2(\Z, \C^2),$ such that the following four decompositions hold true:
\begin{align}
\label{equation1: wada decomposition}
\epsilon^*
\varGamma 
\epsilon
&=
\begin{pmatrix}
1 & 0 \\
0 & -1
\end{pmatrix}, 
&
\epsilon^* \Usuz \epsilon 
&= 
\begin{pmatrix}
1 & 0 \\
0 & -1
\end{pmatrix} (\eta^* \epsilon)^* 
\begin{pmatrix}
1 & 0 \\
0 & -1
\end{pmatrix} (\eta^* \epsilon), \\
\label{equation2: wada decomposition}
\eta^*
\varGamma'
\eta
&=
\begin{pmatrix}
1 & 0 \\
0 & -1
\end{pmatrix}, 
&
\eta^* \Usuz \eta &= 
(\eta^* \epsilon)
\begin{pmatrix}
1 & 0 \\
0 & -1
\end{pmatrix}
(\eta^* \epsilon)^* 
\begin{pmatrix}
1 & 0 \\
0 & -1
\end{pmatrix}.
\end{align}
More explicitly, if we let $\zeta_\pm := \sqrt{1 \pm \zeta}$ for each $\zeta = p, a,$ then the unitary operators $\epsilon, \eta$ are given respectively by
\begin{equation}
\label{equation: definition of epsilon and eta}
\epsilon :=
\frac{1}{\sqrt{2}}
\begin{pmatrix}
1 & 0 \\
0  &  L^*
\end{pmatrix}
\begin{pmatrix}
p_+  & -p_- \\
p_-  & p_+
\end{pmatrix}, \qquad
\eta :=
\frac{1}{\sqrt{2}}
\begin{pmatrix}
a_+ & -a_- \\
a_-   & a_+
\end{pmatrix}.
\end{equation}
\end{lemma}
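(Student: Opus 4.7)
The plan is to reduce the lemma to the elementary diagonalisation of $2 \times 2$ self-adjoint reflections. Let $\sigma_3 := \begin{psmallmatrix} 1 & 0 \\ 0 & -1 \end{psmallmatrix}$ and $S := \begin{psmallmatrix} 1 & 0 \\ 0 & L \end{psmallmatrix}$ on $\ell^2(\Z, \C^2),$ so that $S$ is unitary with $S^* = \begin{psmallmatrix} 1 & 0 \\ 0 & L^* \end{psmallmatrix}.$ Then $\varGamma$ factorises as $\varGamma = S^* R_p S$ with $R_p := \begin{psmallmatrix} p & \sqrt{1-p^2} \\ \sqrt{1-p^2} & -p \end{psmallmatrix},$ while the right-hand factor in \cref{equation: definition of Suzuki evolution operator} is by construction $\varGamma' := \begin{psmallmatrix} a & b^* \\ b & -a \end{psmallmatrix} = R_a$ (with $b = \sqrt{1-a^2},$ which makes $\varGamma'$ a self-adjoint unitary); hence $\Usuz = \varGamma \cdot \varGamma'.$

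The core computation is the following. For any real $s \in [-1,1],$ set $V_s := \tfrac{1}{\sqrt{2}} \begin{psmallmatrix} s_+ & -s_- \\ s_- & s_+ \end{psmallmatrix}$ with $s_\pm := \sqrt{1 \pm s}.$ The identity $s_+^2 + s_-^2 = 2$ gives $V_s^* V_s = V_s V_s^* = 1,$ while the identities $s_+^2 - s_-^2 = 2s$ and $s_+ s_- = \sqrt{1-s^2}$ jointly yield $V_s \sigma_3 V_s^* = R_s.$ Since these are pointwise identities on $\Z,$ they remain valid after identifying $p, a$ with the corresponding bounded multiplication operators on $\ell^2(\Z).$

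Set $\epsilon := S^* V_p$ and $\eta := V_a$; both are unitary as products of unitaries. The identity $S S^* = 1$ yields $\epsilon^* \varGamma \epsilon = V_p^* S \cdot (S^* R_p S) \cdot S^* V_p = V_p^* R_p V_p = \sigma_3,$ and analogously $\eta^* \varGamma' \eta = \sigma_3,$ which are the first equations of \cref{equation1: wada decomposition} and \cref{equation2: wada decomposition}. Using $\Usuz = \varGamma \varGamma'$ together with $\epsilon \epsilon^* = 1$ one then obtains
\[
\epsilon^* \Usuz \epsilon = (\epsilon^* \varGamma \epsilon)(\epsilon^* \varGamma' \epsilon) = \sigma_3 \cdot (\epsilon^* \eta) \sigma_3 (\eta^* \epsilon) = \sigma_3 (\eta^* \epsilon)^* \sigma_3 (\eta^* \epsilon),
\]
and symmetrically $\eta^* \Usuz \eta = (\eta^* \varGamma \eta)(\eta^* \varGamma' \eta) = (\eta^* \epsilon) \sigma_3 (\eta^* \epsilon)^* \cdot \sigma_3,$ as claimed.

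The only mild subtlety is bookkeeping the asymmetric role of the shift $S,$ which enters $\varGamma$ but not $\varGamma'$; this is precisely why $S^*$ is absorbed into $\epsilon$ while $\eta$ carries no shift factor. Beyond unitarity of $L$ and the pointwise algebraic identities for $s_\pm,$ no analytic input is required, so I expect no serious obstacle.
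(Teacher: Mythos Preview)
Your proof is correct and follows essentially the same route as the paper: both identify the pointwise $2\times 2$ diagonalisation $V_s^* R_s V_s = \sigma_3$ for the reflection $R_s = \begin{psmallmatrix} s & \sqrt{1-s^2} \\ \sqrt{1-s^2} & -s \end{psmallmatrix}$, absorb the shift $S^*$ into $\epsilon$, and then factor $\Usuz = \varGamma\varGamma'$ by inserting $\eta\eta^* = 1$ (respectively $\epsilon\epsilon^* = 1$) to obtain the four displayed identities. Your write-up is slightly more streamlined in notation, but there is no substantive difference in method.
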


\begin{proof}
Note first that we have the following unitary diagonalisation for each $\zeta = p, a$ and each $x \in \Z$ (see, for example, \cite[Example 3.1]{Tanaka-2020});
\[
\begin{pmatrix}
\frac{\zeta_+(x)}{\sqrt{2}} & \frac{-\zeta_-(x)}{\sqrt{2}} \\
\frac{\zeta_-(x)}{\sqrt{2}}   & \frac{\zeta_+(x)}{\sqrt{2}}
\end{pmatrix}^*
\begin{pmatrix}
\zeta(x) & \sqrt{1 - \zeta(x)^2} \\
\sqrt{1 - \zeta(x)^2} & -\zeta(x)
\end{pmatrix}
\begin{pmatrix}
\frac{\zeta_+(x)}{\sqrt{2}} & \frac{-\zeta_-(x)}{\sqrt{2}} \\
\frac{\zeta_-(x)}{\sqrt{2}}   & \frac{\zeta_+(x)}{\sqrt{2}}
\end{pmatrix}
=
\begin{pmatrix}
1 & 0 \\
0 & -1
\end{pmatrix}.
\]
This result motivates us to introduce the unitary operators $\epsilon, \eta$ defined by \cref{equation: definition of epsilon and eta}. Indeed, 
\begin{align*}
\epsilon^* \varGamma \epsilon
=
\begin{pmatrix}
\frac{p_+}{\sqrt{2}}  & \frac{-p_-}{\sqrt{2}}  \\
\frac{p_-}{\sqrt{2}}  & \frac{p_+}{\sqrt{2}} 
\end{pmatrix}^*
\begin{pmatrix}
p  & \sqrt{1 - p ^2} \\
\sqrt{1 - p ^2} & -p 
\end{pmatrix}
\begin{pmatrix}
\frac{p_+}{\sqrt{2}}  & \frac{-p_-}{\sqrt{2}}  \\
\frac{p_-}{\sqrt{2}}  & \frac{p_+}{\sqrt{2}} 
\end{pmatrix}
=
\begin{pmatrix}
1 & 0 \\
0 & -1
\end{pmatrix}, \\
\eta^* \varGamma' \eta
=
\begin{pmatrix}
\frac{a_+}{\sqrt{2}}  & \frac{-a_-}{\sqrt{2}}  \\
\frac{a_-}{\sqrt{2}}    & \frac{a_+}{\sqrt{2}} 
\end{pmatrix}^*
\begin{pmatrix}
a  & \sqrt{1 - a^2} \\
\sqrt{1 - a^2} & -a 
\end{pmatrix}
\begin{pmatrix}
\frac{a_+}{\sqrt{2}}  & \frac{-a_-}{\sqrt{2}}  \\
\frac{a_-}{\sqrt{2}}    & \frac{a_+}{\sqrt{2}} 
\end{pmatrix}
=
\begin{pmatrix}
1 & 0 \\
0 & -1
\end{pmatrix}.
\end{align*}
On the other hand,
\begin{align*}
\epsilon^* \Usuz \epsilon 
&= \epsilon^* \varGamma \varGamma' \epsilon
= (\epsilon^* \varGamma \epsilon)(\epsilon^*\eta)(\epsilon^* \varGamma \epsilon)(\eta^* \epsilon)
= 
\begin{pmatrix}
1 & 0 \\
0 & -1
\end{pmatrix} (\eta^* \epsilon)^* 
\begin{pmatrix}
1 & 0 \\
0 & -1
\end{pmatrix} 
(\eta^* \epsilon), \\
\eta^* \Usuz \eta
&= \eta^* \varGamma \varGamma' \eta
= (\eta^* \epsilon) (\eta^* \varGamma' \eta) (\epsilon^* \eta) (\eta^* \varGamma' \eta)
= (\eta^* \epsilon) 
\begin{pmatrix}
1 & 0 \\
0 & -1
\end{pmatrix} (\eta^* \epsilon)^* 
\begin{pmatrix}
1 & 0 \\
0 & -1
\end{pmatrix}.
\end{align*}
\end{proof}

With the notation introduced in \cref{lemma: wada transform}, it is easy to see that the operator $F := \eta^* \epsilon$ is given explicitly by
\begin{equation}
\label{equation: block representation of F}
F = 
\frac{1}{2}
\begin{pmatrix}
p_+a_+  +  a_- L^*p_-   & - p_-a_+   +  a_-  L^*p_+ \\
-p_+ a_-  + a_+  L^* p_-  & p_-a_-  + a_+ L^*p_+
\end{pmatrix} = :
\begin{pmatrix}
F_{1,-} & F_{2,+} \\
F_{1,+} & F_{2,-}
\end{pmatrix}.
\end{equation}
It follows from \cite[Lemma 3.2]{Cedzich-Geib-Werner-Werner-2021} that $U \mp 1$ is Fredholm if and only if $F_{1,\pm}, F_{2,\pm}$ are Fredholm. In this case, we have
\[
\ind_\pm(\varGamma, U) 
= \ind_\pm(\epsilon^* \varGamma \epsilon, \epsilon^*U \epsilon) 
=
\ind F_{1,\pm} = - \ind F_{2,\pm},
\]
where the first equality follows from the unitary invariance of the indices $\ind_\pm,$ and where the last two equalities follow from \cite[Lemma 3.2]{Cedzich-Geib-Werner-Werner-2021}. Therefore, it remains to compute the Fredholm index of the strictly local operators $F_{1,\pm} = \mp p_+ a_{\mp}  + p_-(\cdot - 1)a_{\pm} L^*$ with the aid of the following lemma;

\begin{lemma}
\label{lemma: winding number of cirlce}
Let $\alpha, \beta \in \R,$ and let $f(z) := \alpha + \beta z^*$ for each $z \in \T.$ Then $f$ is nowhere vanishing if and only if $|\alpha| \neq |\beta|.$ In this case, we have
\[
\wn(f) = 
\begin{cases}
-1, & |\alpha| < |\beta|, \\
0, & |\alpha| > |\beta|.
\end{cases}
\]
\end{lemma}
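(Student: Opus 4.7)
The plan is to dispatch the non-vanishing characterisation by a direct modulus argument and then to reduce the winding-number computation to that of an affine linear function via the identity $z^{*} = z^{-1}$ valid on the unit circle.

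For the non-vanishing statement, if $f(z) = 0$ for some $z \in \T,$ then $\alpha = -\beta z^{*},$ and taking absolute values (using $|z^{*}| = 1$) yields $|\alpha| = |\beta|.$ Conversely, if $|\alpha| = |\beta|,$ the hypothesis $\alpha, \beta \in \R$ forces $\alpha = \pm\beta,$ and direct inspection gives $f(\mp 1) = 0.$ Hence $f$ is nowhere vanishing on $\T$ precisely when $|\alpha| \neq |\beta|.$

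For the winding number, the key manipulation is
\[
f(z) = z^{-1}(\alpha z + \beta) = z^{-1} g(z), \qquad g(z) := \alpha z + \beta, \qquad z \in \T.
\]
By multiplicativity of the winding number on continuous nowhere-vanishing functions $\T \to \C \setminus \{0\},$ we obtain $\wn(f) = \wn(g) - 1$ (since $\wn(z^{-1}) = -1$ for the identity orientation on $\T$). The image of $g$ is the circle of radius $|\alpha|$ centred at $\beta,$ traced once counter-clockwise; it encloses the origin exactly when $|\beta| < |\alpha|,$ so $\wn(g) = 1$ in the case $|\alpha| > |\beta|$ and $\wn(g) = 0$ in the case $|\alpha| < |\beta|.$ Subtracting $1$ delivers the asserted values $\wn(f) = 0$ and $\wn(f) = -1,$ respectively.

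There is no substantial obstacle here; the computation is elementary and hinges only on multiplicativity of the winding number and the geometric picture of the image of an affine linear map on $\T.$ The only point that requires a little care is the orientation-reversal produced by the conjugation $z^{*},$ which accounts for the appearance of $-1$ rather than $+1$ in the case $|\alpha| < |\beta|.$ The degenerate boundary cases $\alpha = 0$ or $\beta = 0$ are automatically absorbed into the same argument without extra work.
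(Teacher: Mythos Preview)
Your proof is correct. Both your argument and the paper's reduce the computation to the geometric fact that the image of an affine-linear map $z \mapsto c + rz$ on $\T$ is a circle, and one checks whether the origin lies inside. The paper works directly with $f$ itself: the image of $f(z) = \alpha + \beta z^{*}$ is the circle of radius $|\beta|$ centred at $\alpha$, and the paper reads off that the winding number is either $-1$ or $0$ (the orientation reversal due to $z^{*}$ being left implicit), then determines which case holds by checking whether $\alpha - |\beta| < 0 < \alpha + |\beta|$. You instead factor $f(z) = z^{-1}g(z)$ with $g(z) = \alpha z + \beta$ and invoke multiplicativity of the winding number to get $\wn(f) = \wn(g) - 1$, which makes the sign of the orientation-flip completely explicit. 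Your route is arguably cleaner on this point, at the small cost of appealing to multiplicativity; the paper's route is more self-contained but leaves the reader to supply why the two possible values are $\{-1, 0\}$ rather than $\{0, +1\}$.
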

\begin{proof}
On one hand, if $\beta = 0,$ then the constant function $f = \alpha$ is nowhere vanishing with $\wn(f) = 0$ if and only if $|\alpha| \neq 0.$ On the other hand, if $\beta \neq 0,$ then the image of $f$ is the circle centred at $\alpha$ with the non-zero radius $|\beta|.$ Note that the intersection of the circle and the real line $\R$ is the two-point set  $\{\alpha - |\beta|, \alpha + |\beta|\}.$ It follows that $f$ is nowhere vanishing if and only if $|\alpha| \neq |\beta|,$ where the winding number of $f$ is either $-1$ or $0.$ We have $\wn(f) = -1$ if and only if $\alpha - |\beta| < 0 < \alpha + |\beta|$ if and only if $|\alpha| < |\beta|.$ Similarly, $\wn(f) = 0$ if and only if $|\alpha| > |\beta|.$
\end{proof}

\begin{proof}[Proof of \cref{theorem: index formula for periodic case}]
For each $\zeta = p, a,$ and each $\star = \pm \infty,$ let
\[
\zeta_\pm(\star, m) := \sqrt{1 \pm \zeta(\star, m)}, \qquad m \in \{0, \dots, n_\star  - 1\}.
\]
If we let $f_{0, \pm} := \mp p_+ a_{\mp}$ and $f_{-1, \pm} := p_-(\cdot - 1)a_{\pm},$ then $F_{1,\pm} = f_{-1,\pm} L^* + f_{0,\pm} + 0L.$ For each $z \in \T$ we introduce the following matrix according to \cref{equation2: A hat z};
\begin{equation*}
2\hat{F}_{1,\pm}(\star , z)  
:= 
\begin{cases}
f_{0,\pm}(\star , 0) + f_{-1,\pm}(\star , 0)z^{*}, & n_\star = 1, \\[10pt]
\begin{psmallmatrix}
f_{0,\pm}(\star , 0)   & f_{-1,\pm}(\star , 0)z^*  \\
f_{-1,\pm}(\star , 1)  & f_{0,\pm}(\star , 1)
\end{psmallmatrix}, & n_\star = 2, \\[10pt]
\begin{psmallmatrix}
f_{0,\pm}(\star, 0 ) & 0 & 0 & \cdots & 0 &  f_{-1,\pm}(\star, 0) z^*\\
f_{-1,\pm}(\star, 1) & f_{0,\pm}(\star, 1) & 0 & \cdots & 0 & 0 \\
0 & f_{-1,\pm}(\star, 2) & f_{0,\pm}(\star, 2) & \cdots & 0 & 0 \\
\svdots & \svdots & \svdots & \sddots & \svdots & \svdots \\
0 & 0 & 0 & \cdots & f_{0,\pm}(\star, n_\star-2) & 0 \\
0 & 0 & 0 & \cdots & f_{-1,\pm}(\star, n_\star-1) & f_{0,\pm}(\star, n_\star-1) \\
\end{psmallmatrix}, & n_\star \geq 3.
\end{cases}
\end{equation*}

(i) Note first that the following equality holds true, if $n_\star = 1,2;$
\begin{equation}
\label{equation: determinant of hatF}
\det (2\hat{F}_{1,\pm}(\star, z))
= \prod_{m=0}^{n_\star  - 1}  f_{0,\pm}(\star, m) 
+ (-1)^{n_\star  + 1}\left( \prod_{m=0}^{n_\star  - 1}  f_{-1,\pm}(\star, m)\right)z^*.
\end{equation}
In fact, the co-factor expansion easily allows us to prove that \cref{equation: determinant of hatF} also holds true for any $n_\star \geq 3,$ since the determinant of any triangular matrix is the product of its diagonal entries. It follows from \cref{equation: determinant of hatF} that
\begin{align*}
\det (2\hat{F}_{1,\pm}(\star, z))
=
\prod_{m=0}^{n_\star  - 1} \mp p_+(\star, m)a_{\mp}(\star, m)  
+ \left( (-1)^{n_\star  + 1} \prod_{m=0}^{n_\star  - 1}  p_-(\star, m)a_{\pm}(\star, m) \right) z^*,
\end{align*}
since $p'_- := p_-(\cdot - 1)$ satisfies
\[
\prod_{m=0}^{n_\star  - 1}  p'_-(\star, m)
= 
p_-(\star, n_\star - 1)  p_-(\star, 0) 
\dots 
p_-(\star, n_\star - 2)  
= \prod_{m=0}^{n_\star  - 1}  p_-(\star, m).
\]
It follows from \cref{lemma: winding number of cirlce} that $z \longmapsto \det \hat{F}_{1,\pm}(\star, z)$ is nowhere vanishing if and only if \cref{equation: Fredholm characterisation} holds true for each $\star = -\infty,+\infty,$ since
\begin{align*}
\prod_{m=0}^{n_\star  - 1} \mp p_+(\star, m)a_{\mp}(\star, m) &= (\mp 1)^{n_\star} \left(\prod_{m=0}^{n_\star  - 1} (1 + p(\star, m))(1 \mp a(\star, m)) \right)^{1/2}, \\ 
(-1)^{n_\star  + 1}\prod_{m=0}^{n_\star  - 1}  p_-(\star, m)a_{\pm}(\star, m) &= (-1)^{n_\star  + 1}  \left(\prod_{m=0}^{n_\star  - 1}  (1 - p(\star, m))(1 \pm a(\star, m)) \right)^{1/2}.
\end{align*}
Moreover, if \cref{equation: Fredholm characterisation} holds true, then $w_\pm(\star) := \wn (\det \hat{F}_{1,\pm}(\star, \cdot))$ is given by
\begin{equation}
\label{equation: baby index formula}
w_\pm(\star) =
\begin{cases}
-1, & \prod\limits_{m=0}^{n_\star  - 1} (1 + p(\star, m)) (1 \mp a(\star, m)) < \prod\limits_{m=0}^{n_\star  - 1}  (1 - p(\star, m))(1 \pm a(\star, m)), \\
0, & \prod\limits_{m=0}^{n_\star  - 1} (1 + p(\star, m)) (1 \mp a(\star, m)) > \prod\limits_{m=0}^{n_\star  - 1}  (1 - p(\star, m))(1 \pm a(\star, m)).
\end{cases}
\end{equation}

(ii) Suppose that \cref{equation1: necessary condition for the Fredholm characterisation} holds true. Note that
\begin{align*}
\Lambda(-\zeta(\star))^{n_\star }
= \left(\Lambda(\zeta(\star))^{n_\star }\right)^{-1}
=
\left(
\frac{\prod_{m=0}^{n_\star  - 1} (1 + \zeta(\star, m))}{\prod_{m=0}^{n_\star  - 1}  (1 - \zeta(\star, m))}
\right)^{-1}
=
\frac{\prod_{m=0}^{n_\star  - 1} (1 - \zeta(\star, m))}{\prod_{m=0}^{n_\star  - 1}  (1 + \zeta(\star, m))}.
\end{align*}

We consider
\begin{equation}
\label{equation: binary relation}
\prod_{m=0}^{n_\star  - 1} (1 + p(\star, m)) \prod_{m=0}^{n_\star  - 1} (1 \mp a(\star, m))  \lessgtr \prod_{m=0}^{n_\star  - 1}  (1 - p(\star, m)) \prod_{m=0}^{n_\star  - 1}(1 \pm a(\star, m)),
\end{equation}
where the notation $\lessgtr$ simultaneously denotes the three binary relations $<, =, >.$ On one hand, if $\prod_{m=0}^{n_\star  - 1}  (1 - p(\star, m))(1 \pm a(\star, m)) > 0,$ then \cref{equation: binary relation} is equivalent to
\[
\Lambda(p(\star))^{n_\star } \Lambda(\mp a(\star))^{n_\star } \lessgtr 1
\iff 
\Lambda(p(\star)) \Lambda(\mp a(\star)) \lessgtr 1
\iff p(\star) \mp a(\star) \lessgtr 0,
\]
where the first equivalence follows from the fact that $[0,\infty] \ni s \longmapsto s^{n_\star} \in [0,\infty]$ is an increasing function. On the other hand, if $\prod_{m=0}^{n_\star  - 1} (1 + p(\star, m)) (1 \mp a(\star, m)) \neq 0,$ then \cref{equation: binary relation} is equivalent to
\[
1 \lessgtr \Lambda(-p(\star))^{n_\star} \Lambda(\pm a(\star))^{n_\star}
\iff 0 \lessgtr -p(\star) \pm a(\star) \iff 
p(\star) \mp a(\star) \lessgtr 0.
\]
It follows that \cref{equation: binary relation} is equivalent to $p(\star) \mp a(\star) \lessgtr 0.$ It follows from (i) that $\pm 1 \notin \ess(U)$ if and only if $p(\star) \mp a(\star) \neq 0.$ In this case, \cref{equation: baby index formula} becomes 
\begin{align*}
w_\pm(\star) 
&=
\begin{cases}
-1, & p(\star) \mp a(\star) < 0, \\
0, &  p(\star) \mp a(\star) > 0,
\end{cases} \\
&= \frac{\sgn(p(\star) \mp a(\star)) - 1}{2}.
\end{align*}
We get
\[
\ind_\pm(\varGamma,U)
= w_\pm(+\infty) - w_\pm(-\infty)
= \frac{\sgn(p(+\infty) \mp a(+\infty)) - \sgn(p(-\infty) \mp a(-\infty))}{2}.
\]
The claim follows.
\end{proof}

\subsection{Proof of \texorpdfstring{\cref{theorem: spectrum formula for periodic case}}{Theorem C}}

Note that the evolution operator $\Usuz$ of Suzuki's split-step quantum walk satisfying the asymptotically periodic assumption \cref{equation: asymptotically periodic limits} is a $2$-dimensional strictly local operator. In theory, it is possible to compute $\ess(\Usuz)$ by making use of \cref{theorem: topological invariants of strictly local operators with asymptotically periodic parameters}(ii), but we shall end up with spectral analysis of $2 n_\star \times 2 n_\star$ matrices according to \cref{equation1: fourier transform of Asharp}. In order to reduce the complexity of computations, we may only focus on the real part of $\Usuz$ as the following two lemmas suggest:

\begin{lemma}
Let $(\varGamma, U)$ be any chiral pair on an abstract Hilbert space $\cH,$ and let $R$ be the real part of $U.$ Then
\begin{equation}
\label{equation: essential spectrum of U}
\ess(U) = \left\{z \in \T \mid \Re z \in \ess(R)\right\}.
\end{equation}
\end{lemma}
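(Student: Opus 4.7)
The plan is to exploit the inclusion $\ess(U) \subseteq \sigma(U) \subseteq \T$ to rewrite $R - \Re z$ in terms of $U - z$ for $z \in \T$, then transfer Fredholm/Weyl information across the two sides. For any $z \in \T$ one has $z^{*} = \bar z$, $z + z^{*} = 2\Re z$, and $z z^{*} = 1$, which yields the key algebraic identity
\[
(U - z)(U - z^{*}) \;=\; U^{2} - 2(\Re z)\, U + 1 \;=\; 2\, U\,(R - \Re z).
\]
Because $U$ is invertible, this factorisation converts ``$\Re z \in \ess(R)$'' into a Fredholm obstruction for the product $(U - z)(U - z^{*})$, and conversely is the mechanism for producing a Weyl sequence for $R$ from one for $U$.

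For the inclusion $\ess(U) \subseteq \{z \in \T \mid \Re z \in \ess(R)\}$, I would take a Weyl sequence $\{\psi_{n}\}$ for $z \in \ess(U)$, so $\|(U - z)\psi_{n}\| \to 0$ with $\{\psi_{n}\}$ orthonormal. Since $U - z$ is a normal operator, $\|(U^{*} - \bar z)\psi_{n}\| = \|(U - z)\psi_{n}\| \to 0$, and averaging the two gives $\|(R - \Re z)\psi_{n}\| \to 0$. The orthonormal sequence $\{\psi_{n}\}$ then witnesses $\Re z \in \ess(R)$.

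The reverse inclusion is the heart of the lemma and is where chiral symmetry enters. First I would observe that $\ess(U)$ is closed under complex conjugation: the relation $U^{*} = \varGamma U \varGamma$ with $\varGamma$ a unitary involution makes $U$ and $U^{*}$ unitarily equivalent, hence $\ess(U) = \ess(U^{*}) = \overline{\ess(U)}$, using the standard identity $\ess(A^{*}) = \overline{\ess(A)}$. Then, if $\Re z \in \ess(R)$, the factorisation above shows $(U - z)(U - z^{*})$ fails to be Fredholm; since the composition of two Fredholm operators is Fredholm, at least one of $U - z$ and $U - z^{*}$ fails to be Fredholm, so $z$ or $z^{*}$ belongs to $\ess(U)$, and conjugation symmetry places both inside $\ess(U)$. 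The main obstacle is precisely this last step: without the chiral symmetry one could only conclude that $z$ or $\bar z$ lies in $\ess(U)$, not necessarily $z$ itself, so the input $U^{*} = \varGamma U \varGamma$ is essential to the statement rather than a cosmetic hypothesis.
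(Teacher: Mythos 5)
Your proof is correct, and it takes a more self-contained route than the paper. The paper's proof imports the identity $\ess(R) = \{\Re z \mid z \in \ess(U)\}$ wholesale from an external reference ({\cite[Lemma 3.6]{Tanaka-2020}}) and then only has to observe, exactly as you do, that chiral symmetry makes $\ess(U)$ invariant under complex conjugation (via $U^* = \varGamma U \varGamma$ and $\ess(U^*) = \overline{\ess(U)}$), so that $\Re z = \Re z_0$ with $z_0 \in \ess(U)$ forces $z \in \ess(U)$. You instead re-derive the cited identity from scratch: the factorisation $(U-z)(U-z^*) = 2U(R - \Re z)$ together with invertibility of $U$ handles the inclusion $\{z \in \T \mid \Re z \in \ess(R)\} \subseteq \ess(U) \cup \overline{\ess(U)}$, and normality of $U$ plus Weyl's criterion handles $\ess(U) \subseteq \{z \mid \Re z \in \ess(R)\}$ (both $U$ and $R$ are normal, so non-Fredholmness is equivalent to the existence of an orthonormal singular sequence, which is what licenses your two Weyl-sequence steps). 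The essential use of chiral symmetry --- upgrading ``$z$ or $z^*$ lies in $\ess(U)$'' to ``$z$ lies in $\ess(U)$'' --- is identical in both arguments, and you correctly identify it as the non-cosmetic hypothesis. What your version buys is independence from the reference; what it costs is having to invoke the normal-operator characterisation of the essential spectrum, which the paper avoids by citation.
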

\begin{proof}
We shall make use of $\ess(R) = \{\Re z \mid z \in \ess(U)\},$ a simple proof of which can be found in \cite[Lemma 3.6]{Tanaka-2020}. It suffices to prove $\left\{z \in \T \mid \Re z \in \ess(R)\right\} \subseteq \ess(U),$ since the reverse inclusion is obvious. If $z \in \T$ satisfies $\Re z \in \ess(R),$ then there exists $z_0 \in \ess(U),$ such that $\Re z = \Re z_0.$ That is, either $z = z_0$ or $z = z_0^*,$ where $z_0^* \in \ess(U)$ by the chiral symmetry condition \cref{equation: chiral symmetry}. We get $z \in \ess(U)$ in either case. The claim follows.
\end{proof}

\begin{lemma}
With the notation introduced in \cref{lemma: wada transform}, let $R, Q$ be the real and imaginary parts of $\Usuz.$ Then the unitary operator $\epsilon$ gives the following decomposition;
\begin{align}
\label{equation: first wada decomposition}
\epsilon^* R \epsilon
&=
\begin{pmatrix}
R_{\epsilon_1} & 0 \\
0 & R_{\epsilon_2}
\end{pmatrix}, &
\epsilon^* Q \epsilon
&=
\begin{pmatrix}
0 & Q_{\epsilon_0}^* \\ 
Q_{\epsilon_0} & 0
\end{pmatrix},
\end{align}
where the three operators $R_{\epsilon_1}, R_{\epsilon_2}, Q_{\epsilon_0}$ are defined respectively by
\begin{align}
\label{equation: definition of Repsilon1}
2R_{\epsilon_1} &:= p_-  L p_+ \sqrt{1 - a^2}  + p_+ \sqrt{1 - a^2}  L^* p_-  + (1 + p) a - (1 - p) a(\cdot + 1), \\
\label{equation: definition of Repsilon2}
2R_{\epsilon_2} &:= p_+  L p_-\sqrt{1 - a^2}  + p_- \sqrt{1 - a^2}  L^*  p_+  - (1 - p)a + (1 + p) a(\cdot + 1), \\
\label{equation: definition of Qepsilon}
-2i Q_{\epsilon_0} &:= p_+  L p_+\sqrt{1 - a^2} - p_- \sqrt{1 - a^2}  L^* p_-  - \sqrt{1 - p^2}(a + a(\cdot + 1)).
\end{align}
\end{lemma}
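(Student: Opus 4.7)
The plan is to compute $\epsilon^* \Usuz \epsilon$ explicitly from the factorisation
\[
\epsilon^* \Usuz \epsilon = \begin{pmatrix} 1 & 0 \\ 0 & -1 \end{pmatrix} F^* \begin{pmatrix} 1 & 0 \\ 0 & -1 \end{pmatrix} F
\]
supplied by \cref{lemma: wada transform}, where $F = \eta^* \epsilon$ admits the explicit block-operator representation \cref{equation: block representation of F}. Writing $F$ as the $2 \times 2$ block-operator matrix with entries $F_{1,-}, F_{2,+}, F_{1,+}, F_{2,-}$, a direct block-matrix multiplication yields
\[
\epsilon^* \Usuz \epsilon = \begin{pmatrix} A & B \\ -B^* & D \end{pmatrix},
\]
where $A := F_{1,-}^* F_{1,-} - F_{1,+}^* F_{1,+}$, $D := F_{2,-}^* F_{2,-} - F_{2,+}^* F_{2,+}$, and $B := F_{1,-}^* F_{2,+} - F_{1,+}^* F_{2,-}$. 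Since $A$ and $D$ are each a difference of two positive self-adjoint operators, both are automatically self-adjoint. Taking the real and imaginary parts of $\epsilon^* \Usuz \epsilon$ then immediately yields \cref{equation: first wada decomposition} with the identifications $R_{\epsilon_1} = A$, $R_{\epsilon_2} = D$, and $Q_{\epsilon_0} = i B^*$. Alternatively, the block-diagonal structure of $\epsilon^* R \epsilon$ and the block-off-diagonal structure of $\epsilon^* Q \epsilon$ can be deduced abstractly from \cref{equation: representation of R and Q}, applied to the unitarily equivalent chiral pair $(\epsilon^* \varGamma \epsilon, \epsilon^* \Usuz \epsilon)$, since $\epsilon^* \varGamma \epsilon$ is already diagonal by \cref{lemma: wada transform}.

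The explicit formulas \crefrange{equation: definition of Repsilon1}{equation: definition of Qepsilon} are then obtained by expanding each of $4A$, $4D$, and $-4iQ_{\epsilon_0} = 4B^*$ into four summands and repeatedly invoking the commutation rules $L f = f(\cdot + 1) L$ and $L^* f = f(\cdot - 1) L^*$ valid for any multiplication operator $f$ on $\ell^2(\Z)$, together with the elementary identities $p_\pm^2 = 1 \pm p$, $a_\pm^2 = 1 \pm a$, $a_+ a_- = \sqrt{1 - a^2}$, and $p_+ p_- = \sqrt{1 - p^2}$. Grouping the resulting contributions by shift degree ($L^0$, $L$, or $L^*$), the key observation is that $2F_{1,-} = p_+ a_+ + a_- L^* p_-$ and $2F_{1,+} = -p_+ a_- + a_+ L^* p_-$ differ by a coordinated sign flip, so that in the difference $A = F_{1,-}^* F_{1,-} - F_{1,+}^* F_{1,+}$ the mixed $L$- and $L^*$-terms reinforce while the shift-free parts telescope to $(1+p) a - (1-p) a(\cdot + 1)$, recovering \cref{equation: definition of Repsilon1}. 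Entirely analogous sign-tracking on the second column of $F$ yields \cref{equation: definition of Repsilon2}, and the cross-product $B^*$ yields \cref{equation: definition of Qepsilon} after the identity $-(1+a) + (1 - a(\cdot+1)) + (1-a) - (1 + a(\cdot+1)) = -2(a + a(\cdot+1))$ is used to extract the $-\sqrt{1 - p^2}(a + a(\cdot + 1))$ summand.

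The main obstacle is purely bookkeeping: each commutation of $L^{\pm 1}$ past a multiplication operator introduces a coordinate shift, and one must keep careful track of which occurrences of $p, a, p_\pm, a_\pm$ acquire an argument shift so that the target formulas emerge. There is no conceptual difficulty beyond this careful accounting, and the factor of $\tfrac{1}{2}$ built into the entries of $F$ in \cref{equation: block representation of F} neatly accounts for the overall $\tfrac{1}{2}$-prefactors in the target expressions for $R_{\epsilon_1}, R_{\epsilon_2}, Q_{\epsilon_0}$.
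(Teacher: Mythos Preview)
Your proposal is correct and follows essentially the same route as the paper's proof: both start from the factorisation $\epsilon^* \Usuz \epsilon = \begin{psmallmatrix}1&0\\0&-1\end{psmallmatrix} F^* \begin{psmallmatrix}1&0\\0&-1\end{psmallmatrix} F$ from \cref{lemma: wada transform}, compute the resulting $2\times 2$ block matrix in terms of the $F_{i,\pm}$, read off the real and imaginary parts, and then verify the explicit formulas \crefrange{equation: definition of Repsilon1}{equation: definition of Qepsilon} by direct expansion. Your write-up simply supplies more detail on the expansion step (the commutation rules $Lf = f(\cdot+1)L$ and the identities $p_\pm^2 = 1\pm p$, $a_+a_- = \sqrt{1-a^2}$, etc.) than the paper, which dismisses this as ``direct computations.''
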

In fact, the formula \cref{equation: definition of Qepsilon} can be found \cite[Lemma 3.2]{Tanaka-2020}. It is possible to prove \crefrange{equation: definition of Repsilon1}{equation: definition of Repsilon2} by an analogous argument as \cite[Remark 3.3(ii)]{Tanaka-2020} suggests, but we shall make use of the half-step operator decomposition \cref{equation: block representation of F} in the below proof.
\begin{proof}

Recall that $F := \eta^* \epsilon$ is given by \cref{equation: block representation of F}. It follows from the second equality in \cref{equation1: wada decomposition} that 
\[
\epsilon^* \Usuz \epsilon 
= 
\begin{pmatrix}
F_{1,-}^*F_{1,-} - F_{1,+}^*F_{1,+} & -(F_{2,-}^*F_{1,+} - F_{2,+}^*F_{1,-})^* \\
F_{2,-}^*F_{1,+} - F_{2,+}^*F_{1,-} & F_{2,-}^*F_{2,-} - F_{2,+}^*F_{2,+}
\end{pmatrix}.
\]
Since $\epsilon^* R \epsilon, \epsilon^* Q \epsilon$ are the real and imaginary parts of $\epsilon^* \Usuz \epsilon$ respectively, we get
\begin{align*}
\epsilon^* R \epsilon
&=
\begin{pmatrix}
F_{1,-}^*F_{1,-} - F_{1,+}^*F_{1,+} & 0 \\
0 & F_{2,-}^*F_{2,-} - F_{2,+}^*F_{2,+}
\end{pmatrix}, \\
\epsilon^* Q \epsilon
&=
\begin{pmatrix}
0 & i(F_{2,-}^*F_{1,+} - F_{2,+}^*F_{1,-})^* \\
-i(F_{2,-}^*F_{1,+} - F_{2,+}^*F_{1,-}) & 0
\end{pmatrix},
\end{align*}
where $2F_{1,\pm} = \mp p_+ a_{\mp}  + a_{\pm} L^* p_-$ and $2F_{2,\pm} = \mp p_- a_{\pm}  + a_{\mp} L^*p_+.$ The claim follows from the following direct computations;
\begin{align*}
&4(F_{1,-}^*F_{1,-} - F_{1,+}^*F_{1,+}) = 4R_{\epsilon_1}, \\
&4(F_{2,-}^*F_{2,-} - F_{2,+}^*F_{2,+}) = 4R_{\epsilon_2}, \\
&4(F_{2,-}^*F_{1,+} - F_{2,+}^*F_{1,-}) = 4iQ_{\epsilon_0}. 
\end{align*}
\end{proof}

\begin{remark}
It immediately follows from \crefrange{equation: essential spectrum of U}{equation: first wada decomposition} that
\begin{equation}
\label{equation2: essential spectrum of U}
\ess(U) = \left\{z \in \T \mid \Re z \in \ess(R_{\epsilon_1}) \cup \ess(R_{\epsilon_2}) \right\},
\end{equation}
where each $R_{\epsilon_j}$ is a one-dimensional strictly local operator, unlike the evolution operator $U$ itself. We are now in a position to apply the argument outlined in \cref{remark: strictly local operator with three coefficients} to $R_{\epsilon_j}.$
\end{remark}

\begin{proof}[Proof of \cref{theorem: spectrum formula for periodic case}]
Note that the two operators $R_{+} := R_{\epsilon_1}$ and $R_{-} := R_{\epsilon_2},$ defined respectively by \crefrange{equation: definition of Repsilon1}{equation: definition of Repsilon2}, are operators of the form $2R_\pm = r_{1, \pm}(\cdot - 1)L^{-1} + r_{0, \pm} + r_{1, \pm}L.$ The formula \cref{equation2: A hat z} motivates us to define $\hat{R}_\pm(\star , z)$ by \cref{equation: R hat z}. It follows from \cref{theorem: topological invariants of strictly local operators with asymptotically periodic parameters}(ii) that
\begin{equation}
\label{equation: essential spectrum of Rpm}
\ess\left(R_\pm \right) = \bigcup_{\star = -\infty,+\infty} \left(\bigcup_{z \in \T}  \sigma \left(\hat{R}_\pm(\star, z) \right) \right).
\end{equation}
We get
\begin{align*}
\ess(U) 
= \left\{z \in \T \mid \Re z \in \ess(R_{-}) \cup \ess(R_{+}) \right\} 
= \bigcup_{\star = -\infty,+\infty} \sigma(\star),
\end{align*}
where the first equality follows from \cref{equation2: essential spectrum of U}, and where the last equality follows from \cref{equation: essential spectrum of Rpm}. 
\end{proof}

\section{Discussion}
\label{section: discussion}

\subsection{The essential spectrum of the one-dimensional split-step quantum walk}

It is shown in \cref{theorem: spectrum formula for periodic case} that $\ess(U) = \sigma(-\infty) \cup \sigma(+\infty),$ where for each $\star = -\infty, +\infty$ the subset $\sigma(\star)$ of $\T$ is defined by \cref{equation: definition of sigma star}. The purpose of this subsection is to give a further classification of  $\sigma(\star)$ by restricting attention to $n_\star = 1$ and $n_\star = 2.$ To do so, we introduce the following notation for simplicity;
\[
q := \sqrt{1 - p^2}, \qquad b := \sqrt{1 - a^2}.
\]
Given a fixed real number $r_0$ and a compact interval $[r_1, r_2]$ in $\R,$ we let
\[
r_0 + [r_1, r_2] := [r_0 + r_1, r_0 + r_2], \qquad 
r_0 - [r_1, r_2] := [r_0 -r_2, r_0 - r_1].
\]

\subsubsection{The asymptotically 1-periodic case}

We focus on the case $n_\star = 1$ first. The following proposition can be found in \cite{Tanaka-2020}, but we give an alternative derivation via \cref{theorem: spectrum formula for periodic case}.

\begin{proposition}[{\cite[Theorem B(ii)]{Tanaka-2020}}]
\label{proposition: essential spectrum of ssqw with 1 period}
With the notation introduced in \cref{theorem: spectrum formula for periodic case} in mind, if $n_\star = 1,$ then we have $\sigma(\star) = \sigma\left(\hat{R}_+(\star ,z)\right) = \sigma\left(\hat{R}_-(\star ,z)\right)$ for each $z \in \T.$ More precisely, 
\begin{equation}
\label{equation: sigma star with period 1}
\sigma(\star) =  \left\{z \in \T \mid \Re z \in I(\star)\right\},
\end{equation}
where the closed subinterval $I(\star)$ of $[-1,1]$ is defined by 
\begin{equation}
\label{equation: spectrum of Rhat with period 1}
I(\star) := p(\star, 0)a(\star, 0) + [- q(\star, 0) b(\star, 0),  q(\star, 0) b(\star, 0)].
\end{equation}
Moreover, $\pm 1 \notin \sigma(\star)$ if and only if $p(\star) \neq \pm a(\star).$ 
\end{proposition}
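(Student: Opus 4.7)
The plan is to specialise \cref{theorem: spectrum formula for periodic case} to the case $n_\star = 1$, in which every matrix $\hat{R}_\pm(\star, z)$ collapses to a scalar and the analysis becomes entirely elementary.

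First I would substitute $n_\star = 1$ into the definitions of $r_{0, \pm}$ and $r_{1, \pm}$. With the conventions $p(\star, n_\star) := p(\star, 0)$ and $a(\star, n_\star) := a(\star, 0)$, period one forces $p(\star, 1) = p(\star, 0)$ and $a(\star, 1) = a(\star, 0)$, and a short computation yields
\[
r_{0, \pm}(\star, 0) = 2 p(\star, 0) a(\star, 0), \qquad r_{1, \pm}(\star, 0) = q(\star, 0) b(\star, 0),
\]
both independent of the sign. Substituting into the $n_\star = 1$ case of \cref{equation: R hat z} and using $z^{*} + z = 2 \Re z$ for $z \in \T$ gives
\[
\hat{R}_+(\star, z) = \hat{R}_-(\star, z) = p(\star, 0) a(\star, 0) + q(\star, 0) b(\star, 0) \Re z,
\]
viewed as a $1 \times 1$ Hermitian matrix. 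This immediately yields the coincidence of spectra $\sigma(\hat{R}_+(\star, z)) = \sigma(\hat{R}_-(\star, z))$ asserted for each $z \in \T$.

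Next I would invoke \cref{equation: essential spectrum of Rpm} as it appears in the proof of \cref{theorem: spectrum formula for periodic case}, which shows that the $\star$-contribution to $\ess(U)$ is $\{z \in \T : \Re z \in \bigcup_{w \in \T} \sigma(\hat{R}_\pm(\star, w))\}$. The continuous map $w \mapsto p(\star, 0) a(\star, 0) + q(\star, 0) b(\star, 0) \Re w$ on the connected set $\T$ has image equal to the compact interval $I(\star)$, so $\sigma(\star) = \{z \in \T \mid \Re z \in I(\star)\}$, establishing \cref{equation: sigma star with period 1}.

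Finally, the condition $\pm 1 \notin \sigma(\star)$ is equivalent to $\pm 1 \notin I(\star)$. Since $|p(\star, 0) a(\star, 0)| \leq 1$ and $q(\star, 0) b(\star, 0) \geq 0$, the containment $1 \in I(\star)$ reduces to $1 - p(\star, 0) a(\star, 0) \leq q(\star, 0) b(\star, 0)$; the left side is non-negative, so squaring and using $q^2 b^2 = (1 - p^2)(1 - a^2)$ collapses this to $(p(\star, 0) - a(\star, 0))^2 \leq 0$, i.e., $p(\star, 0) = a(\star, 0)$. Symmetrically, $-1 \in I(\star)$ if and only if $p(\star, 0) = -a(\star, 0)$. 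Combined with the observation that \cref{equation: definition of p infty} trivially gives $p(\star) = p(\star, 0)$ and $a(\star) = a(\star, 0)$ when $n_\star = 1$, this yields the final claim. No serious obstacle arises; the only point demanding care is tracking the sign of $1 \mp p(\star, 0) a(\star, 0)$ before squaring.
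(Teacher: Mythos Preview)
Your argument is correct, and for the computation of $\hat{R}_\pm(\star,z)$ and the identification of $I(\star)$ it coincides with the paper's proof essentially line for line. The one place where you diverge is the final claim that $\pm 1 \notin \sigma(\star)$ if and only if $p(\star) \neq \pm a(\star)$: the paper does \emph{not} prove this directly here but instead defers it to \cref{remark: two-phase assumption}, where the case $n_\star = 1$ is recovered as the degenerate subcase $p(\star,0)=p(\star,1)$, $a(\star,0)=a(\star,1)$ of the full $n_\star = 2$ analysis in \cref{theorem: essential spectrum of ssqw with 2 period}(ii),(iii). Your squaring argument, reducing $1 \mp p(\star,0)a(\star,0) \leq q(\star,0)b(\star,0)$ to $(p(\star,0) \mp a(\star,0))^2 \leq 0$, is more elementary and entirely self-contained; it avoids invoking the $2$-periodic machinery altogether. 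The paper's route has the advantage of exhibiting \cref{proposition: essential spectrum of ssqw with 1 period} as a genuine specialisation of the $n_\star = 2$ result, reinforcing the internal consistency of the theory, whereas your route is logically independent and would stand even if \cref{theorem: essential spectrum of ssqw with 2 period} were absent. One small omission: the statement asserts that $I(\star)$ is a subinterval of $[-1,1]$, which the paper checks via an AM--GM estimate; your squaring computation in fact yields this too (the inequality $(p \mp a)^2 \geq 0$ rearranges to $\pm 1 \notin \mathrm{int}\, I(\star)$), but it would be worth saying so explicitly.
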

\begin{proof}
It follows from \cref{equation: R hat z} that if $n_\star = 1,$ then
\begin{align*}
2\hat{R}_\pm(\star , e^{it}) 
&= r_{0, \pm}(\star , 0) + 2r_{1, \pm}(\star , 0)\cos(t),
\end{align*}
where 
\begin{align*}
r_{0, \pm}(\star , 0) &= (p(\star, 0) \pm 1) a(\star, 0) + (p(\star, 0) \mp 1) a(\star, 0) = 2p(\star, 0)a(\star, 0), \\
r_{1, \pm}(\star , 0) &= \sqrt{(1 \mp p(\star, 0))(1 \pm p(\star, 0))(1 - a(\star, 0)^2)}= q(\star, 0) b(\star, 0).
\end{align*}
It follows that $\hat{R}_+(\star ,e^{it}) = \hat{R}_-(\star ,e^{it}) =: \hat{R}_0(\star ,e^{it})$ for each $t \in [0,2\pi],$ and we get \cref{equation: sigma star with period 1}. It is easy to see that $I(\star)  = [p(\star, 0)a(\star, 0)- q(\star, 0) b(\star, 0),  p(\star, 0)a(\star, 0) + q(\star, 0) b(\star, 0)]$ is a subset of $[-1,1];$
\[
|p(\star, 0)a(\star, 0)| + q(\star, 0) b(\star, 0)
\leq 
\frac{p(\star, 0)^2 + a(\star, 0)^2}{2} + \frac{(1 - p(\star, 0)^2) + (1 - a(\star, 0)^2)}{2}
\leq 1.
\]
It remains to show that $\pm 1 \notin \sigma(\star)$ is equivalent to $p(\star) \neq \pm a(\star),$ but we defer the proof until \cref{remark: two-phase assumption}.
\end{proof}

\subsubsection{The asymptotically 2-periodic case}

Next, we focus on the case $n_\star = 2.$
\begin{theorem}
\label{theorem: essential spectrum of ssqw with 2 period}
With the notation introduced in \cref{theorem: spectrum formula for periodic case} in mind, if $n_\star = 2,$ then we have $\sigma(\star) = \bigcup_{z \in \T} \sigma\left(\hat{R}_+(\star ,z)\right) = \bigcup_{z \in \T} \sigma\left(\hat{R}_-(\star ,z)\right)$ for each $z \in \T.$ More precisely, 
\begin{equation}
\label{equation: spectrum of Rhat with period 2}
\sigma(\star) =  \left\{z \in \T \mid \Re z \in I_1(\star) \cup I_2(\star)\right\},
\end{equation}
where each closed subinterval $I_j(\star)$ of $[-1,1]$ is defined by 
\begin{align}
\label{equation: definition of Ij}
I_j(\star) &:= d(\star) + (-1)^j\left[\sqrt{d(\star)^2 + d_1(\star)}, \sqrt{d(\star)^2 + d_2(\star)}\right], \\
d(\star) &:= \frac{(p(\star,0) + p(\star,1))(a(\star,0) + a(\star,1))}{4}, \\
d_j(\star) &:= \frac{2 - (1 + p(\star,0) p(\star,1))(1 + a(\star,0) a(\star,1)) +(-1)^{j} \prod_{m=0,1} q(\star,m)b(\star,m)}{2}.
\end{align}
Furthermore, we have the following assertions:
\begin{enumerate}[(i)]
\item The set $I_j(\star)$ given by \cref{equation: definition of Ij} is a well-defined closed interval in the sense that $0 \leq d(\star)^2 + d_1(\star) \leq d(\star)^2 + d_2(\star).$ Moreover, $I_1(\star)$ lies to the left of $I_2(\star).$

\item We have $\pm 1 \notin \sigma(\star)$ if and only if
\begin{equation}
\label{equation2: Fredholm characterisation}
\prod_{m=0,1} (1 + p(\star, m)) (1 \mp a(\star, m)) \neq \prod_{m=0,1}  (1 - p(\star, m))(1 \pm a(\star, m))
\end{equation}

\item If $\prod_{m=0,1} (1 + p(\star, m)) (1 \mp a(\star, m)) + \prod_{m=0,1}  (1 - p(\star, m))(1 \pm a(\star, m)) > 0,$ then we uniquely define $p(\star), a(\star) \in [-1,1]$ through
\begin{equation}
\label{equation2: definition of p infty and a infty}
\frac{\prod_{m=0}^{n_\star  - 1} \left(1 + \zeta(\star, m)\right)}{\prod_{m=0}^{n_\star  - 1}  \left(1 - \zeta(\star, m)\right)}
= \left(\frac{1 + \zeta(\star)}{1 - \zeta(\star)} \right)^{2}, \qquad \zeta = p, a.
\end{equation}
Then $\pm 1 \notin \sigma(\star)$ if and only if $p(\star) \neq \pm a(\star).$ 

\item The sets $I_1(\star), I_2(\star)$ are singleton sets if and only if $\{p(\star, 0),p(\star, 1),a(\star, 0),a(\star, 1)\}$ contains either $-1$ or $+1.$ In this case, each $I_j(\star)$ is given explicitly by
\[
I_j(\star) = 
\left\{d(\star) + (-1)^j\sqrt{d(\star)^2 + \frac{2 - (1 + p(\star,0) p(\star,1))(1 + a(\star,0) a(\star,1))}{2}} \right\}.
\]
\end{enumerate}
\end{theorem}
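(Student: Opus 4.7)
The plan is to exploit the fact that, for $n_\star = 2,$ each $\hat{R}_\pm(\star , z)$ defined by \cref{equation: R hat z} is a $2 \times 2$ Hermitian matrix, so its spectrum can be written out in closed form. The standard eigenvalue formula gives
\[
\sigma\bigl(\hat{R}_\pm(\star,z)\bigr)
= \left\{d(\star) \pm \tfrac{1}{2}\sqrt{\Delta_\pm(\star, z)}\right\},
\quad
\Delta_\pm(\star, z) := \left(\tfrac{r_{0,\pm}(\star,0)-r_{0,\pm}(\star,1)}{2}\right)^2 + \bigl|r_{1,\pm}(\star,0) + r_{1,\pm}(\star,1) z^*\bigr|^2,
\]
and an immediate expansion using the cyclic convention $p(\star, 2) = p(\star, 0),$ $a(\star, 2) = a(\star, 0)$ shows that the trace term is sign-independent, $r_{0,\pm}(\star,0) + r_{0,\pm}(\star,1) = (p(\star, 0)+p(\star, 1))(a(\star, 0)+a(\star, 1)) = 4 d(\star).$

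\textbf{Pointwise equality of the two spectra and the sweep over $\T.$} I would next verify that $\Delta_+(\star, z) = \Delta_-(\star, z)$ for every $z \in \T,$ which immediately yields $\sigma(\hat{R}_+(\star, z)) = \sigma(\hat{R}_-(\star, z))$ pointwise. Writing $z = e^{it}$ and introducing the shorthand $D_\pm := \bigl(\tfrac{r_{0,\pm}(\star, 0) - r_{0,\pm}(\star, 1)}{2}\bigr)^2$ and $S_\pm := r_{1,\pm}(\star, 0)^2 + r_{1,\pm}(\star, 1)^2,$ one checks directly from the definitions that the two products coincide,
\[
r_{1,+}(\star, 0) r_{1,+}(\star, 1) = r_{1,-}(\star, 0) r_{1,-}(\star, 1) = q(\star, 0) q(\star, 1) b(\star, 0) b(\star, 1),
\]
while elementary bookkeeping gives the key cancellation $D_+ - D_- = 2(p(\star, 0) - p(\star, 1))(a(\star, 0)^2 - a(\star, 1)^2) = -(S_+ - S_-).$ Hence $E := D_+ + S_+ = D_- + S_-$ is a common sign-independent quantity, and $\Delta_\pm(\star, e^{it}) = E + 2 q(\star, 0) q(\star, 1) b(\star, 0) b(\star, 1) \cos t.$ Letting $t$ range over $[0, 2\pi]$ forces $\cos t$ to sweep $[-1, 1],$ so the eigenvalues sweep the two intervals $d(\star) \pm \bigl[\tfrac{1}{2}\sqrt{E - 2 q_0 q_1 b_0 b_1},\, \tfrac{1}{2}\sqrt{E + 2 q_0 q_1 b_0 b_1}\bigr].$

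\textbf{Main algebraic identity.} The step I expect to be the main obstacle is matching these endpoints with those in \cref{equation: definition of Ij}. Since $d_2(\star) - d_1(\star) = q(\star, 0) q(\star, 1) b(\star, 0) b(\star, 1)$ is immediate from the definitions, the problem collapses to the single polynomial identity
\[
E = 4 d(\star)^2 + 2(d_1(\star) + d_2(\star)) = \tfrac{1}{4}(p(\star, 0)+p(\star, 1))^2(a(\star, 0)+a(\star, 1))^2 + 4 - 2(1+p(\star, 0) p(\star, 1))(1+a(\star, 0) a(\star, 1)),
\]
which one confirms by expanding $D_+$ and $S_+$ and collecting terms; the computation is tedious but entirely mechanical.

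\textbf{The four auxiliary assertions.} For (i): $d_1(\star) \leq d_2(\star)$ because $q_0 q_1 b_0 b_1 \geq 0,$ and $d(\star)^2 + d_1(\star) = \min_{z \in \T} \Delta_\pm(\star, z)/4 \geq 0$ because $\Delta_\pm(\star, z)$ is the squared spread about the mean eigenvalue of a Hermitian matrix; the two intervals are separated since $I_1(\star) \subseteq (-\infty, d(\star)]$ and $I_2(\star) \subseteq [d(\star), +\infty).$ For (ii) and (iii), I would apply \cref{theorem: index formula for periodic case} to the auxiliary \emph{fully} periodic walk obtained by extending the finite data $(p(\star, m), a(\star, m))_{m=0,1}$ periodically to all of $\Z$: for this walk both asymptotic values coincide with the data at $\star,$ so both Fredholm conditions in \cref{theorem: index formula for periodic case}(i) become the single condition \cref{equation2: Fredholm characterisation} and \cref{theorem: index formula for periodic case}(ii) yields the relation \cref{equation2: definition of p infty and a infty} together with the equivalence $\pm 1 \notin \ess(U_{\mathrm{per}}) \iff p(\star) \neq \pm a(\star);$ meanwhile \cref{theorem: spectrum formula for periodic case} forces $\ess(U_{\mathrm{per}}) = \sigma(\star) \cup \sigma(\star) = \sigma(\star),$ so both claims transfer. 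Finally, (iv) follows immediately from the formulas for $d_j(\star)$: $I_1(\star) = I_2(\star)$ iff $d_2(\star) - d_1(\star) = q(\star, 0) q(\star, 1) b(\star, 0) b(\star, 1) = 0,$ which happens iff some $p(\star, m)$ or $a(\star, m)$ equals $\pm 1,$ in which case plugging $q_0 q_1 b_0 b_1 = 0$ into the endpoint formula yields the claimed singleton.
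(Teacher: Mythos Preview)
Your core computation — diagonalising the $2\times 2$ Hermitian matrices $\hat R_\pm(\star,z)$, showing the discriminant is sign–independent, and matching the resulting band endpoints with $d(\star)^2+d_j(\star)$ — is exactly what the paper does, only the paper packages the eigenvalue formula into a separate preparatory lemma. Your observation that $\Delta_+(\star,z)=\Delta_-(\star,z)$ for \emph{every} $z$ is in fact slightly sharper than what the paper states (equality of the two unions), but it comes out of the same algebra.

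Where you genuinely depart from the paper is in (ii)--(iii). The paper argues directly: it first proves $I_1(\star)\cup I_2(\star)\subseteq[-1,1]$ via a somewhat delicate algebraic identity (writing $s_\pm:=((1+p_0p_1)(1+a_0a_1)\mp(p_0+p_1)(a_0+a_1))^2$ and $s'_\pm:=((1+p_0p_1)(a_0+a_1)\mp(1+a_0a_1)(p_0+p_1))^2$ and checking $s_\pm-s'_\pm=q_0^2q_1^2b_0^2b_1^2$), and then reads off that $\pm1$ is an endpoint iff $s'_\pm=0$, which it identifies with the product condition. Your route — build the fully periodic auxiliary walk, invoke \cref{theorem: spectrum formula for periodic case} to get $\ess(U_{\mathrm{per}})=\sigma(\star)$, and then import the equivalences from \cref{theorem: index formula for periodic case} — is correct and more conceptual; it avoids the ad hoc polynomial identity entirely. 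The trade-off is that the paper's algebraic approach is self-contained and yields, as a by-product, the containment $I_j(\star)\subseteq[-1,1]$ together with an explicit criterion for when the endpoints are attained, while your argument appeals back to the two main theorems.

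One small omission: the theorem asserts that each $I_j(\star)$ is a subinterval of $[-1,1]$, and you do not address this. The paper handles it by the $s_\pm,s'_\pm$ computation just mentioned. In your framework the cleanest patch is to note that for the periodic auxiliary walk one has $I_1(\star)\cup I_2(\star)=\bigcup_z\sigma(\hat R_\pm(\star,z))=\ess(R_{\epsilon_j})\subseteq\sigma(R_{\epsilon_j})\subseteq[-1,1]$, the last inclusion because $R_{\epsilon_j}$ is a diagonal block of the real part of a unitary (cf.\ \cref{equation: first wada decomposition}). Also, in (iv) you wrote ``$I_1(\star)=I_2(\star)$'' where you mean ``each $I_j(\star)$ is a singleton''; the criterion $d_1(\star)=d_2(\star)$ and the rest of your argument are correct.
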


We show first that \cref{proposition: essential spectrum of ssqw with 1 period} is a special case of \cref{theorem: essential spectrum of ssqw with 2 period}.
\begin{remark}
\label{remark: two-phase assumption}
With the notation introduced in \cref{theorem: spectrum formula for periodic case} in mind, let $n_\star = 2.$ If $p(\star, 0) = p(\star, 1)$ and if $a(\star, 0) = a(\star, 1),$ then
\begin{align*}
d(\star) &= \frac{(p(\star,0) + p(\star,1))(a(\star,0) + a(\star,1))}{4} = p(\star,0)a(\star,0), \\
d_j(\star) &= \frac{2 - (1 + p(\star,0)^2)(1 + a(\star,0)^2) +(-1)^{j} (1 - p(\star,0)^2) (1 - a(\star,0)^2)}{2}.
\end{align*}
It follows that $d(\star)^2 + d_1(\star) = 0,$ and that $d(\star)^2 + d_2(\star) = (1 - p(\star,0)^2)(1 - a(\star,0)^2).$ 
\begin{align*}
I_1(\star) &= \left[p(\star,0)a(\star,0) - q(\star,0)b(\star,0), p(\star,0)a(\star,0) \right], \\
I_2(\star) &= \left[p(\star,0)a(\star,0), p(\star,0)a(\star,0) + q(\star,0)b(\star,0) \right].
\end{align*}
Therefore, $I_1(\star) \cup I_2(\star) = \left[p(\star,0)a(\star,0) - q(\star,0)b(\star,0), p(\star,0)a(\star,0) + q(\star,0)b(\star,0) \right]$ coincides with $I(\star)$ given by \cref{equation: spectrum of Rhat with period 1}. It follows from \cref{theorem: essential spectrum of ssqw with 2 period}(ii),(iii) that $\pm 1 \notin \sigma(\star)$ if and only if $p(\star) \neq \pm a(\star).$
\end{remark}

We prove \cref{theorem: essential spectrum of ssqw with 2 period} with the aid of the following lemma;
\begin{lemma}
\label{lemma: continuous matrix-valued function}
Given $\alpha_1, \alpha_2 \in \R$ and $\beta_1, \beta_2 \geq 0,$ let us consider the one-parameter family $\{R(z)\}_{z \in \T}$ of $2 \times 2$ Hermitian matrices defined by the following formula;
\begin{equation}
\label{equation: definition of Rt}
R(z) := 
\frac{1}{2}
\begin{pmatrix}
\alpha_1			       & \beta_1 + \beta_2 z^* \\
\beta_1 + \beta_2 z   & \alpha_2
\end{pmatrix}, \qquad z \in \T.
\end{equation}
For each $j=1,2,$ let
\[
I_j := 
\frac{\alpha_1 + \alpha_2}{4} + (-1)^j
\left[ \frac{\sqrt{(\alpha_1 - \alpha_2)^2 + 4(\beta_1 - \beta_2)^2}}{4},  \frac{\sqrt{(\alpha_1 - \alpha_2)^2 + 4(\beta_1 + \beta_2)^2}}{4}\right].
\]
Then the following assertions hold true:
\begin{enumerate}[(i)]
\item We have $\bigcup_{z \in \T} \sigma(R(z)) = I_1 \cup I_2,$ where $I_1$ lies to the left of $I_2.$
\item The set $I_1 \cup I_2$ is connected if and only if $\alpha_1 = \alpha_2$ and $\beta_1 = \beta_2.$ In this case, we have
$
I_1 \cup I_2 = [\alpha_1/2 - \beta_1, \alpha_1/2 + \beta_1].
$
\end{enumerate}
\end{lemma}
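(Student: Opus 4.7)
The plan is to compute the spectrum of each matrix $R(z)$ explicitly via the characteristic polynomial and then parametrise the union by a single continuous real variable. Writing $z = e^{it}$ for $t \in [0, 2\pi]$, the two eigenvalues of the Hermitian matrix $R(e^{it})$ are
\[
\lambda_\pm(t) = \frac{\alpha_1 + \alpha_2}{4} \pm \frac{1}{4}\sqrt{(\alpha_1 - \alpha_2)^2 + 4 |\beta_1 + \beta_2 e^{-it}|^2}.
\]
A short computation shows that $|\beta_1 + \beta_2 e^{-it}|^2 = \beta_1^2 + \beta_2^2 + 2\beta_1 \beta_2 \cos t$, which, as $t$ ranges over $[0, 2\pi]$, sweeps continuously over the compact interval $[(\beta_1 - \beta_2)^2, (\beta_1 + \beta_2)^2]$ (using $\beta_1, \beta_2 \geq 0$).

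For assertion (i), the intermediate value theorem applied to the continuous function $t \longmapsto \lambda_+(t)$ (resp.\ $t \longmapsto \lambda_-(t)$) shows that its image is precisely the interval $I_2$ (resp.\ $I_1$). Hence $\bigcup_{z \in \T} \sigma(R(z)) = I_1 \cup I_2$. The fact that $I_1$ lies to the left of $I_2$ is immediate from the $\pm$ in $\lambda_\pm(t)$ together with the non-negativity of the square root, since the right endpoint of $I_1$ equals $(\alpha_1+\alpha_2)/4 - \sqrt{(\alpha_1-\alpha_2)^2 + 4(\beta_1-\beta_2)^2}/4$ and the left endpoint of $I_2$ equals $(\alpha_1+\alpha_2)/4 + \sqrt{(\alpha_1-\alpha_2)^2 + 4(\beta_1-\beta_2)^2}/4$.

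For assertion (ii), connectedness of $I_1 \cup I_2$ is equivalent to the right endpoint of $I_1$ meeting the left endpoint of $I_2$, which is equivalent to
\[
\sqrt{(\alpha_1 - \alpha_2)^2 + 4 (\beta_1 - \beta_2)^2} = 0.
\]
Since both terms under the square root are non-negative, this is equivalent to $\alpha_1 = \alpha_2$ and $\beta_1 = \beta_2$. In this case, $(\alpha_1-\alpha_2)^2 + 4(\beta_1+\beta_2)^2 = 16 \beta_1^2$, so the resulting interval collapses to
\[
I_1 \cup I_2 = \left[\frac{\alpha_1}{2} - \beta_1, \frac{\alpha_1}{2} + \beta_1 \right],
\]
as claimed. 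None of these steps looks particularly obstructive; the only mild subtlety is being careful with the convention $r_0 - [r_1, r_2] = [r_0 - r_2, r_0 - r_1]$ when unpacking the definition of $I_1$ and verifying the orientation of the endpoints.
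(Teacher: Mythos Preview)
Your proof is correct and follows essentially the same approach as the paper's own proof: both compute the eigenvalues of $R(e^{it})$ explicitly via the characteristic polynomial, observe that $|\beta_1+\beta_2 e^{it}|^2 = \beta_1^2+\beta_2^2+2\beta_1\beta_2\cos t$ ranges over $[(\beta_1-\beta_2)^2,(\beta_1+\beta_2)^2]$, and then read off the intervals $I_1,I_2$ and the gap condition directly.
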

Note that $I_1, I_2$ are well-defined, since $(\beta_1 - \beta_2)^2 \leq (\beta_1 + \beta_2)^2$ follows from $(\beta_1 \pm \beta_2)^2 = \beta_1^2 \pm2 \beta_1 \beta_2 + \beta_2^2.$ 
\begin{proof}
We shall identify the unit-circle $\T$ with $[0,1]$ through $[0,1] \ni t \longmapsto e^{it} \in \T.$

(i) We have
\begin{align*}
2 \cdot \tr R(t) &= \alpha_1 + \alpha_2, \\
4 \cdot \det R(t) &= \alpha_1 \alpha_2 - |\beta_1 + \beta_2 e^{it}|^2 = \alpha_1 \alpha_2 - (\beta_1^2 + \beta_2^2) - 2 \beta_1 \beta_2 \cos t.
\end{align*}
The eigenvalues of $R(t)$ are given by
\begin{align*}
\lambda_j(t)
&= \frac{\alpha_1 + \alpha_2 + (-1)^j \sqrt{(\alpha_1 - \alpha_2)^2 + 4(\beta_1^2 + 2 \beta_1 \beta_2 \cos t + \beta_2^2)}}{4},
\end{align*}
where $j=1,2.$ We get
\[
\bigcup_{t \in [0,2\pi]} \sigma(R(t)) = \bigcup_{t \in [0,2\pi]} \{\lambda_1(t)\} \cup \bigcup_{t \in [0,2\pi]} \{\lambda_2(t)\}. 
\]
The range of $[0,2\pi] \ni t \longmapsto \sqrt{(\alpha_1 - \alpha_2)^2 + 4(\beta_1^2 + 2 \beta_1 \beta_2 \cos t + \beta_2^2)} \in \R$ is
\[
\left[\sqrt{(\alpha_1 - \alpha_2)^2 + 4(\beta_1 - \beta_2)^2}, \sqrt{(\alpha_1 - \alpha_2)^2 + 4(\beta_1 + \beta_2)^2} \right],
\]
where $(\beta_1 - \beta_2)^2 \leq (\beta_1 + \beta_2)^2.$ Therefore, $\bigcup_{t \in [0,2\pi]} \{\lambda_j(t)\} = I_j$ for each $j=1,2.$ Note also that $I_1$ is located to the left of $I_2;$
\[
\Delta(I_1, I_2) := \min I_2 -  \max I_1 = \sqrt{(\alpha_1 - \alpha_2)^2 + 4(\beta_1 - \beta_2)^2} \geq 0.
\]

(ii) The gap $\Delta(I_1, I_2)$ becomes $0$ if and only if $\alpha_1 - \alpha_2 = 0 = \beta_1 - \beta_2.$ In this case, 
\[
I_1 \cup I_2 
= [\alpha_1/2 -  \beta_1, \alpha_1/2 + \beta_1].
\]

\end{proof}

\begin{proof}[Proof of \cref{theorem: essential spectrum of ssqw with 2 period}]
Since $n_\star = 2,$ it follows from \cref{equation: R hat z} that
\begin{equation}
\label{equation: special case of R hat z}
\hat{R}_\pm(\star, z) =
\frac{1}{2} 
\begin{pmatrix}
r_{0, \pm}(\star , 0) & r_{1, \pm}(\star , 0) +  r_{1, \pm}(\star , 1)z^*  \\
r_{1, \pm}(\star , 0) + r_{1, \pm}(\star , 1)z & r_{0, \pm}(\star , 1)
\end{pmatrix}.
\end{equation}
Let us first prove that $\bigcup_{z \in \T} \sigma\left(\hat{R}_\pm(\star ,z)\right)$ does not depend on the choice of $\pm$ in order to show \cref{equation: spectrum of Rhat with period 2}. For each $\zeta = p, q,a,b,$ and each $m = 0,1,$ we write $\zeta_m := \zeta(\star,m)$ for simplicity from here on. With this convention in mind, we let
\begin{align*}
r_{0, \pm}(\star , 0) &= (p_0 \pm 1) a_0 + (p_0 \mp 1) a_1 = :\alpha_{1, \pm} , \\
r_{0, \pm}(\star , 1) &= (p_1 \pm 1) a_1 + (p_1 \mp 1) a_0 = :\alpha_{2, \pm}, \\
r_{1, \pm}(\star , 0) &= \sqrt{(1 \mp p_0)(1 \pm p_1)}b_1 =: \beta_{1, \pm}, \\
r_{1, \pm}(\star , 1) &= \sqrt{(1 \pm p_0)(1 \mp p_1)}b_0 =: \beta_{2, \pm},
\end{align*}
where $\alpha_{1, \pm}, \alpha_{2, \pm} \in \R,$ and where $\beta_{1, \pm}, \beta_{2, \pm} \geq 0.$ It follows that \cref{equation: special case of R hat z} is a special case of \cref{equation: definition of Rt}. We shall make use of the following equalities in order to apply \cref{lemma: continuous matrix-valued function}(i) to $\hat{R}_\pm(\star, z):$
\begin{align*}
\alpha_{1, \pm} + \alpha_{2, \pm} &= (p_0 + p_1)(a_0 + a_1), \\
(\alpha_{1, \pm} - \alpha_{2, \pm})^2
&= (p_0 - p_1)^2(a_0 + a_1)^2  + 4(a_0 - a_1)^2 \pm 4(p_0 - p_1)(a_0^2 - a_1^2), \\
(\beta_{1, \pm} +(-1)^j \beta_{2, \pm})^2
& = (1 - p_0p_1)(2 - a_0^2 - a_1^2)  + 2(-1)^jq_0q_1b_0b_1 \mp (p_0 - p_1)(a_0^2 - a_1^2),
\end{align*}
where $j=1,2,$ and where we use $\beta_{1, \pm}^2 + \beta_{2, \pm}^2 = (1 - p_0p_1)(2 - a_0^2 - a_1^2) \mp (p_0 - p_1)(a_0^2 - a_1^2)$ in the last equality. It follows that $d'_{j} := (\alpha_{1, \pm} - \alpha_{2, \pm})^2 + 4(\beta_{1, \pm} +(-1)^j \beta_{2, \pm})^2 \geq 0$ does not depend on the choice of $\pm$ for each $j=1,2.$ Moreover,
\begin{align*}
d'_{j} 
&= (p_0 - p_1)^2(a_0 + a_1)^2  + 4(a_0 - a_1)^2 + 4(1 - p_0p_1)(2 - a_0^2 - a_1^2) + 8(-1)^j q_0q_1b_0b_1 \\
&=(p_0 + p_1)^2(a_0 + a_1)^2 +8(2 - (1+ p_0 p_1)(1+ a_0 a_1) + (-1)^{j} q_0q_1b_0b_1)  \\
&= 16(d(\star)^2 + d_{j}(\star)),
\end{align*}
where the second equality follows from $(p_0 - p_1)^2 = (p_0 + p_1)^2 - 4p_0p_1.$ If we define $I_j(\star)$ according to \cref{equation: definition of Ij}, then it follows from \cref{lemma: continuous matrix-valued function}(i) that 
\[
\bigcup_{z \in \T} \sigma\left(\hat{R}_+(\star ,z)\right) = \bigcup_{z \in \T} \sigma\left(\hat{R}_-(\star ,z)\right) = I_1(\star) \cup I_2(\star).
\]
Note that \cref{equation: spectrum of Rhat with period 2} follows from \cref{theorem: spectrum formula for periodic case}.  

(i) It is obvious that $0 \leq d(\star)^2 + d_1(\star) \leq d(\star)^2 + d_2(\star),$ and that $I_1(\star)$ lies to the left of $I_2(\star).$ It remains to show $I_1(\star) \cup I_2(\star) \subseteq [-1,1].$ Let
\[
d_\pm(\star) := d(\star) \pm \sqrt{d(\star)^2 + d_2(\star)},
\]
where $d_-(\star)$ (resp. $d_+(\star)$) is the minimum (resp. maximum) of $I_1(\star) \cup I_2(\star).$ Let the notation $\leqq$ simultaneously denote $\leq$ and $=.$ We are required to prove that the following equivalent conditions hold true with $|d(\star)| \leq 1$ in mind;
\begin{equation}
\label{equation2: essential spectrum of ssqw with 2 period}
\pm d_\pm(\star) \leqq 1 \mbox{ if and only if } 0 \leqq (1 + p_0p_1)(1 + a_0a_1) \mp (p_0 + p_1)(a_0 + a_1) - q_0q_1b_0b_1,
\end{equation}
where $ (1 + p_0p_1)(1 + a_0a_1) \mp (p_0 + p_1)(a_0 + a_1) \geq 0.$ Indeed,
\begin{align*}
(1 + p_0p_1)&(1 + a_0a_1) \mp (p_0 + p_1)(a_0 + a_1) \\
&=
\begin{cases}
0, & p_0p_1 = -1 \mbox{ or }  a_0a_1 = -1, \\
(1 + p_0p_1)(1 + a_0a_1)\left(1 \mp \frac{p_0 + p_1}{1 + p_0p_1}\frac{a_0 + a_1}{1 + a_0a_1}\right), & \mbox{otherwise.}
\end{cases}
\end{align*}
It remains to prove
\begin{equation}
\label{equation1: essential spectrum of ssqw with 2 period}
0 \leqq ((1 + p_0p_1)(1 + a_0a_1) \mp (p_0 + p_1)(a_0 + a_1))^2 - (1 - p_0^2) (1 - p_1^2)(1 - a_0^2)(1 - a_1^2).
\end{equation}
Let
\begin{align*}
s_\pm &:= \left((1 + p_0p_1)(1 + a_0a_1) \mp (p_0 + p_1)(a_0 + a_1) \right)^2, \\
s'_\pm &:=((1 + p_0p_1)(a_0 + a_1) \mp (1 + a_0a_1)(p_0 + p_1))^2.
\end{align*}
We show that the right hand side of \cref{equation1: essential spectrum of ssqw with 2 period} is $s'_\pm \geq 0;$
\begin{align*}
s_\pm - s'_\pm 
&= (1 + p_0p_1)^2((1 + a_0a_1)^2 - (a_0 + a_1)^2) - (p_0 + p_1)^2((1 + a_0a_1)^2 - (a_0 + a_1)^2)  \\
&= ((1 + p_0p_1)^2 - (p_0 + p_1)^2) ((1 + a_0a_1)^2 - (a_0 + a_1)^2)  \\
&= (1 - p_0^2) (1 - p_1^2)(1 - a_0^2)(1 - a_1^2).
\end{align*}
It follows that \cref{equation1: essential spectrum of ssqw with 2 period} holds true. We get $-1 \leq d_-(\star) \leq d_+(\star) \leq 1$ by \cref{equation2: essential spectrum of ssqw with 2 period}, and so $I_1 \cup I_2 \subseteq [-1,1].$

(ii) It follows from (i) that
\begin{equation}
\label{equation3: essential spectrum of ssqw with 2 period}
d_\pm(\star) = \pm 1 \mbox{ if and only if } s'_\pm = 0.
\end{equation}
It follows from a direct computation that 
\[
\prod_{m=0,1} (1 + p_m) (1 \mp a_m) - \prod_{m=0,1}  (1 - p_m)(1 \pm a_m)
= \mp2 ((1 + p_0p_1)(a_0 + a_1) \mp (1 + a_0a_1)(p_0 + p_1)).
\]
Therefore, $\pm 1 \notin \sigma(\star)$ if and only if $\prod_{m=0,1} (1 + p_m) (1 \mp a_m) \neq \prod_{m=0,1}  (1 - p_m)(1 \pm a_m).$

(iii) If $\prod_{m=0,1} (1 + p_m) (1 \mp a_m) + \prod_{m=0,1}  (1 - p_m)(1 \pm a_m) > 0,$ then we define $p(\star),a(\star) \in [-1,1]$ through \cref{equation2: definition of p infty and a infty}. Note that \cref{equation2: Fredholm characterisation} is equivalent to $p(\star) \mp a(\star) \neq 0$ as in the proof of \cref{theorem: index formula for periodic case}(ii). The claim follows from (ii).

(iv) Note that $I_j(\star)$ given by \cref{equation: definition of Ij} is a singleton set if and only if $d_1(\star) = d_2(\star)$ if and only if $\prod_{m=0,1} q(\star,m)b(\star,m) = 0.$ The claim follows.
\end{proof}

\subsubsection{The general case}

It is desirable to give a complete classification of $\sigma(\star)$ in full generality. The special cases $n_\star = 1, 2$ we have considered in this subsection are intended as motivating examples for this general approach. It is worth noting that the proof of \cref{theorem: essential spectrum of ssqw with 2 period} is already far from obvious. The general case $n_\star \geq 3$ naturally leads to spectral analysis of Hermitian matrices of the form \cref{equation: almost tridiagonal matrix}, but it is not known to the authors whether or not there is a general standard method for this.

\subsection{Exponential decay}

Note that \cref{equation: GW indices for ssqw with periodic parameters} can also be written as \cref{equation2: GW indices for ssqw with periodic parameters}. The following result is also one of the main theorems of the present article;

\begin{theorem}
\label{theorem: index formula with exponential decay for periodic case}
Let $(\varGamma, U) = (\varGamma_{\textnormal{suz}} ,U_{\textnormal{suz}})$ be defined by \crefrange{equation: suzuki split-step quantum walk}{equation: definition of Suzuki evolution operator}. Suppose that there exist $n_{-\infty}, n_{+\infty} \in \N$ with the property that the limits of the form \cref{equation: asymptotically periodic limits} exist for each $\star = \pm \infty,$ and that 
\begin{equation}
\label{equation: supremum assumption}
\sup_{x \in \Z} |\zeta(x)| < 1, \qquad \zeta \in \{p, a\}, \, n_0 \in \{0, \dots, n_\star -1\}.
\end{equation}
Let the four numbers $p(\pm \infty), a(\pm \infty) \in (-1,1)$ be uniquely defined through \cref{equation: definition of p infty}, and let $p(\pm \infty) \neq \pm a(\pm \infty).$ Then the following assertions hold true:
\begin{enumerate}[(i)]
\item We have $\dim \ker(U \mp 1) = |\ind_\pm(\varGamma,U)|,$ where $\ind_\pm(\varGamma,U)$ is given by \cref{equation2: GW indices for ssqw with periodic parameters}.

\item If $(-1)^j(p(-\infty) \mp a(-\infty)) < 0 < (-1)^j(p(+\infty) \mp a(+\infty))$ for some $j=1,2,$ then for any non-zero vector $\Psi_\pm \in \ker(U \mp 1)$ there exists a unique non-zero vector $\psi_\pm \in \ker\left(L + \sqrt{ \Lambda((-1)^j p)\Lambda(\mp (-1)^j a)}\right),$ such that
\begin{equation}
\label{equation: eigenstate for the anosotropic case}
\Psi_\pm =
\begin{pmatrix}
\mp (-1)^j \sqrt{\Lambda(\mp (-1)^j a )} \psi_\pm \\
\psi_\pm
\end{pmatrix}.
\end{equation}
Moreover, the eigenstate $\Psi_\pm$ characterised by \cref{equation: eigenstate for the anosotropic case} exhibits exponential decay in the sense that there exist positive constants $c^\downarrow_\pm, c^\uparrow_\pm, \kappa^\downarrow_\pm, \kappa^\uparrow_\pm, x_\pm,$ such that
\begin{equation}
\label{equation: exponential decay in the anisotropic case}
\kappa^\downarrow_\pm e^{- c^\downarrow_\pm |x|} \leq \|\Psi_\pm(x)\|^2 \leq \kappa^\uparrow_\pm e^{- c^\uparrow_\pm |x|}, \qquad |x| \geq x_\pm.
\end{equation}
\end{enumerate}
\end{theorem}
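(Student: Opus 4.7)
The plan is to parallel the argument for \cite[Theorem 1.1]{Matsuzawa-Seki-Tanaka-2021}, with the $\Lambda$-calculus developed in the proof of \cref{theorem: index formula for periodic case} replacing the pointwise asymptotic limits used there in the two-phase setting. The first step is to invoke \cref{lemma: wada transform} together with the Cedzich-Geib-Werner-Werner factorisation \cite[Lemma 3.2]{Cedzich-Geib-Werner-Werner-2021} in order to identify $\ker(U \mp 1)$, via the $\epsilon$-transform, with $\ker F_{1,\pm} \oplus \ker F_{1,\pm}^{*}$, where $F_{1,\pm}$ is the one-dimensional strictly local operator appearing in \cref{equation: block representation of F}. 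Under the supremum assumption \cref{equation: supremum assumption}, the multiplication operators $p_{\pm}, a_{\pm}$ are uniformly bounded below by a positive constant, so both $F_{1,\pm}\psi = 0$ and $F_{1,\pm}^{*}\psi = 0$ reduce to explicit first-order recurrences $\psi(x+1) = \mu(x)\psi(x)$ with positive multipliers expressible in terms of $p(x)$ and $a(x)$.

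For Part~(i), the product of $\mu$ over a single period at $\pm\infty$ can be rewritten via the $\Lambda$-identities \crefrange{equation2: properties of Lambda}{equation3: properties of Lambda} and the definition \cref{equation: definition of p infty} of $p(\pm\infty), a(\pm\infty)$. A ratio test then shows that $\ker F_{1,\pm}$ is non-trivial precisely when the sign pattern forces both the $+\infty$-cycle product and the $-\infty$-cycle product into the regimes compatible with $\ell^{2}$-summability; otherwise $\ker F_{1,\pm}^{*}$ is the non-trivial one. In either case exactly one of the two kernels is one-dimensional, giving $\dim \ker(U \mp 1) = \dim \ker F_{1,\pm} + \dim \ker F_{1,\pm}^{*} = |\ind F_{1,\pm}| = |\ind_{\pm}(\varGamma, U)|$. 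The explicit block form \cref{equation: eigenstate for the anosotropic case} in Part~(ii) is then obtained by transporting the unique $\ell^{2}$-kernel vector $\psi_{\pm}$ back through the canonical chiral decomposition \cref{equation: representation of R and Q}: since $R$ and $Q$ automatically commute for a unitary $U = R + iQ$ and since $R^{2} + Q^{2} = 1$, we have $\ker(U \mp 1) = \ker(R \mp 1) \cap \ker Q$, which, combined with the $\epsilon$-diagonalisation of $\varGamma$, pins down both components of $\Psi_{\pm}$ in terms of the half-step $\pm 1$-eigenvectors of the coin matrix $\varGamma'$. The corresponding operator equation $(L + \sqrt{\Lambda((-1)^{j}p)\Lambda(\mp(-1)^{j}a)})\psi_{\pm} = 0$ is the transcription of $F_{1,\pm}\psi_{\pm} = 0$ once the $\Lambda$-form of the multiplier is adopted.

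Finally, the two-sided exponential bound \cref{equation: exponential decay in the anisotropic case} follows from the strict inequalities assumed in~(ii). By continuity and asymptotic periodicity of the multiplier $c(x) = \sqrt{\Lambda((-1)^{j}p(x))\Lambda(\mp(-1)^{j}a(x))}$, there exist constants $0 < \gamma^{\downarrow} \leq \gamma^{\uparrow} < 1$ that bound the per-period product $\prod_{m=0}^{n_{+\infty}-1} c(N n_{+\infty} + m)$ uniformly for all sufficiently large $N$; iterating gives $(\gamma^{\downarrow})^{N} \leq |\psi_{\pm}(N n_{+\infty})/\psi_{\pm}(0)| \leq (\gamma^{\uparrow})^{N}$, hence geometric decay at $+\infty$, and a symmetric argument at $-\infty$ applied to the reciprocal recurrence (whose per-period rates then exceed $1$ strictly) handles the other side. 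The main obstacle is precisely this uniformity: the multiplier $c(x)$ only approaches its asymptotic $\Lambda$-product along subsequences indexed by residue classes modulo $n_{\pm\infty}$, so one must show that the per-period product converges irrespective of the starting phase and is eventually sandwiched between two strictly geometric rates. This requires the defining identity \cref{equation: definition of p infty} together with the strict monotonicity properties of $\Lambda$ recorded in \crefrange{equation2: properties of Lambda}{equation3: properties of Lambda}, applied uniformly over all $m \in \{0,\dots,n_{\pm\infty}-1\}$.
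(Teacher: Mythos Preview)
Your plan is essentially sound and will reproduce the result, but it differs from the paper's argument in its logical dependencies. The paper does \emph{not} go back to the half-step operators $F_{1,\pm}$ here; instead it invokes \cite[Theorem 3.1]{Matsuzawa-Seki-Tanaka-2021} as a black box. That theorem already supplies, in the generality of arbitrary parameters $p,a$ with $\sup|p|,\sup|a|<1$, (i) the identity $\dim\ker(U\mp 1)=|\ind_\pm(\varGamma,U)|$ together with the trichotomy \cref{equation: pm index for ssqw} in terms of the series $\Delta_{j,\pm}$, (ii) the explicit eigenstate form \cref{equation: eigenstate for the anosotropic case}, and (iii) the two-sided geometric bound once the relevant limits exist. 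The only genuinely new ingredient in the paper's proof is \cref{lemma: geometric mean lemma}, which establishes that for an asymptotically $n_0$-periodic positive sequence the geometric mean $\bigl(\prod_{m=0}^{x-1}\alpha(m)\bigr)^{1/x}$ converges to the $n_0$-th root of the per-period product; this makes the \emph{root} test applicable to $\Delta_{j,\pm}$ and identifies the limiting value with $\Lambda(p(\star))\Lambda(\mp a(\star))$ via \cref{equation: definition of p infty}.

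Your route---redoing the difference-equation analysis via $F_{1,\pm}$ and a ratio test on per-period blocks---is a legitimate alternative and in some sense more self-contained, since it avoids citing \cite{Matsuzawa-Seki-Tanaka-2021}. Two small corrections are in order. First, the commutator computation from the factorisation $\epsilon^* U\epsilon=\sigma_3 F^*\sigma_3 F$ actually gives $\ker(\epsilon^* U\epsilon\mp 1)=\ker F_{1,\pm}\oplus\ker F_{2,\pm}$, not $\ker F_{1,\pm}\oplus\ker F_{1,\pm}^*$; the two coincide dimensionally under your hypotheses but are not literally the same subspaces, and the distinction matters when you transport back to obtain \cref{equation: eigenstate for the anosotropic case}. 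Second, the eigenvector formula \cref{equation: eigenstate for the anosotropic case} is written in the \emph{standard} basis of $\ell^2(\Z,\C^2)$ and its coefficients involve only $a$, not $p$; this is because the relevant diagonalisation is that of $\varGamma'$ (via $\eta$), so your sketch should route through $\ker(U\mp 1)=\ker(\varGamma'\mp\varGamma)$ and the $\eta$-eigenbasis rather than solely through the $\epsilon$-transform.
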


\begin{remark}
\label{remark: supremum assumption}
\leavevmode
\begin{enumerate}[(i)]
\item Note that \cref{equation: supremum assumption} implies
\[
1 > \sup_{x \in \Z} |\zeta(x)| \geq \limsup_{x \to \infty} |\zeta(\pm x)|.
\]
It follows that $|\zeta(\star, n_0)| < 1$ for each $\zeta \in \{p, a\}$ and each $n_0 \in \{0, \dots, n_\star -1\}.$

\item It is shown in the proof of \cref{theorem: index formula with exponential decay for periodic case} below that the four positive constants $c^\downarrow_\pm, c^\uparrow_\pm, \kappa^\downarrow_\pm, \kappa^\uparrow_\pm$ in \cref{equation: exponential decay in the anisotropic case} can be expressed in terms of $p,a$ (see \crefrange{equation1: refined decay rates}{equation2: refined decay rates} for details).
\end{enumerate}
\end{remark}

We introduce the following lemma in order to prove \cref{theorem: index formula with exponential decay for periodic case}.
\begin{lemma}
\label{lemma: geometric mean lemma}
Let $(\alpha(x))_{x = 0}^\infty$ be a sequence of positive numbers, and let us assume that there exists a natural number $n_0 \in \N$ such that the following limits exist in $(0, \infty):$
\begin{equation}
\label{equation: assumption of periodic limits}
\alpha(+\infty, m) := \lim_{x \to \infty} \alpha(n_0 x + m), \qquad m \in \{0, \dots, n_0-1\}.
\end{equation}
Then $\left(\prod_{m=0}^{x-1} \alpha(m) \right)^{1/x} \to \left(\prod_{m=0}^{n_0-1} \alpha(+\infty, m) \right)^{1/n_0}$ as $x \to \infty.$
\end{lemma}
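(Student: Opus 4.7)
The plan is to reduce the multiplicative claim to an additive one by taking logarithms. Set $\beta(k) := \log \alpha(k)$, which is well-defined since each $\alpha(k) > 0$. The hypothesis \cref{equation: assumption of periodic limits} together with continuity of $\log$ on $(0,\infty)$ yields
\[
\beta(n_0 y + m) \longrightarrow \log \alpha(+\infty,m) \in \R, \qquad y \to \infty, \quad m \in \{0,\dots,n_0-1\}.
\]
Thus the desired conclusion is equivalent to the arithmetic-mean convergence
\[
\frac{1}{x}\sum_{k=0}^{x-1} \beta(k) \longrightarrow \frac{1}{n_0}\sum_{m=0}^{n_0-1} \log \alpha(+\infty,m), \qquad x \to \infty.
\]

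Next, I would record a boundedness observation: each of the finitely many subsequences $\{\beta(n_0 y + m)\}_{y \geq 0}$ converges in $\R$ and is therefore bounded; since these $n_0$ subsequences partition the tail of $\N$, the sequence $(\beta(k))$ is bounded for all sufficiently large $k$. This boundedness will let me control a short remainder term.

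The main step is a Cesàro-style rearrangement. Writing the Euclidean division $x = n_0 q + r$ with $0 \leq r < n_0$, I split
\[
\sum_{k=0}^{x-1} \beta(k) = \sum_{m=0}^{n_0 -1} \sum_{y=0}^{q-1} \beta(n_0 y + m) + \sum_{k=n_0 q}^{x-1} \beta(k).
\]
The trailing sum contains at most $n_0 - 1$ terms, each bounded by the above observation, so it contributes $O(1)$; after division by $x$ it vanishes. For the main double sum I would apply the classical Cesàro mean theorem to each of the $n_0$ convergent subsequences, giving
\[
\frac{1}{q}\sum_{y=0}^{q-1} \beta(n_0 y + m) \longrightarrow \log \alpha(+\infty,m), \qquad q \to \infty,
\]
for every $m$. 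Finally, since $x = n_0 q + r$ with $r$ bounded forces $q \to \infty$ and $n_0 q / x \to 1$ as $x \to \infty$, combining these yields the arithmetic-mean limit, and exponentiating recovers the statement of the lemma.

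There is no genuine obstacle here: this is a routine Cesàro argument. The only point that deserves mild care is verifying that $\beta$ is bounded on the tail (which is where the positivity of the limits $\alpha(+\infty,m) \in (0,\infty)$ is used) and that the remainder with $r$ terms, with $r$ depending on $x$ but uniformly bounded by $n_0 - 1$, really does contribute negligibly after division by $x$.
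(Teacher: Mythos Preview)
Your proof is correct and follows essentially the same route as the paper: split the index range into residue classes modulo $n_0$, apply the classical Ces\`aro (equivalently, geometric-mean) result to each convergent subsequence, and bound the short remainder of at most $n_0-1$ terms. The only cosmetic difference is that you pass to logarithms at the outset and work additively throughout, whereas the paper keeps the product form and takes a logarithm only to dispose of the tail factor; your presentation is marginally cleaner for this reason.
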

Note that the special case $n_0 = 1$ is nothing but the well-known result that the geometric mean of a convergent positive sequence converges to its limit. We shall make use of this result in what follows.
\begin{proof}
Let $m_0 \in \{1, \dots, n_0\}$ be fixed. If let $\beta(x) = \left(\prod_{m=0}^{x-1} \alpha(m) \right)^{1/x}$ for each $x \in \N,$ then 
\begin{align*}
\beta(n_0 x + m_0)
&= \left(\prod_{m=0}^{n_0 x - 1} \alpha(m) \prod_{m=0}^{m_0-1} \alpha(n_0 x + m) \right) \\
&= \left(\prod_{m=0}^{n_0x - 1} \alpha(m) \right) \left(\prod_{m=0}^{m_0 - 1} \alpha(n_0 x + m) \right) \\
&= \prod_{m=0}^{n_0-1} \left(\prod_{x_m=0}^{x-1} \alpha(x_m n_0 + m)  \right) \left(\prod_{m=0}^{m_0 - 1} \alpha(n_0 x + m) \right),
\end{align*}
where $\left(\prod_{m=0}^{m_0 - 1} \alpha(n_0 x + m) \right)_{x \in \N}$ converges to the positive number $\prod_{m=0}^{m_0 - 1} \alpha(+\infty, m).$ Moreover, we get as $x \to \infty$
\[
\log\left(\prod_{m=0}^{m_0 - 1} \alpha(n_0 x + m) \right)^{\frac{1}{n_0 x + m_0}}
= \frac{\log \prod_{m=0}^{m_0 - 1} \alpha(n_0 x + m)}{n_0 x + m_0} \to 0,
\]
where the last step follows from the fact that $\left(\log \prod_{m=0}^{m_0 - 1} \alpha(n_0 x + m) \right)_{x \in \N}$ is a bounded sequence. It follows from the continuity of the exponential function that 
\begin{equation}
\label{equation1: geometric mean lemma}
\left(\prod_{m=0}^{m_0 - 1} \alpha(n_0 x + m) \right)^{\frac{1}{n_0 x + m_0}} \to e^0 = 1.
\end{equation}
On the other hand, it follows that as $x \to \infty$
\begin{equation}
\label{equation2: geometric mean lemma}
\prod_{m=0}^{n_0-1} \prod_{x_m=0}^{x-1} \alpha(x_m n_0 + m)^{\frac{1}{x}}  
\to \prod_{m=0}^{n_0-1} \alpha(+\infty, m)^{\frac{1}{x_0}}.
\end{equation}
It follows from \crefrange{equation1: geometric mean lemma}{equation2: geometric mean lemma} as $x \to \infty$ we have
\begin{align*}
\beta(n_0 x + m_0)^{\frac{1}{n_0 x + m_0}} 
&=  \left(\prod_{m=0}^{n_0-1} \prod_{x_m=0}^{x-1} \alpha(x_m n_0 + m)^{1/x}  \right)^{\frac{x}{n_0 x + m_0}}  \left(\prod_{m=0}^{m_0 - 1} \alpha(n_0 x + m) \right)^{\frac{1}{n_0 x + m_0}} \\
& \to \left(\prod_{m=0}^{n_0-1} \alpha(+\infty, m)\right)^{\frac{1}{x_0}}.
\end{align*}
It follows that any subsequence of $(\beta(x)^{1/x})_{x \in \N}$ also converges to $\left(\prod_{m=0}^{n_0-1} \alpha(+\infty, m)\right)^{\frac{1}{x_0}},$ since the constant $m_0$ was chosen arbitrarily. The claim follows.
\end{proof}

\begin{proof}[Proof of \cref{theorem: index formula with exponential decay for periodic case}]
For each $j=1,2,$ we introduce the following notation according to \cite[Theorem 3.1]{Matsuzawa-Seki-Tanaka-2021};
\begin{align}
\label{equation: definition of delta jpm}
\delta_{j,\pm}(y) &:=  \sqrt{ \Lambda((-1)^j p(y))\Lambda(\mp (-1)^j a(y))},  \qquad y \in \Z, \\
\label{equation: definition of Delta jpm}
\Delta_{j, \pm} &:= \sum_{x =1}^\infty \left(\prod_{y=0}^{x-1} |\delta_{j,\pm}(-y-1)|^{-2}\right) + \sum_{x =1}^\infty \left(\prod_{y=0}^{x-1} |\delta_{j,\pm}(y)|^2\right).
\end{align}

(i) Note first that the two non-negative numbers $\Delta_{1, \pm}$ and $\Delta_{2, \pm}$ cannot be simultaneously finite, since $|\delta_{1,\pm}(y)\delta_{2,\pm}(y)|^2 = 1$ for each $y \in \Z.$ With this result in mind, \cite[Theorem 3.1]{Matsuzawa-Seki-Tanaka-2021}(i) gives the following classification;
\begin{align}
\label{equation: bulk-edge correspondence for ssqw}
|\ind_\pm(\varGamma,U)| &= \dim \ker(U \mp 1), \\
\label{equation: pm index for ssqw}
\ind_\pm(\varGamma,U) &= 
\begin{cases}
+1, & \Delta_{1, \pm} < \infty, \\
-1, & \Delta_{2, \pm} < \infty, \\
0,  & \Delta_{1, \pm} =  \Delta_{2, \pm} = \infty.
\end{cases}
\end{align}
We are required to show that \cref{equation: pm index for ssqw} agrees with \cref{equation2: GW indices for ssqw with periodic parameters} by making use of the root test. Since the function $\Lambda$ is continuous, for each $\zeta = -p,+p,-a,+a$ and each $\star = \pm \infty,$ the following numbers belong to $(0,\infty);$
\[
\Lambda(\zeta(\star, y)) = \lim_{x \to \star} \Lambda(\zeta(n_\star \cdot x  + y)), \qquad y \in \{0, \dots, n_\star - 1\},
\]
where $|\zeta(\star, n_0)| < 1.$ It follows from \cref{lemma: geometric mean lemma} that as $x \to \infty$
\begin{align*}
\lim_{x \to \infty} \left(\prod_{y=0}^{x-1} \Lambda(\zeta(-y-1))\right)^{\frac{1}{x}} &= \left(\prod_{y=0}^{n_{-\infty}-1} \Lambda(\zeta(-\infty, y)) \right)^{\frac{1}{n_{-\infty}}}, \\
\lim_{x \to \infty} \left(\prod_{y=0}^{x-1} \Lambda(\zeta(y))\right)^{\frac{1}{x}} &= \left(\prod_{y=0}^{n_{+\infty}-1} \Lambda(\zeta(\infty, y)) \right)^{\frac{1}{n_{+\infty}}}.
\end{align*}

Since $|\delta_{j,\pm}(y)|^2 = \left(\Lambda(p(y))\Lambda(\mp a(y)) \right)^{(-1)^{j}}$ for each $y \in \Z,$ we have 
\begin{align*}
&
\lim_{x \to \infty} \left(\prod_{y=0}^{x-1} |\delta_{j,\pm}(-y-1)|^{-2}\right)^{1/x} 
= \left( \Lambda(p(-\infty))\Lambda(\mp a(-\infty)) \right)^{(-1)^{j+1}}, \\
&
\lim_{x \to \infty}  \left(\prod_{y=0}^{x-1} |\delta_{j,\pm}(y)|^{2}\right)^{1/x} 
= \left(\Lambda(p(+\infty))\Lambda(\mp a(+\infty)) \right)^{(-1)^{j}},
\end{align*}
where $\Lambda(p(\star)) \Lambda(\mp a(\star)) \neq 1$ for each $\star = \pm \infty,$ since we assume $p(\star) \mp a(\star) \neq 0.$ That is, the root test is applicable to each of the two infinite series on the right hand side of \cref{equation: definition of Delta jpm}, and we obtain the following equivalence for each $j=1,2;$
\begin{equation}
\label{equation: iff condition for Delta}
\Delta_{j, \pm} < \infty \mbox{ if and only if }
(-1)^j (p(+\infty) \mp a(+\infty)) < 0 < (-1)^j (p(-\infty) \mp a(-\infty)).
\end{equation}
It is now easy to see that \cref{equation: pm index for ssqw} becomes \cref{equation2: GW indices for ssqw with periodic parameters}.

(ii) Let $\Delta_{j, \pm} < \infty$ for some $j=1,2$ throughout. It follows from \cite[Theorem 3.1]{Matsuzawa-Seki-Tanaka-2021}(ii) that we have the following linear isomorphism;
\begin{equation}
\ker(L - \delta_{j,\pm}) \in \psi \longmapsto  
\begin{pmatrix}
\mp (-1)^j \sqrt{\Lambda(\mp (-1)^j a )} \psi \\
\psi
\end{pmatrix} \in \ker(U \mp 1),
\end{equation}
where $\dim \ker(U \mp 1) = 1.$ In other words, for any non-zero vector $\Psi_\pm \in \ker(U \mp 1)$ there exists a unique non-zero vector $\psi_\pm \in \ker\left(L + \sqrt{ \Lambda((-1)^j p)\Lambda(\mp (-1)^j a)}\right),$ such that $\Psi_\pm$ is given explicitly by \cref{equation: eigenstate for the anosotropic case}. Finally, we introduce the following positive constants to show that $\Psi_\pm$ exhibits exponential decay.
\begin{align}
\delta^{\downarrow}_{j, \pm} &:= \min\left\{\left( \Lambda(p(-\infty))\Lambda(\mp a(-\infty)) \right)^{(-1)^{j+1}},\left(\Lambda(p(+\infty))\Lambda(\mp a(+\infty)) \right)^{(-1)^{j}}\right\}, \\
\delta^{\uparrow}_{j, \pm} &:= \max\left\{\left( \Lambda(p(-\infty))\Lambda(\mp a(-\infty)) \right)^{(-1)^{j+1}},\left(\Lambda(p(+\infty))\Lambda(\mp a(+\infty)) \right)^{(-1)^{j}}\right\}, \\
\Lambda_{j, \pm}^\downarrow & := \inf_{x \in \Z} \Lambda(\mp (-1)^j a(x)) + 1, \\
\Lambda_{j, \pm}^\uparrow & := \sup_{x \in \Z} \Lambda(\mp (-1)^j a(x)) + 1.
\end{align}
Note that $0 < \delta^{\downarrow}_{j, \pm} \leq \delta^{\uparrow}_{j, \pm} < 1,$ where the first inequality follows from \cref{remark: supremum assumption}(i), and where the last inequality follows from \cref{equation: iff condition for Delta} with $\Delta_{j, \pm} < \infty.$ Let $\epsilon > 0$ be small enough, so that $0 < \delta^{\downarrow}_{j,\pm} - \epsilon  < \delta^{\uparrow}_{j,\pm} + \epsilon < 1$ holds true. It then follows from \cite[Theorem 3.1]{Matsuzawa-Seki-Tanaka-2021}(iii) that there exists $x_\pm \in \N,$ such that
\begin{equation}
\label{equation2: exponential decay}
\Lambda_{j, \pm}^\downarrow \left(\delta^{\downarrow} _{j, \pm}- \epsilon \right)^{|x|} \leq \frac{\|\Psi(x)\|^2}{|\psi(0)|^2} \leq  \Lambda_{j, \pm}^\uparrow \left(\delta^{\uparrow}_{j, \pm} + \epsilon\right)^{|x|}, \qquad |x| \geq x_\pm.
\end{equation}
We obtain \cref{equation: exponential decay in the anisotropic case}, if we let 
\begin{align}
\label{equation1: refined decay rates}
\kappa^\downarrow_{j, \pm} &:= |\psi(0)|^2\Lambda^\downarrow_{j, \pm}, & 
c^\downarrow_{j, \pm} &:= -\log\left(\delta^{\downarrow}_{j, \pm} - \epsilon\right), \\
\label{equation2: refined decay rates}
\kappa^\uparrow_{j, \pm} &:= |\psi(0)|^2\Lambda^\uparrow_{j, \pm}, &
c^\uparrow_{j, \pm} &:= -\log\left(\delta^{\uparrow}_{j, \pm} + \epsilon\right).
\end{align}

\begin{remark}
The proof of \cref{theorem: index formula with exponential decay for periodic case} above gives yet another derivation of the index formula \cref{equation2: GW indices for ssqw with periodic parameters} via \cref{equation: pm index for ssqw}. This latter derivation relies only on elementary analysis of first-order difference equations inspired by \cite{Fuda-Funakawa-Suzuki-2018}, while the former derivation outlined in \cref{section: proof of theorem B} makes extensive use of Toeplitz operators. Note, however, that despite its simplicity the latter method alone is insufficient to justify where the technical assumption $p(\pm \infty) \neq \pm a(\pm \infty)$ comes from. It is precisely the language of Toeplitz operators that allows us to establish the non-trivial equivalence of this assumption and the essential gap condition $\pm 1 \notin \ess(U)$ (see \cref{theorem: index formula for periodic case}(ii) for details).
\end{remark}

\end{proof}

\appendix 
\renewcommand{\thesection}{\Alph{section}} 


\newcommand{\etalchar}[1]{$^{#1}$}

\end{document}